\crefname{equation}{}{} % Default to eqref for equations
\crefname{section}{Sec.}{Sec.}
\newcommand{\vik}[3]{{#1}^{#2}_{#3}}        % variable, i, k
\newcommand{\xik}[2]{\vik{x}{#1}{#2}}
\newcommand{\uik}[2]{\vik{u}{#1}{#2}}
\newcommand{\Uall}{\mathcal{U}}
\newcommand{\xKFik}[2]{\hat{x}^{#1}_{#2}}
\newcommand{\PKFik}[2]{P^{#1}_{#2}}
\newcommand{\xKF}{\hat{x}}
\newcommand{\PKF}{P}
\newcommand{\PKFss}{\bar{P}}
\newcommand{\xPik}[2]{\check{x}^{#1}_{#2}}        % prediction / remote estimate
\newcommand{\xP}{\check{x}}
\newcommand{\ie}{i\/.\/e\/.,\/~}
\newcommand{\eg}{e\/.\/g\/.,\/~}
\newcommand{\cf}{cf\/.\/~}
\newcommand{\fig}{Fig\/.\/~}
\newcommand{\sect}{Sec\/.\/~}
\newcommand{\Lem}{Lemma~}
\newcommand{\Ex}{Example~}
\newcommand{\transp}{\text{T}}
\newtheorem{lemma}{Lemma}
\newtheorem{proposition}{Proposition}
\newtheorem{corollary}{Corollary}
\newtheorem{assumption}{Assumption}
\newtheorem{example}{Example}
\newtheorem{remark}{Remark}
\newcommand{\Yall}{\mathcal{Y}}
\newcommand{\Gamall}{\Gamma}
\newcommand{\eKF}{\hat{e}}
\newcommand{\eIaux}{e^{\mathrm{nc}}}
\newcommand{\eIhat}{\hat{e}^{\mathrm{nc}}}
\newcommand{\eIIhat}{\hat{e}^{\mathrm{c}}}
\newcommand{\PI}{P^{\mathrm{nc}}}
\newcommand{\PII}{P^{\mathrm{c}}}
\newcommand{\notrigk}{|_{\gamma_k=0}}
\newcommand{\trigk}{|_{\gamma_k=1}}
\newcommand{\notrigvar}[1]{|_{\gamma_{#1}=0}}
\newcommand{\trigvar}[1]{|_{\gamma_{#1}=1}}
\newcommand{\Vo}{V_\text{o}}
\newcommand{\Voi}[1]{V_{\text{o}}^{#1}}
\newcommand{\commC}{C} 	% communication cost (one-step)
\newcommand{\last}{\ell}			% last instant (in past), where triggered	
\newcommand{\lastel}{\kappa}		% time-index of last element =1 in \Gamall_k
\newcommand{\trigsig}{\bar{E}}
\newcommand{\trigsigM}{\bar{E}^\text{mean}}
\newcommand{\trigsigV}{\bar{E}^\text{var}}
\newcommand{\Cc}{\mathcal{C}}
\newcommand{\Dc}{\mathcal{D}}
\newcommand{\Ec}{\mathcal{E}}
\newcommand{\Nc}{\mathcal{N}}
\DeclareMathOperator*{\E}{\field{E}}
\DeclareMathOperator*{\Var}{Var}
\DeclareMathOperator*{\trace}{trace}
\DeclareMathOperator*{\diag}{diag}
\newcommand{\field}[1]{\mathbb{#1}}
\newcommand{\R}{\field{R}}
\newcommand{\N}{\field{N}}
\providecommand{\norm}[1]{\lVert#1\rVert}
\newcommand{\graph}[1]{{\bf #1}}
\pgfplotsset{compat=newest,unit code/.code={\si{#1}},plot coordinates/math
parser=false,grid style={lightgray},tick label style={font=\footnotesize},
label style={font=\footnotesize},
legend style={font=\footnotesize}}
\renewcommand{\todo}[2][]{\tikzexternaldisable\@todo[#1]{#2}\tikzexternalenable}
\newcommand{\mytitle}{\textbf{Accepted final version.}
%Accepted for publication in: Proc. of the 53rd IEEE Conference on Decision and Control, 2014.\\
Article accepted for publication.  To appear in \textit{IEEE Internet of Things Journal}.\\
\copyright 2019 IEEE. Personal use of this material is permitted. Permission from IEEE must be obtained for all other uses, in any current or future media, including reprinting/republishing this material for advertising or promotional purposes, creating new collective works, for resale or redistribution to servers or lists, or reuse of any copyrighted component of this work in other works.
}
\begin{document}

% Suggestions:
% ...

\title{Resource-aware IoT Control: Saving Communication through Predictive Triggering}

% author names and IEEE memberships
% note positions of commas and nonbreaking spaces ( ~ ) LaTeX will not break
% a structure at a ~ so this keeps an author's name from being broken across
% two lines.
% use \thanks{} to gain access to the first footnote area
% a separate \thanks must be used for each paragraph as LaTeX2e's \thanks
% was not built to handle multiple paragraphs 
%
\author{Sebastian~Trimpe,~\IEEEmembership{Member,~IEEE}, and
  Dominik Baumann
   %Michael~Shell,~\IEEEmembership{Member,~IEEE,} 
   %John~Doe,~\IEEEmembership{Fellow,~OSA,}
   %and~Jane~Doe,~\IEEEmembership{Life~Fellow,~IEEE}% <-this % stops a space
\thanks{S.~Trimpe and D.~Baumann are with the Intelligent Control Systems Group at the Max Planck
Institute for Intelligent Systems, 70569 Stuttgart, Germany. E-mail: trimpe@is.mpg.de, dbaumann@tuebingen.mpg.de.}% <-this % stops a space
\thanks{This work was supported in part by the German Research Foundation (DFG) Priority Program 1914 (grant TR 1433/1-1), the Max Planck ETH Center for Learning Systems, the Cyber Valley Initiative, and the Max Planck Society.}%
%\thanks{Manuscript received April 19, 2005; revised August 26, 2015.}
}

% The paper headers
% ARXIV-change: commented following 2 lines
%\markboth{Draft manuscript}%
%{Draft manuscript}

%\markboth{Journal of \LaTeX\ Class Files,~Vol.~14, No.~8, August~2015}%
%{Shell \MakeLowercase{\textit{et al.}}: Bare Demo of IEEEtran.cls for IEEE Journals}
% The only time the second header will appear is for the odd numbered pages
% after the title page when using the twoside option.
% 
% *** Note that you probably will NOT want to include the author's ***
% *** name in the headers of peer review papers.                   ***
% You can use \ifCLASSOPTIONpeerreview for conditional compilation here if
% you desire.

% make the title area
\maketitle

%%%%%%%%%%%%%%%%%%%%%%%%%%%%%%%%%%%%%%%%%%%%%
% ARXIV-change (added after final submission)
%%%%%%%%%%%%%%%%%%%%%%%%%%%%%%%%%%%%%%%%%%%%%
\thispagestyle{fancy}	% final submitted: empty
\pagestyle{empty}
%%%%%%%%%%%%%%%%%%%%%%%%%%%%%%%%%%%%%%%%%%%%%
%%%%%%%%%%%%%%%%%%%%%%%%%%%%%%%%%%%%%%%%%%%%%

% As a general rule, do not put math, special symbols or citations
% in the abstract or keywords.
\begin{abstract}
The Internet of Things (IoT) interconnects multiple physical devices in large-scale networks.
When the `things' coordinate decisions and act collectively on shared information, \emph{feedback} is introduced between them.  
Multiple feedback loops are thus closed over a shared, general-purpose network.  
Traditional feedback control is unsuitable for design of IoT control because it relies on high-rate periodic communication and is ignorant of the shared network resource.  
Therefore, recent event-based estimation methods are applied herein for resource-aware IoT control allowing agents to decide online whether communication with other agents is needed, or not.  While this can reduce network traffic significantly, a severe limitation of typical event-based approaches is the need for instantaneous triggering decisions that leave no time to reallocate freed resources (\eg communication slots), which hence remain unused.  
To address this problem, novel predictive and self triggering protocols are proposed herein.  From a unified Bayesian decision framework, two schemes are developed: self triggers that predict, at the current triggering instant, the next one; and predictive triggers that check at every time step, whether communication will be needed at a given prediction horizon.  The suitability of these triggers for feedback control is demonstrated in hardware experiments on a cart-pole, and scalability is discussed with a multi-vehicle simulation.
\end{abstract}

% Note that keywords are not normally used for peerreview papers.
\begin{IEEEkeywords}
Internet of Things,
%IoT control,
%feedback,
feedback control,
event-based state estimation,
predictive triggering,
self triggering,
distributed control,
resource-aware control.
\end{IEEEkeywords}

% For peer review papers, you can put extra information on the cover
% page as needed:
% \ifCLASSOPTIONpeerreview
% \begin{center} \bfseries EDICS Category: 3-BBND \end{center}
% \fi
%
% For peerreview papers, this IEEEtran command inserts a page break and
% creates the second title. It will be ignored for other modes.
\IEEEpeerreviewmaketitle

%%%%%%%%%%%%%%%%%%%%%%%%%%%%%%%%%%%%%%%%%%%%%%%%%%%%%%%%%%%%%%%%%%%%%%%%%%%%%%%%
% Sections:
%\input{parts/introduction}

\section{Introduction}
\label{sec:intro}
The Internet of Things (IoT) will connect large numbers of physical devices via local and global networks, \cite{AtIeMo10,GuBuMaPa13}.  While early IoT research concentrated on problems of data collection, communication, and analysis \cite{Sa16}, 
%with important application such as wireless sensor networks and mobile apps \Sebtodo{refs}, 
%`closing the loop' by using the available data for \emph{actuation}
using the available data for actuation 
%and feedback 
is vital for envisioned applications such as autonomous vehicles, building automation, or cooperative robotics.
% data collection (\eg sensing), distribution (communication, mobile applications), and processing (data analysis) \cite{}, 
In these applications, the devices or `things' are required to \emph{act} intelligently based on data from local sensors and the network. 
For example, cars in a platoon need to react to other cars' maneuvers 
%such as braking 
to keep a desired distance;
% down or up the platoon to keep constant distance; 
and climate control units must coordinate their action for optimal ambience in a large building.
 \emph{IoT control} thus refers to systems where data about the physical processes, collected via sensors and communicated over networks, are used to decide on actions.
 %, \eg how much a car should accelerate, or an climate device should cool down.  
 These actions in turn affect the physical processes, which is the core principle of \emph{closed-loop control} or \emph{feedback}.
% omitted for space: \cite{AsMu10}.

% Figure removed for space considerations:
%
%\begin{figure}[tb]
%\centering
%\subfloat[Autonomous vehicle platoon]{%
%    \input{Tikz/PlatooningSchematic.tex}%
%    \label{fig:sub:iotExVehicles}}
%\\
%\subfloat[Smart building??]{%
%    \input{Tikz/smart_home.tex}%
%    \label{fig:sub:iotExTwo}}
%\caption{IoT control example scenarios.  Feedback loops are closed over the network: the cars in (a) need to react to other cars' maneuvers such as braking down or up the platoon to keep constant distance; the cooling/heating devices in (b) need to coordinate for optimal building climate.
%\Seb{Dominik, could you have a crack at (b) when you have time?}}
%\label{fig:iotExamples}
%\end{figure}

Figure \ref{fig:iotControlSchematic} shows an abstraction of a general IoT control system.  
%Most IoT work to date concerns problems without actuation (no red arrows in \fig \ref{fig:iotControlSchematic}); that is, the data and computations by the agents and the network do not influence the physical processes.  
When the available information within the IoT is used for decision making and commanding actuators (red arrows), one introduces feedback between the cyber and the physical world, \cite{Sa16}. 
% Omitted for space:  AsMu10
%This problem is arguably harder because feedback changes the dynamics of the overall systems, which can enable high-performance applications, but also lead to failures when not done right (\eg destabilizing a naturally stable system), \cite{Sa16,AsMu10}.
%be for the better, or worse \cite{}.  
Feedback loops can be closed on the level of a local object, but, more interestingly, also across agents and networks.  
Coordination among agents is vital, for example, when agents seek to achieve a global objective.
IoT control aims at enabling coordinated action among multiple things.
%which is essential in the applications mentioned above, as well as many others.

\begin{figure}[tb]
\centering
\includegraphics[width=\columnwidth]{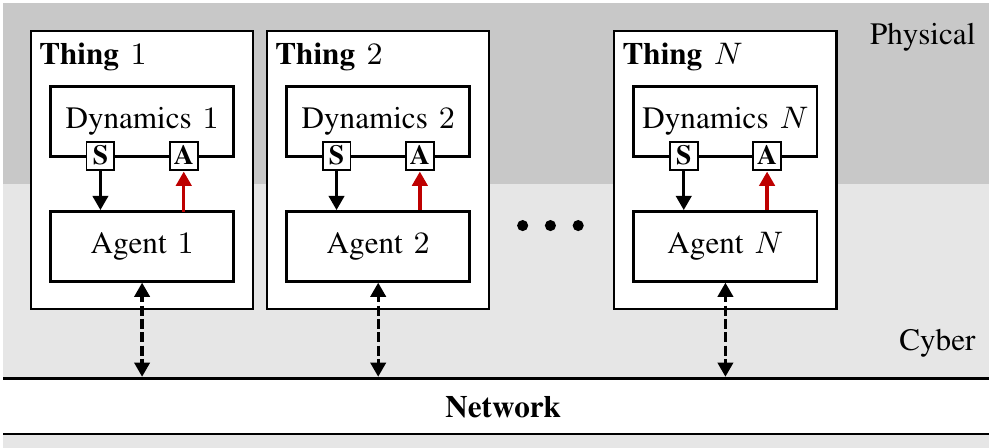}
\caption{Abstraction of an IoT control system. Each \emph{Thing} is composed of \emph{Dynamics} representing its physical entity and an \emph{Agent} representing its algorithm unit.  Dynamics and Agent are interconnected via sensors (S) and actuators (A).  The \emph{Network} connects all things to the IoT.}
\label{fig:iotControlSchematic}
\end{figure}

In contrast to traditional feedback control systems, where feedback loops are
closed over dedicated communication
lines (typically wires), feedback loops in IoT control are realized over a
general purpose network such as the Internet or local networks.  In typical IoT applications, these networks are wireless. While networked communication offers great advantages in terms
of, inter alia, reduced installation costs, unprecedented flexibility, and availability of data, control over networks involves formidable challenges for system design and operation, for example, because of imperfect communication, variable network structure, and limited communication resources, \cite{Lu14,HeNaXu07}.  Because the network bandwidth is shared by multiple entities, each agent should use the communication resource \emph{only when necessary}.  Developing such resource-aware control for the IoT is the focus of this work.
%network access designed in tandem with control strategies.  
This is in contrast to traditional feedback control, where data transmission typically happens periodically at a priori fixed update rates.
%, irrespective of whether there is need for an update or not.
%the information content of the data or state of the sys.
%and which is hence not suitable for resource-limited IoT control.

Owing to the shortcomings of traditional control, \emph{event-based methods} for state estimation and control have emerged since the pioneering work \cite{AsBe99,Ar99}.  The key idea of event-based approaches is to apply feedback only upon certain \emph{events} indicating that transmission of new data is necessary (\eg a control error passing a threshold level, or estimation uncertainty growing too large).  Core research questions concerning the design of the event triggering laws, which decide when to transmit data, and the associated estimation and control algorithms with stability and performance guarantees have been solved in recent years 
% cut (space): with closed-loop guarantees on stability, performance and robustness despite limiting the communication 
(see \cite{Le11,HeJoTa12,Mi15,ShShCh16} for overviews).
%Several approaches for event-based control and state estimation have been developed since (see \cite{Le11,HeJoTa12,Mi15,ShShCh16} for overviews) 
%
%In recent years, the research community in event-based control and state estimation has had 

This work builds on a framework for distributed event-based state estimation (DEBSE) developed in prior work \cite{TrDAn11,TrDAn14b,Tr17,MuTr18}, which is applied herein to resource-aware IoT control as in \fig \ref{fig:iotControlSchematic}.  
The key idea of DEBSE is to employ model-based predictions of other things to avoid the need for continuous data transmission between the agents.  Only when the model-based predictions become too inaccurate (\eg due to a disturbance or accumulated error), an update is sent. 
%DEBSE thus provides an architecture for resource-efficient networked control, which employs, 
%on each agent, (i) state estimators to predict information of relevant
%other agents at times when there is no communication, and (ii) event triggering
%laws that issue communication whenever predictions are not good enough.
%With this architecture, each agent has all relevant information available for making coordinated control decisions locally.
Figure \ref{fig:agentSchematic} represents one agent of the IoT control system in \fig \ref{fig:iotControlSchematic} and depicts the key components of the DEBSE architecture:
%The main components implemented on each agent are:
%depicts the components of the DEBSE architecture; it represents one of the agents of the IoT control system in \fig \ref{fig:iotControlSchematic}.  The 
%key ideas to
%maintain/provide 
%make relevant information from other agents available on each agent, yet limit inter-agent communication, 
%can be summarized as follows: 
% of this architecture, which that maintaints/provides relevant information on each agent, yet limits communication between agents
\begin{itemize}
\item \emph{Local control:} Each agent makes local control decisions for its actuator; for coordinated action across the IoT, it also needs information from other agents in addition to its local sensors.
\item \emph{Prediction of other agents:} State estimators and predictors 
%based on agents' dynamics models 
(\eg of Kalman filter type) are used to predict the states of all, or a subset of agents based on agents' dynamics models; these predictions are reset (or updated) when new data is received from the other agents.
%Sensor measurements (or local states) of all agents (or a relevant subset of these) are available locally via state prediction based on agents' dynamics models; estimates and predictions
\item \emph{Event trigger:} Decides when an update is sent to all agents in the IoT.  For this purpose, the local agent implements a copy of the predictor
of its own behavior (\emph{Prediction Thing $i$}) to replicate locally the information the other agents have about itself.  The event trigger compares the prediction with the local state estimate: the current state estimate is transmitted to other agents only if the prediction is not sufficiently accurate.
\end{itemize}
Key benefits of this architecture are: each agent has all relevant information available for coordinated decision making, but inter-agent communication is limited to the necessary instants (whenever model-based predictions are not good enough).
%This architecture enables effective distributed decision making with limited communication by providing each agent with all relevant information, while 

\begin{figure}[tb]
\centering  
\includegraphics{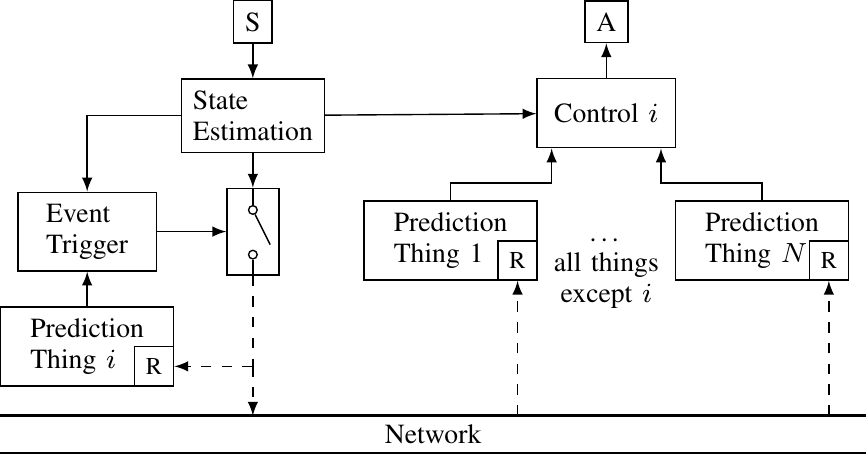}
\setlength{\abovecaptionskip}{-0.25cm}
\caption{Algorithmic components implemented on each agent
$i=1,\dots,N$ of the IoT control system in \fig \ref{fig:iotControlSchematic}. Agent $i$'s control decision is based on local information (\emph{State Estimation}) and predictions of all (or a subset of) other things (\emph{Prediction Thing 1 to $N$}).  Each agent sends an update (\emph{Event Trigger}) to all other agents whenever the prediction of its own state (\emph{Prediction Thing $i$}) deviates too far from the truth, so that predictions can be reset (R).}
\label{fig:agentSchematic}
\end{figure}

Experimental studies \cite{TrDAn11,Tr17} demonstrated that DEBSE can achieve significant communication savings,
% compared to periodic sampling, 
which is inline with many other studies in event-based estimation and control.
% (such as stabilizing an unstable system \cite{TrDAn12b}). 
The research community has had remarkable success in showing
% -- both theoretically and experimentally -- 
that the number of samples in feedback loops can be reduced significantly as compared to traditional time-triggered designs.  
%The resulting reduction in average communication or processing 
This can be translated into increased battery life \cite{ArMaAnTaJo13} in wireless sensor systems, for example.
Despite these successes, better utilization of shared communication resources has typically \emph{not} been demonstrated.
%it has rarely been demonstrated that event-based designs also result in better utilization of shared communication and processing resources, or reduced hardware costs.  
%The latter is, however, often the ultimate goal and reason for considering event-based designs.  
%
A fundamental problem of most event-triggered designs (incl.\ DEBSE) is that they make decisions about whether a communication 
%or control computation 
is needed \emph{instantaneously}.
%instantaneously whether a communication or control computation is needed \emph{now}.  
This means that the resource must be held available at all times in case of a positive triggering decision.  Conversely, if a triggering decision is negative, the reserved slot 
%(\eg in distributed communication network) 
remains unused because it cannot be reallocated to other users immediately.
%While secondary savings from reduced communication or processing have been realized (\eg increased battery life\cite{ArMaAnTaJo13}), the primary goal of event-triggered approaches, namely the better usage of available resources, mostly fails to be realized.  Since resources cannot effectively  be reallocated to other processes with many event-triggered approaches and thus remain unused, one may argue that traditional periodic communication is still beneficial in most applications as it at least does not forfeit available resources.

In order to translate the reduction in average sampling rates 
%, which is possible \emph{in principle}, 
to better \emph{actual} resource utilization, it is vital that the event-based system is able to \emph{predict} resource usage ahead of time, rather than requesting resources instantaneously.  %Only 
This allows the processing or communication system  to reconfigure and make unneeded resources available to other users or set to sleep for saving energy.  Developing such predictive triggering laws for DEBSE and their use for resource-aware IoT control
% (in particular, the setup in \fig \ref{fig:remoteEstimation}) 
are the main objectives of this article.
%
%In this paper, we proposed a systematic framework for deriving such predictive triggering mechanisms for the problem of state estimation and, in particular, the setup depicted in \fig \ref{fig:remoteEstimation}.
%Developing a principled way for deriving such predictive triggering mechanisms for the problem of state estimation is the object of this paper.
% the topic addressed in this paper
%For this, we consider the archetypal remote estimation problem shown in \fig \ref{fig:remoteEstimation}, where a sensor (with sufficient processing capabilities) decides whether and when to communicate its sensor data to a state estimator at a remote location.
%

%%%%%%%%%%%%%%%%%%%%%%%%%%%%%%%%%%%%%%%%%%%%%%%%%%%%%%%%%%%%%%%%%%%%%%%%%%%%%%%%
\subsubsection*{Contributions}
This article proposes a framework for resource-aware IoT control based on DEBSE.
%This article presents a framework and algorithms for resource-aware IoT control, where physical objects share data with each other to coordinate their actions while limiting the use of communication resources to the necessary instants.  The main focus are predictive triggering mechanisms, by which agents can predict future data transmissions, which shall enable efficient (re-)scheduling of network resources.
The main contributions are summarized as follows:
\begin{enumerate}
%[leftmargin=5mm,label={--}]
%\item Review of distributed event-based state estimation (DEBSE) for resource-aware IoT control;
%efficiently distributing relevant information in the IoT enabling, among others, cooperative feedback control;

%\item Proposal of predictive triggering 
%for state estimation
%to predict future communication instants;

\item Proposal of a Bayesian decision framework for deriving predictive triggering mechanisms, which provides a new perspective on the triggering problem in estimation;

%Extending previous work \cite{TrCa15} on event trigger design, we propose a unified decision framework for developing different predictive triggering mechanisms, where triggering is formulated as an optimization problem solved under different information patterns.  To the best of the author's knowledge, this provides a new perspective on the triggering problem in estimation.  The framework is used to develop the following two triggering concepts.  

\item Derivation of two novel triggers from this framework:  the \emph{self trigger}, which predicts the next triggering instant based on information available at a current triggering instant; and the \emph{predictive trigger}, which predicts triggering for a given future horizon of $M$ steps;
%a trigger $M\!>\!0$ steps ahead of time, where the prediction horizon $M$ is a design parameter; 

%\item First, a \emph{self triggering} rule is derived that predicts the next trigger based on the information available at a current triggering instant. The self trigger is closely related to the concept of variance-based triggering \cite{TrDAn14b}, albeit this concept has not been used for self triggering before.

%\item Second, we propose and develop the concept of \emph{predictive triggering}.  In contrast to self triggering, where the next trigger is computed at the last triggering instant, the predictive trigger continuously monitors the sensor measurements, but predicts a communication $M\!>\!0$ steps ahead of time, where the prediction horizon $M$ is a design parameter.
%Predictive triggering is a novel concept, which is situated between the concepts of event triggering and self triggering.

\item Demonstration and comparison 
%Demonstration 
of the proposed triggers in experiments on an inverted pendulum testbed; 
%in terms of performance/communication trade-offs, and effectiveness for feedback control;
%with comparison 
% comparing their effectiveness in trading off 
%of their estimation/communication performance and effectiveness for feedback control; 
and
%The effectiveness of the different triggers in trading off estimation performance for communication is compared in numerical simulations.

\item Simulation study of a multi-vehicle system.
\end{enumerate}

The Bayesian decision framework extends previous work \cite{TrCa15} on event trigger design to the novel concept of predicting trigger instants. The proposed self trigger is related to the concept of variance-based triggering \cite{TrDAn14b}, albeit this concept has not been used for self triggering before.  To the best of the authors' knowledge, predictive triggering is a completely new concept in both event-based estimation and control.  Predictive triggering is shown to reside between the known concepts of event triggering and self triggering.  

A preliminary version of some results herein was previously published in the conference paper \cite{Tr16}.  
%In contrast to that paper, 
This article targets IoT control and has been restructured and extended accordingly.
%; it additionally 
New results beyond \cite{Tr16} include the treatment of control inputs in the theoretical analysis (\sect \ref{sec:predAndSelfTrig}), the discussion of multiple agents (\sect \ref{sec:multiAgent}),
%:; it is rewritten and includes, in addition, 
%a discussion of DEBSE for IoT control (\sect \ref{sec:DEBSE}), 
%new theoretical results also treating control inputs (\sect \ref{sec:predAndSelfTrig}),  
hardware experiments (\sect \ref{sec:experiments}), and a new multi-vehicle application example (\sect \ref{sec:simulationStudy}).  
\section{Related Work}
\label{sec:relatedWork}
Because of the promise to achieve high-performance control on resource-limited systems,
% such as IoT with a shared communication network,
the area of event-based control and estimation has seen substantial growth in the last decades.  For general overviews, see \cite{Le11,HeJoTa12,Lu14,Mi15} for control and \cite{Le11,TrCa15,SiNoLaHa16,ShShCh16} for state estimation.  
% removed: Tr15arxiv, GrHiJuEtAl14
This work mainly falls in the category of event-based state estimation (albeit state predictions and estimates are also used for feedback, \cf \fig \ref{fig:agentSchematic}).

Various design methods have been proposed in literature for event-based state estimation and, in particular, its core components, the prediction/estimation algorithms and event triggers.  For the former, different types of Kalman filters \cite{TrDAn11,TrDAn14b,MaEsGaSa15}, 
% Shortened (due to space):
% nonlinear Kalman filters \cite{MaEsGaSa15}, 
modified Luenberger-type observers \cite{Tr17,MuTr18}, and set-membership filters \cite{SiNoHa13,ShChSh14c} have been used, for example.  Variants of event triggers include  
%have been developed that make the 
triggering based on the innovation \cite{TrDAn11,WuJiJoSh13}, estimation
variance \cite{TrDAn14b,LeDeQu15}, or entire probability density functions (PDFs)
\cite{MaSi10}.  Most of these event triggers make transmit decisions instantaneously, while the focus of this work is on predicting triggers.
% problem of \emph{predicting} future communications in EBSE has received considerably less attention.

The concept of \emph{self triggering} has been proposed \cite{VeMaFu03} to address the problem of predicting future sampling instants.  In contrast to event triggering, which requires the continuous monitoring of a triggering signal,
% cut (space): (such as a control error), 
self-triggered approaches predict the next triggering instant already at the previous trigger.
%, \cite{HeJoTa12,VeMaFu03}.  
While several approaches to self-triggered control have been proposed in literature (\eg \cite{HeJoTa12,WaLe09,MaAnTa10,AnTa10}),    
% Not included (space):
% - AlSiPa15
% - Nowzari
% - MaTa08
%
self triggering for state estimation has received considerably less attention.  Some exceptions are discussed next.
%Some of the results for estimation are briefly discussed next.

Self triggering is considered for set-valued state estimation in \cite{MePr14}, and for high-gain continuous-discrete observers in \cite{AnNaSeVi15}. In \cite{MePr14}, a new measurement is triggered when the uncertainty set about some part of the state vector becomes too large.  In \cite{AnNaSeVi15}, the triggering rule is designed so as to ensure convergence of the observer.  
The recent works \cite{BrGoHeAl15} and \cite{KoFi15} propose self triggering approaches, where transmission schedules for multiple sensors are optimized at a-priori fixed periodic time instants.
% cut (space): taking into account the cost of sampling and estimation/control performance.
While the re-computation of the schedule happens periodically, the transmission of sensor data does generally not.  
% CUT (for space reasons, and structure of this part)
%In the approach developed herein, we do not preimpose the periodic update of schedules, but determine the next triggering instant at the time of the last one.  However, as it shall be seen, periodic transmissions may results for LTI systems.
In \cite{AlSiPa12}, a discrete-time observer is used as a component of a self-triggered output feedback control system.  Therein, triggering instants are determined by the controller to ensure closed-loop stability.
%Related problems on self-triggered output feedback control, where the objective is on a property of the control loop (\eg stability) rather than estimation directly/in the first place, can be found in \cite{AlSiPa12} ... .
% not included
% - AlSiPa14: similar to AlSiPa12, but does not seem to have choice of ST scheduler
%

Alternatives to the Bayesian decision framework herein for developing triggering schedules
include dynamic programming approaches such as in~\cite{WuAr08,XuHe04,lipsa2011remote}.
% Suggestion Dominik:
%, or other optimization-based approaches as in~\cite{lipsa2011remote}. \Dom{This is the reference mentioned by the reviewer. I don't really see why exactly this reference should be mentioned here. They derive a triggering policy for a stochastic first-order system, so it is nothing really special\ldots}

None of the mentioned references considers the approach taken herein, where triggering is formulated as a Bayesian decision problem under different information patterns.  The concept of predictive triggering, which is derived from this, is novel.  It is different from self triggering in that decisions are made continuously, but for a fixed prediction horizon.  
%Predictive triggering is thus a novel concept.

%\input{content/Notation.tex}

\section{Fundamental Triggering Problem}
\label{sec:triggeringDecisionProblem}
\label{sec:problem}
In this section, we formulate the predictive triggering problem that each agent in \fig \ref{fig:agentSchematic} has to solve, namely predicting when local state estimates shall be transmitted to other agents of the IoT.
%For this, we focus on the fundamental triggering problem that each agent in \fig \ref{fig:agentSchematic} has to solve, namely when to transmit the local state estimates to other agents that need its state information for decision making.  
%To this end, 
We consider the setup in \fig \ref{fig:remoteEstimation}, which has been reduced to the core components required for the analysis in subsequent sections.
% of the triggering decision.  
 Agent $i$, called \emph{sensor agent}, sporadically transmits data over the network to agent $j$.
%, which needs agent $i$'s state information for decision making. 
Agent $j$ here stands representative for any of the agents in the IoT 
%in \fig \ref{fig:iotControlSchematic} 
that require information from agent $i$.
Because agent $j$ can be at a different location, it is called \emph{remote agent}. 
We next introduce the components of \fig \ref{fig:remoteEstimation}
% (\emph{dynamics}, \emph{state estimation}, \emph{prediction}), 
and then make the predictive triggering problem precise.

\begin{figure}[tb]
\centering
\includegraphics{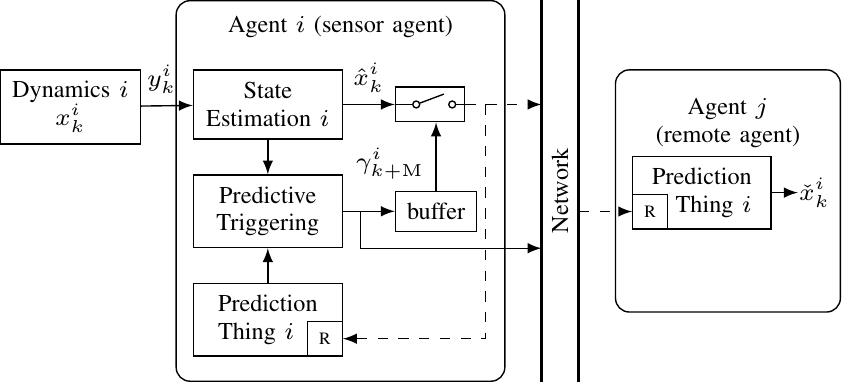}
\caption{Predictive triggering problem.  The sensor agent $i$ runs a local \emph{State Estimator} and transmits its estimate $\xKF_k^i$ to the remote agent $j$ in case of a positive triggering decision ($\gamma_k^i = 1$).  The predictive trigger computes the triggering decisions ($\gamma_{k+M}^i \in \{0,1\}$) $M$ steps ahead of time.  This information can be used by the network to allocate resources.
Local control (\cf \fig \ref{fig:agentSchematic}) is omitted here for clarity, but treated in the analysis.
%respond to the triggering requests by reallocating resources.  
%Developing predictive triggering mechanisms is the objective of this paper.
}
\label{fig:remoteEstimation}
\end{figure}

%%%%%%%%%%%%%%%%%%%%%%%%%%%%%%%%%%%%%%%%%%%%%%%%%%%%%%%%%%%%%%%%%%%%%%%%%%%%%%%%
\subsection{Process dynamics} 	
\label{sec:processDynamics}
We consider each agent $i$ to be governed by stochastic, linear dynamics with Gaussian noise,
% (a standard setting in state estimation \cite{AnMo05}),
\begin{align}
\xik{i}{k} &= A_{i} \xik{i}{k-1} + B_{i} \uik{i}{k-1} + \vik{v}{i}{k-1} 
\label{eq:sys_x}
\\
\vik{y}{i}{k} &= H_{i} \xik{i}{k} + \vik{w}{i}{k} 			 \label{eq:sys_y}
\end{align}
with $k \! \geq \! 1$ the discrete time index, 
%$i$ denoting thing $i$, 
$\xik{i}{k}
\in \R^{n_\text{x}}$ the state, $\uik{i}{k} \in \R^{n_\text{u}}$ the input,
$\vik{v}{i}{k} \in \R^{n_\text{x}}$ process noise (\eg capturing model uncertainty), $\vik{y}{i}{k} \in \R^{n_\text{y}}$ the sensor measurements, and
$\vik{w}{i}{k} \in \R^{n_\text{y}}$ sensor noise.  The random variables $\xik{i}{0}$,
$\vik{v}{i}{k}$, and $\vik{w}{i}{k}$ are mutually independent with PDFs $\Nc(\xik{i}{0};
\bar{x}_{i},X_{i})$, $\Nc(\vik{v}{i}{k}; 0,Q_{i})$, and $\Nc(\vik{w}{i}{k};
0,R_{i})$, where $\Nc(z; \mu, \Sigma)$ denotes the PDF of a Gaussian random
variable $z$ with mean $\mu$ and variance $\Sigma$. 

Equations \eqref{eq:sys_x} and \eqref{eq:sys_y} represent decoupled agents' dynamics, which we consider in this work (\cf \fig \ref{fig:iotControlSchematic}).  Agents will be coupled through their inputs (see \sect \ref{sec:control} below).
While the results are developed herein for the time-invariant dynamics \eqref{eq:sys_x}, \eqref{eq:sys_y} to keep notation uncluttered, they readily extend to the linear time-variant case (\ie $A_{i}$, $B_i$, $H_i$, $Q_i$, and $R_i$ being functions of time $k$).  
%For ease of notation, the derivations are stated for the time invariant case, but 
Such a problem is discussed in \sect \ref{sec:simulationStudy}.

The sets of all measurements and inputs up to time $k$ are denoted by $\Yall^i_k := \{ \vik{y}{i}{1}, \vik{y}{i}{2}, \dots, \vik{y}{i}{k} \}$ and $\Uall^i_k := \{ \vik{u}{i}{1}, \vik{u}{i}{2}, \dots, \vik{u}{i}{k-1} \}$, respectively.

%%%%%%%%%%%%%%%%%%%%%%%%%%%%%%%%%%%%%%%%%%%%%%%%%%%%%%%%%%%%%%%%%%%%%%%%%%%%%%%%
\subsection{State estimation}
\label{sec:stateEstimation}
%\Seb{Consider removing indices i again in this section?}
The local state estimator on agent $i$ has access to all measurements $\Yall_k^i$ and inputs $\Uall_k^i$ (\cf \fig
\ref{fig:remoteEstimation}). The Kalman filter (KF)
is the optimal Bayesian estimator in this setting, \cite{AnMo05}; it recursively computes the exact posterior PDF
% cut (space): of $x_k$ conditioned on $\Yall_k$, which we denote by  
$f(\xik{i}{k}| \Yall^i_k,\Uall^i_k)$.
% by  Indeed, we have $f(x_k | \Yall_{k-1}) = \Nc(x_k; \xKF_{k|k-1}, \PKF_{k|k-1})$ and $f(x_k | \Yall_{k}) = \Nc(x_k; \xKF_{k|k}, \PKF_{k|k})$, where the conditional mean and variances are computed recursively by
The KF recursion is
% given by
\begin{align}
\xKFik{i}{k|k-1} &= A_{i} \xKFik{i}{k-1} + B_{i} \vik{u}{i}{k-1} \label{eq:KF1} \\
\PKFik{i}{k|k-1} &= A_i \PKFik{i}{k-1} A_{i}^\transp + Q_{i} =: \Voi{i}(\PKFik{i}{k-1}) \label{eq:KF2} \\
L_k^i &= \PKFik{i}{k|k-1} H_i^\transp (H_i \PKFik{i}{k|k-1} H_i^\transp + R_i)^{-1} \label{eq:KF3_Lk} \\
\xKFik{i}{k} &= \xKFik{i}{k|k-1} + L_k^i(\vik{y}{i}{k} - H_i \xKFik{i}{k|k-1}) \label{eq:KF4} \\
\PKFik{i}{k} &= (I -L_k^i H_i) \PKFik{i}{k|k-1}.  \label{eq:KF5}
% omitted: =: \Vck{k}(\PKF_{k|k-1})
\end{align}
% With indices i everywhere
%\begin{align}
%\xKFik{i}{k|k-1} &= A_{i} \xKFik{i}{k-1} + B_{i} \vik{u}{i}{k-1} \label{eq:KF1} \\
%\PKFik{i}{k|k-1} &= A_i \PKFik{i}{k-1} A_{i}^\transp + Q_{i} =: \Voi{i}(\PKFik{i}{k-1}) \label{eq:KF2} \\
%L_k^i &= \PKFik{i}{k|k-1} H_i^\transp (H_i \PKFik{i}{k|k-1} H_i^\transp + R_i)^{-1} \label{eq:KF3_Lk} \\
%\xKFik{i}{k} &= \xKFik{i}{k|k-1} + L_k^i(\vik{y}{i}{k} - H_i \xKFik{i}{k|k-1}) \label{eq:KF4} \\
%\PKFik{i}{k} &= (I -L_k^i H_i) \PKFik{i}{k|k-1}.  \label{eq:KF5}
%% omitted: =: \Vck{k}(\PKF_{k|k-1})
%\end{align}
%
% Previous notation:
%\begin{align}
%\xKF_{k|k-1} &= A_{k-1} \xKF_{k-1} + B_{k-1} u_{k-1} \label{eq:KF1} \\
%\PKF_{k|k-1} &= A_{k-1} \PKF_{k-1} A_{k-1}^\transp + Q_{k-1} =: \Vok{k-1}(\PKF_{k-1}) \label{eq:KF2} \\
%L_k &= \PKF_{k|k-1} H_k^\transp (H_k \PKF_{k|k-1} H_k^\transp + R_k)^{-1} \label{eq:KF3_Lk} \\
%\xKF_{k} &= \xKF_{k|k-1} + L_k(y_k - H_k \xKF_{k|k-1}) \label{eq:KF4} \\
%\PKF_{k} &= (I -L_k H_k) \PKF_{k|k-1}.  \label{eq:KF5}
%% omitted: =: \Vck{k}(\PKF_{k|k-1})
%\end{align}
where $f(\xik{i}{k} | \Yall_{k-1}^i, \Uall_k^i) = \Nc(\xik{i}{k}; \xKFik{i}{k|k-1}, \PKFik{i}{k|k-1})$, $f(\xik{i}{k} | \Yall_{k}^i, \Uall_k^i)$ $= \Nc(\xik{i}{k}; \xKFik{i}{k}, \PKFik{i}{k})$, and the short-hand notation $\xKFik{i}{k} = \xKFik{i}{k|k}$ and $\PKFik{i}{k} = \PKFik{i}{k|k}$ is used for the posterior variables.
%The superscript `F' is used to denote the KF with \emph{full data} in distinction to the later event-based estimator.  
In \eqref{eq:KF2}, we introduced the short-hand notation $\Voi{i}$ for the open-loop variance update for later reference.  
%In \eqref{eq:KF2} and \eqref{eq:KF5}, we introduced the short-hands $\Vok{k-1}$ and $\Vck{k}$ for open-loop and closed-loop variance updates for later reference.  
%Also, we sometimes use short-hand notation for the posterior variables, $\xKF_{k-1} := \xKF_{k-1|k-1}$ and $\PKF_{k-1} := \PKF_{k-1|k-1}$.  
We shall also need the $M$-step ahead prediction of the state ($M \geq 0$),
%; that is, $(x_{k+M} | \Yall_k)$, 
whose PDF is given by \cite[p.~111]{AnMo05}
\begin{equation}
f(\xik{i}{k+M} | \Yall_k^i, \Uall_{k+M}^i) = \Nc(\xik{i}{k+M}; \, \xKFik{i}{k+M|k}, \PKFik{i}{k+M|k}),
\label{eq:PDFstatePredM}
\end{equation}
with mean and variance obtained by the open-loop KF iterations \eqref{eq:KF1}, \eqref{eq:KF2}, \ie
$\xKFik{i}{k+M|k} = A_i^{M}\xKFik{i}{k} + \sum_{m=1}^M A_i^{M-m}B \uik{i}{k+m-1}$ 
and $\PKFik{i}{k+M|k} = (\Voi{i} \circ \cdots \circ
\Voi{i}) (\PKFik{i}{k})$,
%\begin{align}
%\xKF_{k+M|k} &= \Phi_{(k+M-1):k} \xKF_k \\ %\label{eq:KF_meanPred} \\
%\PKF_{k+M|k} &= (\Vok{k+M-1} \circ \cdots \circ \Vok{k+1} \circ \Vok{k}) (\PKF_k)  %\label{eq:KF_varPred}
%\end{align}
where `$\circ$' denotes composition.  Finally, the error of the KF is defined as 
\begin{equation}
\vik{\eKF}{i}{k} := \vik{x}{i}{k} - \xKFik{i}{k} .
\label{eq:def_KFerror}
\end{equation}

%%%%%%%%%%%%%%%%%%%%%%%%%%%%%%%%%%%%%%%%%%%%%%%%%%%%%%%%%%%%%%%%%%%%%%%%%%%%%%%%
\subsection{Control}
\label{sec:control}
Because we are considering coordination of multiple things, the $i$'s control input may depend on the prediction of the other things in the IoT (\cf \fig \ref{fig:agentSchematic}).  We thus consider a control policy
\begin{equation}
\uik{i}{k-1} = F_i \xKFik{i}{k-1} + \! \sum_{j \in \N_N \setminus \{i\}}  \!  F_j \xPik{j}{k-1}
\label{eq:controlLawPre}
\end{equation}
where the local KF estimate $\xKFik{i}{k}$ is combined with predictions
$\xPik{j}{k}$ of the other agents (to be made precise below), and $\N_N$ denotes the set of all integers $\{1, \dots, N\}$.  
For coordination schemes where not all agents need to be coupled, some $F_j$ may be zero.  Then, these states do not need to be predicted.

It will be convenient to introduce the auxiliary variable $\xi_{k}^i = \sum_{j \in \N_N \setminus \{i\}}  \!  F_j \xPik{j}{k}$; \eqref{eq:controlLawPre} thus becomes 
\begin{equation}
\uik{i}{k-1} = F_i \xKFik{i}{k-1} + \xi_{k-1}^i.
\label{eq:controlLaw}
\end{equation}

%%%%%%%%%%%%%%%%%%%%%%%%%%%%%%%%%%%%%%%%%%%%%%%%%%%%%%%%%%%%%%%%%%%%%%%%%%%%%%%%
\subsection{Communication network}
\label{sec:commNetwork}
%Agent $i$ will decide when to broadcast its local state $\xKFik{i}{k}$ over the network.
Communication between agents occurs over a bus network that connects \emph{all} things with each other.  In particular, we assume that data (if transmitted) can be received by all agents that care about state information from the sending agent:
% in the IoT:
\begin{assumption}
Data transmitted by one agent can be received by all other agents in the IoT.
\end{assumption}
Such bus-like networks are common, for example, in automation industry in form of \emph{wired} fieldbus systems \cite{Th05}, but have recently also been proposed for low-power multi-hop \emph{wireless} networks \cite{FerZimLWB,mager2018feedback}.
%
%For the purpose of developing the triggers, we abstract communication to be ideal, without delay and packet loss. 
For the purpose of developing the triggers, we further abstract communication to be ideal:
\begin{assumption}
Communication between agents is without delay and packet loss.
\end{assumption}
This assumption is dropped later in the multi-vehicle simulation.

%%%%%%%%%%%%%%%%%%%%%%%%%%%%%%%%%%%%%%%%%%%%%%%%%%%%%%%%%%%%%%%%%%%%%%%%%%%%%%%%
\subsection{State prediction}
The sensor agent in \fig \ref{fig:remoteEstimation} sporadically communicates its local estimate $\xKFik{i}{k}$
to the remote estimator, which, at every step $k$, computes its own state estimate $\xPik{i}{k}$ from the available data via state prediction.
%
%We consider an event-based architecture, where the sensor sporadically communicates its local estimate $\xKF_k$ to the remote estimator, which, at every step $k$, computes its own state estimate $\hat{x}_k$ from the available data.  Other event-based architectures  are also conceivable, for example, where measurements $y_k$ instead of state estimates are communicated as in \cite{TrDAn11,TrDAn14b,TrCa15}, which can be beneficial for practical considerations (\eg when $n_\text{y} \ll n_\text{x}$) or in distributed architectures.
%
We denote by $\gamma_k^i \in \{0, 1\}$ the decision taken by the sensor about whether an update is sent ($\gamma_k^i = 1$) or not ($\gamma_k^i = 0$).  For later reference, we denote the set of all triggering decisions until $k$ by $\Gamall_k^i := \{ \gamma_1^i, \gamma_2^i, \dots, \gamma_k^i \}$.
% CUT:
%\begin{equation}
%\Gamall_k := \{ \gamma_1, \gamma_2, \dots, \gamma_k \}.
%\label{eq:Gamall}
%\end{equation}
%
%For the purpose of developing the triggers, we abstract communication to be ideal, without delay and with zero probability of packet loss. This assumption is dropped later in the multi-vehicle simulation.

The state predictor on the remote agent (\cf \fig \ref{fig:remoteEstimation}) uses the following recursion to compute $\xPik{i}{k}$, its remote estimate of $\xik{i}{k}$:
\begin{align}
\xPik{i}{k} &= 
\begin{cases}
A_i \xPik{i}{k-1}  + B_{i} \check{u}_{k-1} & \text{if $\gamma_k^i =
0$}
\\
\xKFik{i}{k}  & \text{if $\gamma_k^i = 1$} ;
\end{cases}
\label{eq:remoteEst_pre}
\end{align}
that is, at times when no update is received from the sensor, the estimator predicts its previous estimate according to the process model \eqref{eq:sys_x} and prediction of the input \eqref{eq:controlLaw} by
\begin{equation}
\check{u}^{i}_{k-1} = F_i \xPik{i}{k-1} + \xi_{k-1}^i .
%\check{u}^{i}_{k-1} = F_i \xPik{i}{k-1} + \! \sum_{m \in \{1,\dots,N\} \setminus i}  \!  F_m \xPik{m}{k-1} .
%= \sum_{m \in \{1,\dots,N\} }  \!  F_m \xPik{m}{k-1}
\label{eq:controlLawPred}
\end{equation}
Implementing \eqref{eq:controlLawPred} thus requires the remote agent to run predictions of the form \eqref{eq:remoteEst_pre} for all other things $m$ that are relevant for computing $\xi_{k-1}^i$.
%the control input $\check{u}^{i}_{k-1}$.  
This is feasible as an agent can broadcast state updates (for $\gamma_k^i=1$) to all other things via the bus network.
We emphasize that $\xi_{k-1}^i$, the part of the input $u^i_{k-1}$ that corresponds to all other agents, is known exactly on the remote estimator, since updates are sent to all agents connected to the network synchronously.  Hence, the difference between the actual input \eqref{eq:controlLaw} and predicted input \eqref{eq:controlLawPred}
%, $u^i_{k-1}$ and $\check{u}^i_{k-1}$, 
stems from a difference in $\xKF^i_{k-1}$ and $\xP^i_{k-1}$.

With \eqref{eq:controlLawPred}, the prediction \eqref{eq:remoteEst_pre} then becomes 
\begin{align}
\xPik{i}{k} &= 
\begin{cases}
\bar{A}_i \xPik{i}{k-1}  + B_i\xi_{k-1}^i 
%=: \xI_k 
& \text{if $\gamma_k^i = 0$} \\
\xKFik{i}{k} 
%=: \xII_k 
& \text{if $\gamma_k^i = 1$} ;
\end{cases}
\label{eq:remoteEst}
\end{align}
where $\bar{A}_i := A_i + B_i F_i$ denotes the closed-loop state transition matrix of agent $i$.
%For easier reference and distinguishing the two paths that the state predictor \eqref{eq:remoteEst} can take, also we introduced the variables $\xI$ and $\xII$, corresponding to the case of communication or not, respectively.
The estimation error at the remote agent, we denote by 
%$\vik{e}{i}{k} := \xik{i}{k} - \xPik{i}{k}$.
\begin{equation}
\vik{e}{i}{k} := \xik{i}{k} - \xPik{i}{k}.
\label{eq:def_remoteError}
\end{equation}

% Errors: introduced later
%Furthermore, we introduce the corresponding errors, $\eI_k := x_k - \xI_k$ and $\eII_k := x_k - \xII_k$.
%The general estimation error, we denote by $e_k := x_k - \hat{x}_k$.

A copy of the state predictor \eqref{eq:remoteEst} is also implemented on the sensor agent to be used for the triggering decision (\cf \fig \ref{fig:remoteEstimation}).

%The remote estimator thus corresponds to the open-loop prediction of the KF according to \eqref{eq:PDFstatePredM} (mean).
%Indeed, let $\last_k \leq k$ denote the last time that data was transmitted; then $\hat{x}_k = \xKF_{k|\last_k}$.  

Finally, we comment how local estimation quality can possibly be further improved in certain applications.
\begin{remark}
\label{rem:localSensors}
In \eqref{eq:remoteEst}, agent $j$ makes a pure state prediction about agent $i$'s state in case of no communication from agent $i$ ($\gamma_k^i=0$).  If agent $j$ has additional local sensor information about agent $i$'s state, it may employ this by combining the prediction step with a corresponding measurement update.  This may help to improve estimation quality (\eg obtain a lower error variance).
% in some application, we do not explicitly consider this case herein.  
In such a setting, the triggers developed herein can be interpreted as `conservative' triggers that take only prediction into account.
\end{remark}

% Previously cut (for space):
\begin{remark}
\label{rem:negativeInformation}
Under the assumption of perfect communication, the event of not receiving an update ($\gamma_k^i=0$) may also contain information useful for state estimation (also known as \emph{negative information} \cite{SiNoHa13}).  Here, we disregard this information in the interest of a straightforward estimator implementation (see \cite{TrCa15} for a more detailed discussion).  
\end{remark}

%%%%%%%%%%%%%%%%%%%%%%%%%%%%%%%%%%%%%%%%%%%%%%%%%%%%%%%%%%%%%%%%%%%%%%%%%%%%%%%%
\subsection{Problem formulation}
\label{sec:objective}
The main objective of this article is the development 
of principled ways for predicting future triggering decisions. In particular, we shall develop two concepts:
\begin{enumerate}
\item \emph{predictive triggering:} at every step $k$ and for a fixed horizon $M\!>\!0$, $\gamma_{k+M}^i$ is predicted, \ie whether or not communication is needed at $M$ steps in future; and
\item \emph{self triggering:} the next trigger is predicted at the time of the last trigger.
%\item at every step $k$, predicting $\gamma_{k+M}$, \ie whether communication is needed $M>0$ steps in future (\emph{predictive triggering}) 
%\item at the time of the last trigger, predicting the next one (\emph{self triggering}) 
\end{enumerate}

In the next sections, we develop these triggers for agent $i$ shown in \fig \ref{fig:remoteEstimation}, which is representative for any one agent in \fig \ref{fig:iotControlSchematic}.
%considering the setup in \fig \ref{fig:agentSchematic} and as described above.  
Because we will thus discuss estimation, triggering, and prediction solely for agent $i$ (\cf \fig \ref{fig:remoteEstimation}), we drop the index `$i$' to simplify notation.  Agent indices are re-introduced in \sect \ref{sec:multiAgent}, when again multiple agents are considered.

%In the following sections, we will develop these triggers for one agent $i$ considering the setup in \fig \ref{fig:agentSchematic} and as described above.  To simplify notation until further notice, we drop the index `$i$' until otherwise noted.

For ease of reference, key variables from this and later sections are summarized in Table~\ref{tab:summary}.
%\Dom{I included the arraystretch command with a value of 1.2 in the table. Reason is that the superscript i's were ``eaten'' by the indices k at some places. This enlarges the vertical spacing, thus the table needs a bit more space, but I think it looks better.}
% ST: Good!
\begin{table}[tb]
\centering
\caption{Summary of main variables used in the article.  The agent index `$i$' is dropped for all variables in \sect \ref{sec:triggeringFramework} to \ref{sec:analysisTriggers}.}
{\renewcommand{\arraystretch}{1.2}
\begin{tabular}{llll}
$A_i, B_i, H_i, Q_i, R_i$
%$A_i$, $B_i$, $H_i$, $Q_i$, $R_i$ 
& Dynamic system parameters\\
$F_i$ & Control gain corresponding to agent $i$'s state \\
$x_k^i$ &  State of agent $i$, eq.\ \eqref{eq:sys_x} %($i$ omitted in \sect \ref{sec:triggeringFramework} to \ref{sec:analysisTriggers}) 
\\
$\xKF_k^i$ & Kalman filter (KF) estimate \eqref{eq:KF4} \\
$\xPik{i}{k}$ & Remote state estimate \eqref{eq:remoteEst} \\
%\rule{0pt}{2.2ex}$\xKF_k^i$ & Kalman filter (KF) estimate \\
%\rule{0pt}{2.2ex}$\xI_k$ & State prediction (case of no communication) \\ 
%\rule{0pt}{2.2ex}$\xII_k=\xKF_k$ & State prediction reset (case of
%communication) \\
$\vik{\eKF}{i}{k}$ & KF estimation error \eqref{eq:def_KFerror} \\
$\vik{e}{i}{k}$ & Remote estimation error \eqref{eq:def_remoteError} \\
$\gamma_k^i$ & Communication decision (1=communicate, 0=not) \\
$\Gamma_k^i$ & Set of communication decisions $\{\gamma^i_1, \dots, \gamma^i_k \}$ \\
$X\notrigk$, $X\trigk$ & Expression $X$ evaluated for resp.\ $\gamma_k=0$, $\gamma_k=1$ \\
$\Yall_k^i$ & Set of all measurements on agent $i$ until time $k$ \\
$\Uall_k^i$ & Set of all inputs on agent $i$ until time $k$ \\
$\tilde{x}_k$, $\tilde{e}_k$, etc. & Collection of corresponding variables for all agents \\
%$\xKF=\xII_k$ & Closed loop Kalman filter state\\
%$\Vok{k-1}$ & Open loop variance update\\
%$\Phi_{k_2:k_1}$ & Successive application of $A_k$\\
$C_k$ & Communication cost (`$i$' dropped) \\ 
$E_k$ & Estimation cost (`$i$' dropped) \\
$M$ & Prediction horizon (`$i$' dropped) \\
$\last_k$ & Last triggering time (`$i$' dropped) \\
$\lastel_k$ & Time of last nonzero elem.\ in $\Gamma_{k+M}$ (`$i$' dropped) \\
$\Delta$ & Number of steps from $\lastel_{k-1}$ to $k\!+\!M$ (\cf Lem.\ \ref{lem:PDF_eI}) \\
%, $\Delta := k+M-\kappa_{k-1}$
$\N_N$ & Set of integers $\{1, \dots, N\}$ \\
$\mathbb{E}[X_1|X_2]$ & Expected value of $X_1$ conditioned on $X_2$\\
$f(X_1|X_2)$ & Probability density fcn (PDF) of $X_1$ cond.\ on $X_2$
\end{tabular}
}
\label{tab:summary}
\end{table}

\section{Triggering Framework}
\label{sec:triggeringFramework}
To develop a framework for making predictive triggering decisions, we extend the approach from \cite{TrCa15}, where triggering is formulated as a one-step optimal decision problem trading off estimation and communication cost. 
While this framework was used in \cite{TrCa15}  to re-derive existing event triggers
% (summarized in \sect \ref{sec:eventTrigger}), as well as re-derive existing ones 
% in a unified way 
 (summarized in \sect \ref{sec:eventTrigger}), we extend the framework herein to yield predictive and self triggering (\sect \ref{sec:predTrigger} and \ref{sec:selfTrigger}).

%%%%%%%%%%%%%%%%%%%%%%%%%%%%%%%%%%%%%%%%%%%%%%%%%%%%%%%%%%%%%%%%%%%%%%%%%%%%%%%%
\subsection{Decision framework for event triggering}
\label{sec:eventTrigger}
The sensor agent (\cf \fig \ref{fig:remoteEstimation}) makes a decision 
%at time $k$ 
between using the communication channel (and thus paying a communication cost $C_k$) to improve the remote estimate, or to save communication, but pay a price in terms of a deteriorated estimation performance (captured by a suitable estimation cost $E_k$).  The communication cost $C_k$ is application specific and may be associated with the use of bandwidth or energy, for example.  We assume $C_k$ is known for all times $k$.  The estimation cost $E_k$ is used to measure the discrepancy between the remote estimation error $e_k$ without update ($\gamma_k=0$), which we write as $e_k\notrigk$, and with update, $e_k\trigk$.
%$\eI_k:= x_k - \xI_k$ and with update $\eII_k := x_k - \xII_k$; 
%\ie $E_k = E(\eI_k, \eII_k)$
%for a suitable choice of $E$;
%for example,
Here, we choose
\begin{equation}
E_k = e_k^\transp e_k \notrigk - e_k^\transp e_k \trigk
%E_k = (\eI_k)^\transp \eI_k - (\eII_k)^\transp \eII_k 
\label{eq:Ek_squares}
\end{equation}
comparing the difference in quadratic errors.
%: error $\eII$ will generally be ``smaller'' than $\eI$ because it has more information, thus, the discrepancy between the two captures that cost of estimation that is paid when not communicating.  
%For example,  
%\begin{equation}
%E_k = (\eI_k)^2 - (\eII_k)^2
%\label{eq:Ek_squares_scalar}
%\end{equation}
%was used in \cite{TrCa15} for scalar quantities. This cost measures in terms of quadratic errors how much worse the error without update ($\eI_k$) is, compared to the one  with update ($\eII_k$).

Formally, the triggering decision can then be written as
\begin{equation}
%\text{at time $k$:} \quad
\min_{\gamma_k \in \{0, 1\}}  \gamma_k \commC_k + (1-\gamma_k) E_k .
\label{eq:optProblET_ideal}
\end{equation}
Ideally, one would like to know $e_k\notrigk$ and $e_k\trigk$ exactly when computing the estimation cost in order to determine whether it is worth paying the cost for communication. However, 
%$\eI_k$ and $\eII_k$ 
$e_k$ cannot be computed since the true state is generally unknown (otherwise we would not have to bother with state estimation in the first place).
%(see \eqref{eq:eI}, \eqref{eq:eII}).
As is proposed in \cite{TrCa15}, we consider instead the expectation of $E_k$ conditioned on the data $\Dc_k$ that is available by the decision making agent.  Formally, 
\begin{equation}
%\text{at time $k$:} \quad
\min_{\gamma_k \in \{0, 1\}}  \gamma_k \commC_k + (1-\gamma_k) \, \E[ E_k | \Dc_k ] 
%\min_{\gamma_k \in \{0, 1\}}  \gamma_k \commC_k + (1-\gamma_k) \bar{E}_k
\label{eq:optProblET}
\end{equation}
%with $\bar{E}_k = \E[ E_k | \Dc_k ] $.  
which directly yields the triggering law
\begin{equation}
\text{at time $k$:} \quad \gamma_k = 1  \; \Leftrightarrow \; \E[ E_k | \Dc_k ] \geq C_k .
\label{eq:ETgeneral}
\end{equation}
In \cite{TrCa15}, this framework was used to re-derive common event-triggering mechanisms such as innovation-based triggers \cite{TrDAn11,WuJiJoSh13}, 
% removed: Tr12, Tr17
or variance-based triggers \cite{TrDAn14b,LeDeQu15}, depending on whether the current measurement $y_k$ is included in $\Dc_k$, or not. 
% removed (space, not directly relevant): SiKeNo14

\begin{remark}
The choice of quadratic errors in \eqref{eq:Ek_squares} is only one possibility for measuring the discrepancy between $e_k\notrigk$ and $e_k\trigk$ and quantifying estimation cost. It is motivated from the objective of keeping the squared estimation error small, a common objective in estimation.  The estimation cost in \eqref{eq:Ek_squares} is positive if the squared error $e_k^\transp e_k\notrigk$ (\ie without communication) is larger than $e_k^\transp e_k\trigk$ (with communication), which is to be expected on average.  Moreover, the quadratic error is convenient for the following mathematical analysis.
Finally, the scalar version of \eqref{eq:Ek_squares} was shown in \cite{TrCa15} to yield common known event triggers.
However, other choices than \eqref{eq:Ek_squares} are clearly conceivable, and the subsequent framework can be applied analogously. 
\end{remark}

% When solving  $\bar{E}_k := \E[ E_k | \Dc_k ] $, \eqref{eq:optProblET} then directly yields the \emph{event trigger}
%\begin{equation}
%\text{at time $k$:} \quad \gamma_k = 1  \; \Leftrightarrow \; \E[ E_k | \Dc_k ] \geq C_k ,
%\label{eq:ETgeneral}
%\end{equation}
%which is to be evaluated at time $k$.
%\todo{Introduce $\trigsig_{k|k} = \E[ E_k | \Dc_k ]$.}

%%%%%%%%%%%%%%%%%%%%%%%%%%%%%%%%%%%%%%%%%%%%%%%%%%%%%%%%%%%%%%%%%%%%%%%%%%%%%%%%
\subsection{Predictive triggers}
\label{sec:predTrigger}
This framework can directly be extended to derive a predictive trigger as formulated in \sect \ref{sec:objective}, which makes a communication decision $M$ steps in advance, where $M\!>\!0$ is fixed by the designer.
%It is straightforward to 
%this framework can be extended to let the triggering agent make a communication decision $M$ steps in advance, where $M\!>\!0$ is fixed by the designer.  Namely, 
Hence, we consider the future decision on $\gamma_{k+M}$ and condition the future estimation cost $E_{k+M}$
%optimization problem \eqref{eq:optProblET_ideal}  
on $\Dc_k = \{ \Yall_k, \Uall_k \}$, the data available at the current time $k$.  Introducing $\trigsig_{k+M|k} := \E[ E_{k+M} | \Yall_k, \Uall_k ]$, the optimization problem \eqref{eq:optProblET_ideal} then becomes
%\footnote{The cost $C_k$ is assumed to be known in advance for all times $k$.}
\begin{equation}
%\text{at time $k$:} \quad
\min_{\gamma_{k+M} \in \{0, 1\}}  \gamma_{k+M} \commC_{k+M} + (1-\gamma_{k+M}) \trigsig_{k+M|k} 
\label{eq:optProblPT}
\end{equation}
which yields the \emph{predictive trigger} (PT):
\begin{equation}
\text{at time $k$:} \quad \gamma_{k+M} = 1  \; \Leftrightarrow \; \trigsig_{k+M|k} \geq C_{k+M} .
\label{eq:PTgeneral}
\end{equation}
%Here, we have $\Dc_k = \Yall_k \cup \Gamall_{k+M-1}$, \ie all measurements until $k$ and all triggering decisions that have been made until then.
% until $M$ steps ahead.
In \sect \ref{sec:triggers}, we solve $\trigsig_{k+M|k} = \E[ E_{k+M} | \Yall_k, \Uall_k ]$ for 
%a specific 
the choice of error  
%\eqref{eq:estCostFunction}
\eqref{eq:Ek_squares} 
to obtain an expression for the trigger \eqref{eq:PTgeneral} in terms of the problem parameters.

%%%%%%%%%%%%%%%%%%%%%%%%%%%%%%%%%%%%%%%%%%%%%%%%%%%%%%%%%%%%%%%%%%%%%%%%%%%%%%%%
\subsection{Self triggers}
\label{sec:selfTrigger}
A self trigger computes the next triggering instant at the time when an update is sent.
%at the time, when an update is sent, the next triggering instant is already computed.  
% CUT:
%This allows the sensor to go to sleep, for example, or communicate the next triggering instant alongside the current data to a network manager so as to reconfigure network resources accordingly.  
%
A self triggering law is thus obtained 
% for estimation can be obtained 
 by solving \eqref{eq:PTgeneral} at time $k = \last_k$ for the smallest $M$ such that $\gamma_{k+M} = 1$.  
Here, $\last_k \leq k$ denotes the last triggering time; in the following, we drop `$k$' when clear from context and simply write $\last_k = \last$.  Formally, the \emph{self trigger} (ST) is then given by:
%Then, a \emph{self-trigger} determining at $k=\last$ the next communication instant $\gamma_{k+M} = 1$ is defined by
%\begin{align}
%&\text{trigger at} \,\, k=\last + M \,\, \text{with}  \,\, M \geq 1 \,\, \text{such that} \\
%%\text{find} \,\, &M \geq 1 \,\, \text{such that} \\
%&\quad \E[ E_{\last+1} | \Yall_\last ] < C_{\last+1}, \, \dots , \,
%\E[ E_{\last+M-1} | \Yall_\last ] < C_{\last+M-1}, \nonumber \\
%&\quad \text{and} \,\,\, \E[ E_{\last+M} | \Yall_\last ] \geq C_{\last+M} . \nonumber
%\end{align}
\begin{align}
\!\text{at time $k\!=\!\last$:} \,\,\, &\text{find smallest $M\! \geq\! 1$ s.t.\ $\trigsig_{\last+M|\last} \geq C_{\last+M}$}, \nonumber \\[-1mm]
& \text{set} \, \gamma_{\ell+1} \!=\! \dots  \!=\! \gamma_{\ell+M-1}\!=\!0, \gamma_{\ell+M}\!=\!1.
\label{eq:STgeneral}
\end{align}
%
% Initial version:
%\begin{align}
%\text{at time $k=\last$:} \quad &\text{find smallest $M \geq 1$ s.t.\ $\trigsig_{\last+M|\last} \geq C_{\last+M}$} .
%\label{eq:STgeneral}
%\end{align}
%with $\Dc_\last = \Yall_\last \cup \Gamall_\last$.  

While both the PT and the ST compute the next trigger ahead of time, they represent two different triggering concepts.
% which may be advantageous depending on the application.  
The PT \eqref{eq:PTgeneral} is evaluated at every time step $k$ with a given prediction horizon $M$, whereas the ST \eqref{eq:STgeneral} needs to be evaluated at $k = \last$ only and yields (potentially varying) $M$.  That is, $M$ is a \emph{fixed} design parameter for the PT, and \emph{computed} with the ST.
%In contrast to the PT \eqref{eq:PTgeneral}, which is evaluated at every $k$ using the most recent data $\Yall_k$, the ST needs to be evaluated at $k = \last$ only.  
Which of the two should be used depends on the application (\eg whether continuous monitoring of the error signal is desirable).  
The two types of triggers will be compared in simulations and experiments in subsequent sections.
%In \sect \ref{sec:simulations}, the two concepts are compared in terms of their effectiveness in trading off estimation quality and communication.

\section{Predictive Trigger and Self Trigger}
\label{sec:predAndSelfTrig}
% other: Some Triggers
\label{sec:triggers}
Using the triggering framework of the previous section, we derive concrete instances of the self and predictive trigger for the squared estimation cost \eqref{eq:Ek_squares}.  
%
%Other choices for measuring the discrepancy between $\eI$ and $\eII$ are also conceivable, and the framework can be applied analogously.  The specification \eqref{eq:Ek_squares} is reasonable if keeping the squared estimation error $(e_k)^\transp e_k$ small is of interest, which is a typical objective in estimation.  The estimation cost in \eqref{eq:Ek_squares} is positive if the squared error $(\eI_k)^\transp \eI_k$ (\ie without communication) is larger than $(\eII_k)^\transp \eII_k$ (with communication), which is to be expected on average.  The scalar version of \eqref{eq:Ek_squares} was used in \cite{TrCa15} to derive optimal event triggers.
%
%Before computing the self and predictive trigger, 
To this end, we first determine the PDF of the estimation errors.
%For computing the triggers, the following lemmas are instrumental.

%%%%%%%%%%%%%%%%%%%%%%%%%%%%%%%%%%%%%%%%%%%%%%%%%%%%%%%%%%%%%%%%%%%%%%%%%%%%%%%%
\subsection{Error distributions} 	
\label{sec:errorDistributions}
%Choosing the function $E$ in \eqref{eq:estCostFunction} (\ie specifying how one measures the discrepancy between the estimation errors $\eI_k$ and $\eII_k$), the conditional expectations in \eqref{eq:PTgeneral} and \eqref{eq:STgeneral} can be solved to obtain predictive triggers and self-triggers, respectively.  
In this section, we compute the conditional error PDF $f(e_{k+M} | \Yall_{k}, \Uall_k)$ for the cases $\gamma_{k+M}=0$ and $\gamma_{k+M}=1$,
%conditional error PDFs $f(\eI_{k+M} | \Yall_{k}, \Uall_k)$ and $f(\eII_{k+M} | \Yall_{k}, \Uall_k)$,
 which characterize the distribution of the estimation cost $E_{k+M}$ in \eqref{eq:Ek_squares}.  These results are used in the next section to solve for the triggers \eqref{eq:PTgeneral} and \eqref{eq:STgeneral}.
 % for the specific choice \eqref{eq:Ek_squares} of $E$.

%The framework in the previous section can be used to derive event-, predictive, and self-triggers depending on how the estimation cost $E_k$ is defined; that is, how one measures the cost of having $\eI_k$, instead of $\eII_k$ at the remote estimator.  Since, in general, the estimation cost depends on $E_k$ on $\eI_k$ and $\eII_k$, we first characterize the conditional PDFs of the errors in this section, which is used in the next section to solve for the predictive trigger \eqref{eq:PTgeneral} and self-trigger \eqref{eq:STgeneral} under a specific choice for $E_k$.

Both triggers \eqref{eq:PTgeneral} and \eqref{eq:STgeneral} predict the communication decisions $M$ steps ahead of the current time $k$.
% ($M$ is a design parameter for \eqref{eq:PTgeneral} and computed in case of \eqref{eq:STgeneral}). 
Hence, in both cases, the set of triggering decisions $\Gamall_{k+M}$ can be computed from the data $\Yall_k$, $\Uall_k$.  In the following, it will be convenient to denote the time index of the last nonzero element in $\Gamall_{k+M}$ (i.e., the last planned triggering instant) by $\lastel_k$; for example, for $\Gamall_{10} = \{ \dots, \gamma_8 = 1, \gamma_9=1, \gamma_{10}=0 \}$, $k=6$, and $M=4$, we have $\lastel_{6} = 9$.  It follows that $\lastel_k \geq \last_k$, with equality $\lastel_k = \last_k$ if no trigger is planned for the next $M$ steps.  

% Omitted (space, not necessary?):
%
%Figure \ref{fig:illustrationTimeIndices} illustrates these definitions by means of an example.
%\begin{figure}[tb]
%\centering
%\includegraphics[scale=0.5]{dummy.eps}
%\caption{\todo{add example with time lines} \todo{omit?}}
%\label{fig:illustrationTimeIndices}
%\end{figure}

%For the following, it is convenient to write the current time $k$ as $k=\last + K$, where $\last$ is the last time when a communication was triggered (as previously defined) and $K$ are the steps elapsed since then.  Thus, we seek to characterize the predictive distributions $f(\eI_{k+M} | \Yall_{k}) =$ $f(\eI_{\last+K+M} | \Yall_{\last+K})$ and $f(\eII_{k+M} | \Yall_{k}) =$ $f(\eII_{\last+K+M} | \Yall_{\last+K})$.  These are relevant both for the predictive trigger \eqref{eq:PTgeneral} and the self-trigger \eqref{eq:STgeneral} (with $K=0$). 

The following two lemmas state the sought error PDFs.  
\begin{lemma}
\label{lem:PDF_eII}
For $\gamma_{k+M}=1$, the predicted error $e_{k+M}$ conditioned on $\Yall_k$, $\Uall_k$ is normally distributed with\footnote{The superscripts `c' and `nc' denote the cases `communication' ($\gamma=1$) and `no communication' ($\gamma=0$). \label{fn:notation_c_nc}}
\begin{align}
f(e_{k+M} | \Yall_{k}, \Uall_k) 
&= \Nc(e_{k+M}; \, \eIIhat_{k+M|k}, \PII_{k+M|k} ) \nonumber \\
&= \Nc(e_{k+M}; \, 0, \PKF_{k+M} ) . \label{eq:lem2_PDF_eII}
\end{align}
%
% Previous statement:
%The predicted error $\eII_{k+M}$ conditioned on $\Yall_k$, $\Uall_k$ is normally distributed with
%\begin{align}
%f(\eII_{k+M} | \Yall_{k}, \Uall_k) 
%&= \Nc(\eII_{k+M}; \, \eIIhat_{k+M|k}, \PII_{k+M|k} ) \nonumber \\
%&= \Nc(\eII_{k+M}; \, 0, \PKF_{k+M} ) . \label{eq:lem2_PDF_eII}
%\end{align}
\end{lemma}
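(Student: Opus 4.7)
The plan is straightforward once the triggering condition $\gamma_{k+M}=1$ is unpacked. First, I would apply the remote predictor \eqref{eq:remoteEst} at time $k+M$: the case $\gamma_{k+M}=1$ forces $\xP_{k+M}=\xKF_{k+M}$, so by \eqref{eq:def_remoteError} and \eqref{eq:def_KFerror} I get $e_{k+M}=x_{k+M}-\xKF_{k+M}=\eKF_{k+M}$. The lemma therefore reduces to characterising the distribution of the Kalman filter error $\eKF_{k+M}$ conditioned on $(\Yall_k,\Uall_k)$, which is a purely standard KF quantity.

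Next, I would derive the usual KF error recursion by subtracting \eqref{eq:KF1} and \eqref{eq:KF4} from the state equation \eqref{eq:sys_x} and using $y_{k+j}=Hx_{k+j}+w_{k+j}$, obtaining
\begin{equation*}
\eKF_{k+j}=(I-L_{k+j}H)(A\eKF_{k+j-1}+v_{k+j-1})-L_{k+j}w_{k+j}.
\end{equation*}
The key observation, and the main point worth checking carefully, is that the control input $u_{k+j-1}$ cancels between the state and estimator equations; in particular neither the feedback term $F\xKF$ nor the cross-coupling term $\xi$ in \eqref{eq:controlLaw} enters the error dynamics. Iterating from $j=1$ to $j=M$ then expresses $\eKF_{k+M}$ as a linear combination of $\eKF_k$ and of the future noises $v_k,\dots,v_{k+M-1}$, $w_{k+1},\dots,w_{k+M}$, with deterministic coefficients, since the gains $L_{k+j}$ are generated off-line by \eqref{eq:KF2}, \eqref{eq:KF3_Lk}, \eqref{eq:KF5}.

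The distributional claim then follows immediately. Conditional on $(\Yall_k,\Uall_k)$ the standard KF posterior property gives $\eKF_k\sim\Nc(0,\PKF_k)$, whereas the future process and measurement noises are mutually independent zero-mean Gaussians and, by the independence assumption of \sect \ref{sec:processDynamics}, are also independent of $(\Yall_k,\Uall_k)$. Therefore $\eKF_{k+M}\mid\Yall_k,\Uall_k$ is Gaussian with mean zero, and its covariance satisfies exactly the variance recursion \eqref{eq:KF2}, \eqref{eq:KF5} initialised at $\PKF_k$, so that $\eIIhat_{k+M|k}=0$ and $\PII_{k+M|k}=\PKF_{k+M}$, as asserted. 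The only genuinely non-obvious step in all of this is the input cancellation in the recursion; the rest is routine KF algebra, and since $\PKF_{k+M}$ is a deterministic function of the model parameters and $\PKF_k$, no further subtleties about conditioning on future inputs or measurements arise.
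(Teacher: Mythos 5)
Your proposal is correct, and it shares the paper's key reduction: for $\gamma_{k+M}=1$ the remote error coincides with the KF error, $e_{k+M}=\eKF_{k+M}$, so everything hinges on the conditional distribution of $\eKF_{k+M}$ given $(\Yall_k,\Uall_k)$. Where you diverge is in how that distribution is established. The paper simply invokes the classical KF-theory fact that the filter error is independent of the measurement data (conditional and unconditional error PDFs coincide), citing \cite[p.~41]{AnMo05} and deferring the formal argument to the conference version \cite{Tr16}; your proof instead derives it from first principles via the error recursion $\eKF_{k+j}=(I-L_{k+j}H)(A\eKF_{k+j-1}+v_{k+j-1})-L_{k+j}w_{k+j}$, iterated forward with deterministic gains and noises independent of the conditioning data. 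This buys two things the paper's proof leaves implicit: (i) an explicit verification that the control input --- including the feedback term $F\xKF$ and the coupling term $\xi$ of \eqref{eq:controlLaw} --- cancels between \eqref{eq:sys_x} and \eqref{eq:KF1}, \eqref{eq:KF4}, which is precisely the point where the presence of inputs (new in this article relative to \cite{Tr16}) could have broken the cited result; and (ii) a self-contained argument not resting on external references. The only step you state slightly loosely is that the resulting covariance ``satisfies exactly'' \eqref{eq:KF2}, \eqref{eq:KF5}: the recursion you derive actually produces the Joseph form $(I-L_{k+j}H)(A\PKF_{k+j-1}A^\transp+Q)(I-L_{k+j}H)^\transp+L_{k+j}RL_{k+j}^\transp$, which equals $(I-L_{k+j}H)\PKF_{k+j|k+j-1}$ only because $L_{k+j}$ is the optimal gain \eqref{eq:KF3_Lk}; this is a routine identity, not a gap, but it deserves one line in a written-out proof.
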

\begin{proof}
See Appendix \ref{app:proofeII}.
\end{proof}

\begin{lemma}
\label{lem:PDF_eI}
For $\gamma_{k+M}=0$, the predicted error $e_{k+M}$ conditioned on $\Yall_k$, $\Uall_k$ is normally distributed\footnotemark[1] 
% Should be, but doesn't work: \footnotemark[\ref{fn:notation_c_nc}] 
\begin{equation}
f(e_{k+M} | \Yall_{k}, \Uall_k) = 
\Nc(e_{k+M}; \, \eIhat_{k+M|k}, \PI_{k+M|k} ) \label{eq:lem1_eIpdf}
\end{equation}
with mean and variance given as follows.
%
% Previous version:
%The predicted error $\eI_{k+M}$ conditioned on $\Yall_k$, $\Uall_k$ is normally distributed,
%\begin{equation}
%f(\eI_{k+M} | \Yall_{k}, \Uall_k) = 
%\Nc(\eI_{k+M}; \, \eIhat_{k+M|k}, \PI_{k+M|k} ) \label{eq:lem1_eIpdf}
%\end{equation}
%with mean and variance given as follows.
%
%
% Shorter version:
%
%with mean and variance given by, for $k > \lastel_{k-1}$,
%$\eIhat_{k+M|k} = \Phi_{(k+M-1):k} \, (\xKF_k - \xKF_{k|\last} )$ and
%$\PI_{k+M|k} = \PKF_{k+M|k}$;
%and, for $k \leq \lastel_{k-1}$, 
%$\eIhat_{k+M|k} = 0$  and
%$\PI_{k+M|k} = \PKF_{\lastel+\Delta|\lastel} = \PKF_{k+M|\lastel}$,
%
% Longer (nicer) version:

\underline{Case (i):} $k > \lastel_{k-1}$ (\ie no trigger planned within prediction horizon)
\begin{align}
\eIhat_{k+M|k} &= \bar{A}^M \, \Big(\xKF_k - \bar{A}^{k-\last} \xKF_{\last} - \sum_{m=1}^{k-\last} \bar{A}^{k-\ell-m} B \xi_{\ell+m-1}  \Big) \label{eq:lem1_eImean} \\
\PI_{k+M|k} &= \PKF_{k+M|k} + \Xi_{k,M} \label{eq:lem1_eIvar} 
\end{align}
where
\begin{align}
\Xi_{k,M} &:= \sum_{m=1}^{M-1} G_{M-m-1} L_{k+m} \tilde{P}_{k+m} L_{k+m}^\transp G_{M-m-1}^\transp , \\
\tilde{P}_{k}&:= HA \PKF_{k-1} A^\transp H^\transp + H Q H^\transp + R , \label{eq:Ptilde_def} \\
G_m&:= A G_{m-1} + BF \bar{A}^m, \quad G_0 := BF , \label{eq:G_def}
\end{align}
$L_k$ is the KF gain \eqref{eq:KF3_Lk}, and $\PKF_{k+M|k}$ is the KF prediction variance in \eqref{eq:PDFstatePredM}.
%\begin{align}
%\eIhat_{k+M|k} &= \Phi_{(k+M-1):k} \, (\xKF_k - \xKF_{k|\last} ) \label{eq:lem1_eImean} \\
%\PI_{k+M|k} &= \PKF_{k+M|k}  \label{eq:lem1_eIvar} 
%\end{align}
% was introduced.
%and $\Phi_{k_2:k_1}$ and $\Vok{k_1}$ are as defined in \eqref{eq:transitionPhi} and \eqref{eq:KF2}, and $\lastel$ is used as shorthand for $\lastel_{k-1}$.

\underline{Case (ii):} $k \leq \lastel_{k-1}$ (\ie trigger planned within horizon)
\begin{align}
\eIhat_{k+M|k} &= 0  \label{eq:lem1_eImean_b} \\
\PI_{k+M|k} &= \PKF_{\lastel+\Delta|\lastel} + \Xi_{\lastel,\Delta}
%= \PKF_{k+M|\lastel}  
\label{eq:lem1_eIvar_b}
\end{align}
where 
%$\Phi$ is as defined in \eqref{eq:transitionPhi}, 
$\lastel$ is used as shorthand for $\lastel_{k-1}$, and $\Delta := k+M-\kappa_{k-1}$.
%, and $\PKF_{\lastel+\Delta|\lastel}$ is the KF prediction variance in \eqref{eq:PDFstatePredM}.
\end{lemma}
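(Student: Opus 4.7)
The plan is to derive a closed-form expression for $e_{k+M}$ under the hypothesis $\gamma_{k+M}=0$, write it as a deterministic term plus a sum of zero-mean Gaussian blocks, and then read off the conditional mean and variance in \eqref{eq:lem1_eIpdf}. I would first establish the one-step error recursion: in a no-trigger step $\xP_{k+1}=\bar{A}\xP_k+B\xi_k$, and the plant obeys $x_{k+1}=Ax_k+BF\xKF_k+B\xi_k+v_k$ via \eqref{eq:controlLaw}, so subtraction gives $e_{k+1}=A\,e_k+BF\,d_k+v_k$ with $d_k:=\xKF_k-\xP_k$. The KF update $\xKF_{k+1}=\bar{A}\xKF_k+B\xi_k+L_{k+1}\nu_{k+1}$ (with innovation $\nu_{k+1}$) gives the parallel recursion $d_{k+1}=\bar{A}\,d_k+L_{k+1}\nu_{k+1}$.

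\emph{Case (i).} I would iterate both recursions over the $M$ no-trigger steps from $k$, obtaining $e_{k+M}=A^M e_k+\sum_{m=1}^{M}A^{M-m}(BF\,d_{k+m-1}+v_{k+m-1})$, then substitute $d_{k+m-1}=\bar{A}^{m-1}d_k+\sum_{j=1}^{m-1}\bar{A}^{m-1-j}L_{k+j}\nu_{k+j}$ and swap the order of summation. The coefficient of $L_{k+j}\nu_{k+j}$ becomes $\sum_{p=0}^{M-j-1}A^{M-j-1-p}BF\bar{A}^{p}=G_{M-j-1}$, which verifies the recursion for $G_m$ in the lemma by induction. Telescoping $A^{M-p}\bar{A}^{p}-A^{M-p-1}\bar{A}^{p+1}=-A^{M-p-1}BF\bar{A}^{p}$ then yields the identity $\bar{A}^M=A^M+G_{M-1}$. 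Splitting $e_k=\eKF_k+d_k$ and invoking this identity collapses the two $d_k$-contributions into a single $\bar{A}^M d_k$, producing
\[
e_{k+M}=A^M\eKF_k+\bar{A}^M d_k+\sum_{j=1}^{M-1}G_{M-j-1}L_{k+j}\nu_{k+j}+\sum_{m=1}^{M}A^{M-m}v_{k+m-1}.
\]
Taking the conditional expectation given $\Yall_k,\Uall_k$ annihilates every zero-mean block (KF unbiasedness for $\eKF_k$; future innovations and process noise), leaving $\eIhat_{k+M|k}=\bar{A}^M d_k$; unrolling $\xP_k$ from the last trigger $\last$ (where $\xP_\last=\xKF_\last$) reproduces \eqref{eq:lem1_eImean}. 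For the variance \eqref{eq:lem1_eIvar}, I would exploit the pairwise orthogonality of the three stochastic blocks relative to this decomposition to assemble $\PKF_{k+M|k}$ (from $\eKF_k$ plus the $v$-noise, recognizing the standard $M$-step KF prediction variance) and $\Xi_{k,M}$ (from the innovation block). Gaussianity follows because each summand is a linear function of jointly Gaussian inputs.

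\emph{Case (ii).} A planned trigger at $\lastel\in[k,k+M-1]$ resets the remote estimate, making $\xP_\lastel=\xKF_\lastel$ and hence $d_\lastel=0$, $e_\lastel=\eKF_\lastel$. Because no further triggers fire in $[\lastel+1,k+M]$, the Case-(i) expansion applies verbatim with $(k,M)$ replaced by $(\lastel,\Delta)$, $\Delta=k+M-\lastel$. Since $\lastel\ge k$, each surviving block ($\eKF_\lastel$, $\nu_{\lastel+j}$, $v_{\lastel+m-1}$) has zero conditional mean given $\Yall_k,\Uall_k$, giving $\eIhat_{k+M|k}=0$; the variance inherits the Case-(i) form, yielding \eqref{eq:lem1_eIvar_b}.

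The delicate step is the variance computation: the innovation $\nu_{k+j}=HA\eKF_{k+j-1}+Hv_{k+j-1}+w_{k+j}$ couples the KF-error block to the future-process-noise block, so the three groupings in my decomposition are not jointly independent in the naive sense. Getting them to factor as $\PKF_{k+M|k}+\Xi_{k,M}$ requires the KF identity $\PKF_{k+m}+L_{k+m}\tilde{P}_{k+m}L_{k+m}^\transp=\PKF_{k+m|k+m-1}$ to absorb the cross-covariance, and I would calibrate the algebra on $M=1,2$ to fix the exact structure before asserting the general pattern by induction.
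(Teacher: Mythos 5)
Your algebraic setup is correct and, in substance, equivalent to the paper's: your closed form $G_n=\sum_{p=0}^{n}A^{n-p}BF\bar{A}^{p}$, the telescoping identity $\bar{A}^M=A^M+G_{M-1}$, the resulting decomposition of $e_{k+M}$, and the mean computation reproduce exactly the innovation-form identities \eqref{eq:induction1}, \eqref{eq:induction2} that the paper establishes by induction, and your Case~(ii) reduction (reset at $\lastel$, reuse Case~(i) with $(\lastel,\Delta)$) matches the paper's argument. The genuine gap is the step you defer. The cross-covariances you flag are real, they do \emph{not} cancel, and the identity $\PKF_{k+m}+L_{k+m}\tilde{P}_{k+m}L_{k+m}^\transp=\PKF_{k+m|k+m-1}$ (which is true) has no bearing on them. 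Your own calibration plan at $M=2$ exposes this: with $z_{k+1}=HA\eKF_k+Hv_k+w_{k+1}$, the prediction-error block $A^2\eKF_k+Av_k+v_{k+1}$ and the innovation block $BFL_{k+1}z_{k+1}$ satisfy
\begin{equation*}
\E\big[(A^2\eKF_k+Av_k+v_{k+1})\,z_{k+1}^\transp\,\big|\,\Yall_k,\Uall_k\big]\,L_{k+1}^\transp F^\transp B^\transp
= A\PKF_{k+1|k}H^\transp L_{k+1}^\transp F^\transp B^\transp ,
\end{equation*}
which is nonzero whenever $BF\neq 0$ (it vanishes only in the input-free situation of the corollary). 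Exact second-moment algebra on your decomposition therefore yields $\PKF_{k+2|k}+\Xi_{k,2}+A\PKF_{k+1|k}H^\transp L_{k+1}^\transp F^\transp B^\transp+BFL_{k+1}H\PKF_{k+1|k}A^\transp$, and no KF identity removes the last two terms; your induction would stall at its very first calibration step.

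The paper reaches \eqref{eq:lem1_eIvar} by a route that never confronts these terms: it applies the law of total variance with the intermediate conditioning set $\{\Yall_k,\Uall_{k+M}\}$ and invokes the prediction PDF \eqref{eq:PDFstatePredM}. Under that premise the inner conditional variance is the data-independent constant $\PKF_{k+M|k}$, the inner conditional mean is $\xKF_{k+M|k}-\xP_{k+M}$, and the outer variance of $\xKF_{k+M|k}$ equals $\Xi_{k,M}$ by \eqref{eq:induction2} together with the innovations being zero-mean, mutually uncorrelated, with conditional variance $\tilde{P}_{k+m}$ as in \eqref{eq:Ptilde_def}; the two contributions then add with no cross term. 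Observe that this premise treats the future inputs $u_k,\dots,u_{k+M-1}$ --- which in closed loop are functions of future measurements through $F\xKF$ --- as exogenous when conditioning, and that is precisely the decoupling between the prediction error and the future innovations which your exact expansion shows does not hold automatically. Consequently, you cannot complete the proof of the lemma as stated by raw orthogonality and covariance algebra on the primitive noises; you must switch to the paper's tower-property argument with \eqref{eq:PDFstatePredM} taken as the starting point (equivalently, postulate that $x_{k+M}-\xKF_{k+M|k}$ is conditionally uncorrelated with the innovations entering $\xKF_{k+M|k}$), rather than hope the cross terms are absorbed by a KF identity.
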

\begin{proof}
See Appendix \ref{app:proofeI}.
\end{proof}

% Proofs moved to appendix
%
%We first prove \Lem \ref{lem:PDF_eII}, which will be used in the proof of \Lem \ref{lem:PDF_eI}.
%
%\begin{proof} {\it(Lemma \ref{lem:PDF_eII})}
%\input{content/Proof2.tex}
%\end{proof}
%
%\begin{proof} {\it(Lemma \ref{lem:PDF_eI})}
%\input{content/Proof1.tex}
%\end{proof}

A simpler formula for \Lem \ref{lem:PDF_eI} can be given for the case of an autonomous system \eqref{eq:sys_x} without input: 
%(\ie $B_i \uik{i}{k-1} = 0$):  
\begin{corollary}
For \eqref{eq:sys_x} with $B_i\uik{i}{k-1} = 0$, \eqref{eq:lem1_eIpdf} holds for case (i) with
% \eqref{eq:lem1_eImean_b} and \eqref{eq:lem1_eIvar_b} for $k \leq \lastel_{k-1}$ (unchanged), and, for $k > \lastel_{k-1}$:
\begin{align}
\eIhat_{k+M|k} &= A^M \, (\xKF_k - A^{k-\last} \xKF_{\last} ) \label{eq:lem1_eImean_noInput} \\
\PI_{k+M|k} &= \PKF_{k+M|k} \label{eq:lem1_eIvar_noInput} 
\end{align}
and for case (ii) with
\begin{align}
\eIhat_{k+M|k} &= 0  \label{eq:lem1_eImean_b_noInput} \\
\PI_{k+M|k} &= \PKF_{\lastel+\Delta|\lastel} .
\label{eq:lem1_eIvar_b_noInput}
\end{align}
\end{corollary}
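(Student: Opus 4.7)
The plan is to specialize the general formulas of \Lem \ref{lem:PDF_eI} to the case where the input term vanishes and show that this collapses the closed-loop state transition matrix to the open-loop one while killing the auxiliary correction terms. Since the corollary is stated for the autonomous setting $B_i u^i_{k-1} \equiv 0$, I would first note that the control law \eqref{eq:controlLaw} is effectively absent, so without loss of generality one may take $F = 0$ and $\xi_k = 0$, giving $BF = 0$ and $B\xi_k = 0$. In particular, the closed-loop matrix reduces to $\bar{A} = A + BF = A$.

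Next, I would substitute these simplifications into the mean formula \eqref{eq:lem1_eImean} of Case (i). Every $\bar{A}$ is replaced by $A$, and the sum $\sum_{m=1}^{k-\ell} \bar{A}^{k-\ell-m} B \xi_{\ell+m-1}$ vanishes term-by-term, yielding \eqref{eq:lem1_eImean_noInput} directly.

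For the variance, the key observation is that $G_m = 0$ for all $m \geq 0$. The base case $G_0 = BF = 0$ follows from \eqref{eq:G_def}, and the inductive step $G_m = A G_{m-1} + BF \bar{A}^m = A \cdot 0 + 0 = 0$ is immediate. Consequently $\Xi_{k,M} = 0$, since every summand in its definition contains $G_{M-m-1}$ as an outer factor. Plugging this into \eqref{eq:lem1_eIvar} gives \eqref{eq:lem1_eIvar_noInput}, and the analogous reasoning applied to $\Xi_{\lastel,\Delta}$ turns \eqref{eq:lem1_eIvar_b} into \eqref{eq:lem1_eIvar_b_noInput}. The mean in Case (ii) is zero by \eqref{eq:lem1_eImean_b} regardless of input, so \eqref{eq:lem1_eImean_b_noInput} carries over verbatim.

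There is no genuine obstacle in this argument; the corollary is essentially a sanity check, showing that when control is absent the remote predictor \eqref{eq:remoteEst} degenerates into open-loop propagation of the KF mean and the error covariance collapses to the standard KF prediction variance. The only detail worth stating clearly is the justification for setting $F = 0$ and $\xi_k = 0$ in the general formulas of \Lem \ref{lem:PDF_eI}, which follows from the fact that those formulas depend on $B$, $F$, and $\xi$ only through the products $BF$ and $B\xi_k$, both of which vanish under the autonomous assumption.
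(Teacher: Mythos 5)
Your proof is correct and takes essentially the same approach as the paper, whose entire proof reads: ``Taking $B=0$ yields $\bar{A}=A$ and $\Xi_{k,M}=0$ and thus the result.'' Your version merely elaborates the same specialization of \Lem \ref{lem:PDF_eI} with more care---deriving $BF=0$ and $B\xi_k=0$ from the stated hypothesis $B_i\uik{i}{k-1}=0$ rather than assuming $B=0$ outright, and verifying $G_m=0$ (hence $\Xi_{k,M}=0$) by a short induction that the paper leaves implicit.
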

\begin{proof}
Taking $B=0$ yields $\bar{A}=A$ and $\Xi_{k,M} = 0$ and thus the result.
\end{proof}
We thus conclude that the extra term $\Xi_{k,M}$ in the variance \eqref{eq:lem1_eIvar} stems from additional uncertainty about not exactly knowing future inputs.  
%This disinction is irrelevant in case (i) as a future communication is already schedule, leading to a reset to the KF estimate (which includes full information).

%%%%%%%%%%%%%%%%%%%%%%%%%%%%%%%%%%%%%%%%%%%%%%%%%%%%%%%%%%%%%%%%%%%%%%%%%%%%%%%%
\subsection{Self trigger}
The ST law \eqref{eq:STgeneral} is stated for a general estimation error $\trigsig_{\last+M|\last}$.  With the preceding lemmas, we can now solve for $\trigsig_{\last+M|\last}$ and obtain the concrete self triggering rule for the quadratic error \eqref{eq:Ek_squares}.
%Using the previous results, we solve \eqref{eq:STgeneral} with \eqref{eq:Ek_squares} to obtain a self triggering rule.
%To obtain a self trigger corresponding to the error definition \eqref{eq:Ek_squares}, we solve \eqref{eq:STgeneral} using the results from the previous section.  
\begin{proposition}
For the quadratic error \eqref{eq:Ek_squares}, the self trigger (ST) \eqref{eq:STgeneral} becomes: 
\begin{align}
&\text{find smallest $M \geq 1$ s.t.} \nonumber \\[-1mm]
&\text{$\trace( \PKF_{\last+M|\last} + \Xi_{\last,M} - \PKF_{\last+M} ) \geq C_{\last+M}$}; \nonumber \\[-1mm]
& \text{set} \,\,\, \gamma_{\ell+1} \!=\! \dots  \!=\! \gamma_{\ell+M-1}\!=\!0, \gamma_{\ell+M}\!=\!1.
\label{eq:STsquaredError}
\end{align}
\end{proposition}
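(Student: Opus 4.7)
The plan is to specialize the abstract self-triggering rule \eqref{eq:STgeneral}, which only requires $\trigsig_{\last+M|\last} = \E[E_{\last+M} | \Yall_\last, \Uall_\last]$, to the quadratic estimation cost \eqref{eq:Ek_squares}. By linearity of conditional expectation, this quantity decomposes as
\begin{equation}
\trigsig_{\last+M|\last} = \E\bigl[e_{\last+M}^\transp e_{\last+M} \notrigvar{\last+M} \,\big|\, \Yall_\last, \Uall_\last\bigr] - \E\bigl[e_{\last+M}^\transp e_{\last+M} \trigvar{\last+M} \,\big|\, \Yall_\last, \Uall_\last\bigr] .
\end{equation}
Each term is the expected squared norm of a Gaussian, and the standard identity $\E[e^\transp e] = \trace(\Var(e)) + \|\E[e]\|^2$ therefore turns the computation into bookkeeping with the mean and variance supplied by Lemmas \ref{lem:PDF_eII} and \ref{lem:PDF_eI}.

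The $\gamma_{\last+M}=1$ term is immediate from Lemma \ref{lem:PDF_eII}: $e_{\last+M}\sim\Nc(0,\PKF_{\last+M})$ gives expected squared norm $\trace(\PKF_{\last+M})$. The key step for the $\gamma_{\last+M}=0$ term is identifying the correct branch of Lemma \ref{lem:PDF_eI} at $k=\last$. In the self-triggered schedule one has $\gamma_\last=1$ (the current trigger) together with the hypothesized $\gamma_{\last+1}=\dots=\gamma_{\last+M}=0$, so the last nonzero entry of $\Gamall_{\last-1+M}$ is at $\last$, giving $\lastel_{\last-1}=\last$. Since $\last\leq\lastel_{\last-1}$, case (ii) of Lemma \ref{lem:PDF_eI} applies with $\lastel=\last$ and $\Delta = \last+M-\last = M$, yielding $e_{\last+M}\sim\Nc(0,\PKF_{\last+M|\last}+\Xi_{\last,M})$ and expected squared norm $\trace(\PKF_{\last+M|\last}+\Xi_{\last,M})$.

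Subtracting the two contributions and applying linearity of trace gives $\trigsig_{\last+M|\last} = \trace(\PKF_{\last+M|\last}+\Xi_{\last,M}-\PKF_{\last+M})$, and inserting this into \eqref{eq:STgeneral} produces exactly the rule \eqref{eq:STsquaredError}. The only non-routine point is the case-selection argument just above; once it is settled, the proposition follows as a one-line specialization of the framework, combining the Gaussian second-moment identity with the predicted error distributions of Lemmas \ref{lem:PDF_eII} and \ref{lem:PDF_eI}. I do not anticipate any further technical obstacles, in particular the zero-mean property in both branches eliminates the otherwise-present $\|\E[e]\|^2$ terms and is what yields the clean trace-of-covariances form.
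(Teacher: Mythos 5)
Your proposal is correct and follows essentially the same route as the paper's proof: decompose $\trigsig_{\last+M|\last}$ into the two conditional expectations, apply $\E[e^\transp e] = \norm{\E[e]}^2 + \trace(\Var[e])$, and invoke Lemma \ref{lem:PDF_eII} for the $\gamma_{\last+M}=1$ branch and Lemma \ref{lem:PDF_eI} evaluated at $k=\last=\lastel_{k-1}$ (i.e., case (ii) with $\Delta=M$) for the $\gamma_{\last+M}=0$ branch, with the zero means eliminating the bias terms. Your explicit justification of why case (ii) applies is a detail the paper only states parenthetically, and you resolve it correctly.
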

\begin{proof}
Applying \Lem \ref{lem:PDF_eII} and \Lem \ref{lem:PDF_eI} (for $k = \last = \lastel_{k-1}$), we obtain
\begin{align}
&\trigsig_{\last+M |\last} 
=  \E\!\big[ \, e_{\last+M}^\transp e_{\last+M}\notrigvar{\last+M} \, \big| \, \Yall_\last, \Uall_\last \, \big] \nonumber \\
&\phantom{===}  - \E\!\big[\, e_{\last+M}^\transp e_{\last+M}\trigvar{\last+M} \, \big| \, \Yall_\last, \Uall_\last \, \big] \nonumber \\
&\phantom{=}= \norm{\eIhat_{\last+M|\last}}^2  - \norm{\eIIhat_{\last+M|\last}}^2 
+ \trace(\PI_{\last+M|\last} - \PII_{\last+M|\last}) \nonumber \\
&\phantom{=}= \trace( \PKF_{\last+M|\last} + \Xi_{\last,M} - \PKF_{\last+M} )
\end{align}
%
% Previous:
%\begin{align}
%&\trigsig_{\last+M |\last} 
%=  \E[ (\eI_{\last+M})^\transp \eI_{\last+M} | \Yall_\last, \Uall_\last ] - \E[ (\eII_{\last+M})^\transp \eII_{\last+M} | \Yall_\last, \Uall_\last  ] \nonumber \\
%&\phantom{=}= \norm{\eIhat_{\last+M|\last}}^2  - \norm{\eIIhat_{\last+M|\last}}^2 
%+ \trace(\PI_{\last+M|\last} - \PII_{\last+M|\last}) \nonumber \\
%&\phantom{=}= \trace( \PKF_{\last+M|\last} + \Xi_{\last,M} - \PKF_{\last+M} )
%\end{align}
where $\E[e^\transp e] 
%= (\E[e])^\transp \E[e] + \trace(\Var[e]) 
= \norm{\E[e]}^2 + \trace(\Var[e])$ with $\norm{\cdot}$ the Euclidean norm was used.
% E.g. Matrix-Cookbook 
% http://www.math.uwaterloo.ca/~hwolkowi/matrixcookbook.pdf
% Nov. 2012
% eq. (327)
\end{proof}

% LTI system (not required anymore, with new notation)
%
%For a time-invariant system \eqref{eq:sys_x} with $A_k = A$, $Q_k = Q$ for all $k$,
%%\begin{equation}
%%A_k = A, Q_k = Q, \quad \forall \, k,
%%\end{equation}
%the ST simplifies to 
%\begin{align}
%\gamma_{\last+M} = 1 \,\, \Leftrightarrow \,\, \trace\big( \Vo^M (\PKF_{\last}) - \PKF_{\last+M} \big) \geq C_{\last+M} 
%\label{eq:STsquaredErrorLTI}
%\end{align}
%where $\Vo^M$ denotes the $M$-times application of $\Vo(X) := A X A^\transp + Q$.

The self triggering rule is intuitive: a communication is triggered when the uncertainty of the open-loop estimator
%state predictions 
(prediction variance $\PKF_{\last+M|\last}+\Xi_{\last,M}$) 
exceeds the closed-loop uncertainty (KF variance $\PKF_{\last+M}$) by more than the cost of communication.  The estimation mean does not play a role here, since it is zero in both cases for $k = \lastel$.
%$\eI$ and $\eII$ have zero mean for $k = \lastel$.
% (\cf lemmas \ref{lem:PDF_eI} and \ref{lem:PDF_eII}).

%%%%%%%%%%%%%%%%%%%%%%%%%%%%%%%%%%%%%%%%%%%%%%%%%%%%%%%%%%%%%%%%%%%%%%%%%%%%%%%%
\subsection{Predictive trigger}
Similarly, we can employ lemmas \ref{lem:PDF_eII} and \ref{lem:PDF_eI} to compute the predictive trigger \eqref{eq:PTgeneral}.  
\begin{proposition}
For the quadratic error \eqref{eq:Ek_squares}, the predictive trigger (PT) \eqref{eq:PTgeneral} becomes, for $k > \lastel_{k-1}$, 
\begin{align}
&\gamma_{k+M} = 1 \,\, \Leftrightarrow \,\,  
\norm{ \bar{A}^M (\xKF_k  - \bar{A} \xP_{k-1} - B \xi_{k-1} ) }^2 \nonumber \\
&\qquad+ \trace\big( \PKF_{k+M|k} + \Xi_{k,M} - \PKF_{k+M} \big) \geq C_{k+M} \label{eq:PTsquaredError1}
\end{align}
and, for $k \leq \lastel_{k-1}$,
\begin{align}
&\gamma_{k+M} = 1 \,\, \Leftrightarrow \,\,  
\trace\big( \PKF_{\lastel+\Delta|\lastel} + \Xi_{\lastel,\Delta} - \PKF_{\lastel + \Delta} \big) \geq C_{\lastel + \Delta}  .
\label{eq:PTsquaredError2} 
\end{align}
with $\Delta$ as defined in Lemma~\ref{lem:PDF_eI}.
\end{proposition}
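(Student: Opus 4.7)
The plan is to mirror the proof of the preceding self-trigger proposition, namely to evaluate $\trigsig_{k+M|k} := \E[E_{k+M} \mid \Yall_k, \Uall_k]$ for the quadratic cost \eqref{eq:Ek_squares} by applying Lemmas~\ref{lem:PDF_eII} and~\ref{lem:PDF_eI} and the moment identity $\E[e^\transp e] = \norm{\E[e]}^2 + \trace(\Var[e])$. Writing
\begin{align*}
\trigsig_{k+M|k} &= \norm{\eIhat_{k+M|k}}^2 + \trace(\PI_{k+M|k}) \\
&\quad - \norm{\eIIhat_{k+M|k}}^2 - \trace(\PII_{k+M|k}),
\end{align*}
and noting from Lemma~\ref{lem:PDF_eII} that $\eIIhat_{k+M|k} = 0$ and $\PII_{k+M|k} = \PKF_{k+M}$, the two cases of the proposition will follow from the two cases of Lemma~\ref{lem:PDF_eI} after substitution into the general rule \eqref{eq:PTgeneral}.

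For case~(ii) ($k \leq \lastel_{k-1}$) the proof is essentially immediate: Lemma~\ref{lem:PDF_eI}\,(ii) yields $\eIhat_{k+M|k} = 0$ and $\PI_{k+M|k} = \PKF_{\lastel+\Delta|\lastel} + \Xi_{\lastel,\Delta}$, and using $k+M = \lastel+\Delta$ to rewrite $\PKF_{k+M}$ as $\PKF_{\lastel+\Delta}$ gives exactly \eqref{eq:PTsquaredError2}.

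The only non-routine step is the simplification of the mean in case~(i) ($k > \lastel_{k-1}$) to the compact form appearing in \eqref{eq:PTsquaredError1}. Here I would argue as follows: the hypothesis $k > \lastel_{k-1}$ forces $\lastel_{k-1} = \last$, since a planned trigger in $[k,k+M-1]$ would exceed $k-1$ but still not exceed $k-1$ is impossible; more concretely, it means all of $\gamma_{\last+1}, \dots, \gamma_{k+M-1}$ equal zero, so that the remote predictor \eqref{eq:remoteEst} has been iterated purely open-loop from the reset $\xP_\last = \xKF_\last$. Solving the recursion $\xP_t = \bar{A}\xP_{t-1} + B\xi_{t-1}$ with this initial condition gives
\begin{equation*}
\xP_k = \bar{A}^{k-\last}\xKF_\last + \sum_{m=1}^{k-\last}\bar{A}^{k-\last-m} B\, \xi_{\last+m-1},
\end{equation*}
which is precisely the bracketed term subtracted from $\xKF_k$ in \eqref{eq:lem1_eImean}. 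Hence $\eIhat_{k+M|k} = \bar{A}^M(\xKF_k - \xP_k)$, and since $\gamma_k = 0$ is planned, one further step of the same recursion gives $\xP_k = \bar{A}\xP_{k-1} + B\xi_{k-1}$, producing the mean $\bar{A}^M(\xKF_k - \bar{A}\xP_{k-1} - B\xi_{k-1})$ required by \eqref{eq:PTsquaredError1}. Combined with $\PI_{k+M|k} = \PKF_{k+M|k} + \Xi_{k,M}$ from Lemma~\ref{lem:PDF_eI}\,(i), the moment identity completes the derivation.

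I expect the main (and really only) obstacle to be the bookkeeping needed to identify the sum in \eqref{eq:lem1_eImean} with the iterated predictor $\xP_k$; once this identification is made, everything reduces to plugging the Gaussian moments from the two lemmas into the trace/norm decomposition and matching terms, with no further algebra required.
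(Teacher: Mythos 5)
Your proposal is correct and takes essentially the same approach as the paper's proof: both evaluate $\trigsig_{k+M|k}$ via Lemmas~\ref{lem:PDF_eII} and~\ref{lem:PDF_eI} and the identity $\E[e^\transp e] = \norm{\E[e]}^2 + \trace(\Var[e])$, and both settle case~(i) by recognizing the bracketed term in \eqref{eq:lem1_eImean} as the open-loop iterate of the remote predictor \eqref{eq:remoteEst}, \ie $\bar{A}^{k-\last}\xKF_\last + \sum_{m=1}^{k-\last}\bar{A}^{k-\last-m}B\xi_{\last+m-1} = \xP_k = \bar{A}\xP_{k-1} + B\xi_{k-1}$. Incidentally, your bookkeeping here is the more careful one: you write the identity with $\bar{A}\xP_{k-1}$, consistent with the statement \eqref{eq:PTsquaredError1}, whereas the paper's proof text writes $A\xP_{k-1}$, which is an apparent typo since the recursion \eqref{eq:remoteEst} with $\gamma_k=0$ uses the closed-loop matrix $\bar{A}$.
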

\begin{proof}
For $k > \lastel_{k-1}$ (\ie the last scheduled trigger occurred in the past), we obtain from lemmas \ref{lem:PDF_eII} and \ref{lem:PDF_eI}
\begin{align}
%\E[ & E_{k+M} | \Yall_k ] 
\trigsig_{k+M|k} 
%&=  \E[ (\eI_{k+M})^\transp \eI_{k+M} | \Yall_k, \Uall_\last ]  - \E[ (\eII_{k+M})^\transp \eII_{k+M} | \Yall_k, \Uall_\last ] \nonumber \\
&= \norm{ \bar{A}^M (\xKF_k - A \xP_{k-1} -B \xi_{k-1} ) }^2 \nonumber \\
&\phantom{=} 
+ \trace\big( \PKF_{k+M|k} + \Xi_{k,M} - \PKF_{k+M} \big),
\label{eq:PTsquaredError1_E}
\end{align}
where we used $\bar{A}^{k-\last} \xKF_{\last} + \sum_{m=1}^{k-\last} \bar{A}^{k-\ell-m} B \xi_{\ell+m-1}  = A \xP_{k-1} + B \xi_{k-1}$, which follows from the definition of the remote estimator \eqref{eq:remoteEst} with $\gamma_k = 0$ for $k > \last$.

Similarly, for $k \leq \lastel_{k-1}$,
% (\ie a trigger is scheduled now or in future), 
we obtain
$\trigsig_{k+M|k} = \trace\big(  \PKF_{\lastel + \Delta|\lastel} + \Xi_{\lastel,\Delta} - \PKF_{\lastel + \Delta} \big)$.
%\begin{align}
%%\E[ & E_{k+M} | \Yall_k ] \nonumber \\
%\trigsig_{k+M|k} &= \trace\big(  \PKF_{\lastel + \Delta|\lastel} - \PKF_{\lastel + \Delta} \big) .
%\label{eq:PTsquaredError2_E}
%\end{align}
\end{proof}

% LTI system (not required anymore, with new notation)
%
%For a time-invariant system, the triggering rules can again be simplified and are given, for $k > \lastel_{k-1}$, by
%\begin{align}
%&\gamma_{k+M} = 1 \,\, \Leftrightarrow \,\,  \nonumber \\
%&\norm{ A^M (\xKF_k \! -\! A \hat{x}_{k-1} ) }^2 
%%&\norm{ A^M (\xKF_k \! -\! A^{k-\last} \xKF_\last ) }^2 
% + \trace\big( \Vo^M (\PKF_k) \!-\! \PKF_{k+M} \big) \geq C_{k+M} \label{eq:PTsquaredErrorLTI1}
%% &\norm{ A^M (\xKF_k - A^K \xKF_\last ) }^2 
%% + \trace\big( \Vo^M (\PKF_k) - \PKF_{k+M} \big) \geq C_{k+M} \nonumber 
%\end{align}
%and, for $k \leq \lastel_{k-1}$, by
%\begin{align}
%&\gamma_{k+M} = 1 \,\, \Leftrightarrow \,\,  
%%\trace\big( \Vo^{k+M-\kappa} (\PKF_{\lastel}) - \PKF_{k+M} \big) \geq C_{k+M} .
%\trace\big( \Vo^{\Delta} (\PKF_{\lastel}) - \PKF_{\lastel + \Delta} \big) \geq C_{\lastel + \Delta} 
%%=  C_{k+M} 
%\label{eq:PTsquaredErrorLTI2} 
%\end{align}
%where $\Delta := k+M-\kappa$. 

%For a time-invariant system, this simplifies to, for $k > \lastel_{k-1}$,
%\begin{align}
%\E[ & E_{k+M} | \Yall_k ] \nonumber \\
%&= \norm{ A^M (\xKF_k - A^K \xKF_\last ) }^2 
% + \trace\big( \Vo^M (\PKF_k) - \PKF_{k+M} \big)
% \label{eq:PTsquaredErrorLTI1}
%\end{align}
%and, for $k \leq \lastel_{k-1}$,
%\begin{align}
%\E[ & E_{k+M} | \Yall_k ] 
%= \trace\big( \Vo^M (\PKF_{\lastel}) - \PKF_{k+M} \big) 
%\label{eq:PTsquaredErrorLTI2}
%\end{align}
%where $K:= k-\last$ in \eqref{eq:PTsquaredErrorLTI1} is the difference between last trigger and current time.

Similar to the ST \eqref{eq:STsquaredError}, the second term in the PT \eqref{eq:PTsquaredError1} relates the $M$-step open-loop prediction variance $\PKF_{k+M|k} + \Xi_{k,M}$ to the closed-loop variance $\PKF_{k+M}$.  However, now the reference time is the current time $k$, rather than the last transmission $\last$, because the PT exploits data until $k$.
%, while the ST makes the prediction without additional data after $\last$.  
In contrast to the ST, the PT also includes a mean term (first term in \eqref{eq:PTsquaredError1}).  When conditioning on new measurements $\Yall_k$ ($k>\last$), the remote estimator (which uses only data until $\last$) is biased; that is, the mean 
%of $\eI$ in 
%\eqref{eq:lem1_eIpdf}
\eqref{eq:lem1_eImean} 
is non-zero.  The bias term captures the difference in the mean estimates of the remote estimator ($A \xP_{k-1} + B \xi_{k-1}$) and the KF ($\xKF_k$), both predicted forward by $M$ steps.
This bias 
%is reflected in the first term in \eqref{eq:PTsquaredError1}
%, which captures the difference in the mean estimates, and 
contributes to the estimation cost \eqref{eq:PTsquaredError1}.

The rule \eqref{eq:PTsquaredError2} corresponds to the case where a trigger is already scheduled to happen at time  $\lastel$ in future (within the horizon $M$).  Hence, it is clear that the estimation error will be reset at $\lastel$, and from that point onward, variance predictions are used in analogy to the ST \eqref{eq:STsquaredError} ($\last$ replaced with $\lastel$, and the horizon $M$ with $\Delta$).  This trigger is independent of the data $\Yall_k$, $\Uall_k$ because the error at the future reset time $\lastel$ is fully determined by the distribution \eqref{eq:lem2_PDF_eII}, independent of $\Yall_k$, $\Uall_k$.

%%%%%%%%%%%%%%%%%%%%%%%%%%%%%%%%%%%%%%%%%%%%%%%%%%%%%%%%%%%%%%%%%%%%%%%%%%%%%%%%
\subsection{Discussion}
To obtain insight into the derived PT and ST, we next analyze and compare their structure.  To focus on the essential triggering behavior and simplify the discussion, we consider the case without inputs ($B_i \vik{u}{i}{k-1} = 0$ in \eqref{eq:sys_x}).
%In addition to the proposed triggers PT and ST, 
We also compare to an \emph{event trigger} (ET), 
%(\ie one with instantaneous triggering decision), 
which is obtained from the PT \eqref{eq:PTsquaredError1} by setting $M=0$:
%, which results in
\begin{align}
\gamma_{k} = 1 \,\, \Leftrightarrow \,\, 
\trigsig_{k|k}
&= \norm{ \xKF_k - A \xP_{k-1}  }^2 
%= \norm{ \xII_k - \xI_k  }^2
\geq C_k. \label{eq:ETsquaredError}
\end{align}
The trigger directly compares the two options at the remote estimator, $\xKF_k$ and $A \xP_{k-1}$.
%$\xI_k$ and $\xII_k$.
%the current local estimate to the remote estimate that results from not sending an update. 
To implement the ET, communication must be available instantaneously if needed.

The derived rules for ST, PT, and ET have the same threshold structure
% Cut (for space constraints)
%\footnote{For the ST \eqref{eq:STsquaredError}, \eqref{eq:trigGenStructure} is understood in the sense that \eqref{eq:trigGenStructure} is evaluated for increasing $M\geq 1$ until a positive trigger $\gamma_{k+M}=1$ is found.} 
\begin{equation}
\gamma_{k+M} = 1 \,\, \Leftrightarrow \,\,  \trigsig_{k+M|k} \geq C_{k+M} 
\label{eq:trigGenStructure}
\end{equation}
where the communication cost $C_{k+M}$ corresponds to the triggering threshold.
The triggers differ in the expected estimation cost $\trigsig_{k+M|k}$. 
% We next analyze the structure of the triggers in more detail.
% , where we concentrate on the time-invariant case for simplicity. \todo{?}
To shed light on this difference, 
%For the purpose of comparing the structure of the different triggers, 
we introduce
\begin{align}
\trigsigM_{k,M} &:= \norm{ A^M (\xKF_k \! -\! A \xP_{k-1} ) }^2 \label{eq:trigSigMean} \\
\trigsigV_{k,M} &:= \trace( \PKF_{k+M|k} \!-\! \PKF_{k+M} ). \label{eq:trigSigVar}
\end{align}
With this, the triggers 
%We then have the following characterization:  the triggers 
ST \eqref{eq:STsquaredError}, PT \eqref{eq:PTsquaredError1}, \eqref{eq:PTsquaredError2}, and ET \eqref{eq:ETsquaredError} 
%can then be characterized as follows.  Each trigger is 
are given by \eqref{eq:trigGenStructure} with
\begin{align}
\trigsig_{k+0|k} &= \trigsigM_{k,0}, M=0 && \text{(ET)} \label{eq:ETcharacterization} \\
\trigsig_{k+M|k} &= \trigsigM_{k,M} + \trigsigV_{k,M} \quad \quad \quad && \text{(PT), $k > \lastel$} \label{eq:PTcharacterization1} \\
\trigsig_{k+M|k} &= \trigsigV_{\kappa,\Delta} && \text{(PT), $k \leq \lastel$} \label{eq:PTcharacterization2} \\
\trigsig_{\last+M|\last} &= \trigsigV_{\last,M} && \text{(ST)} . \label{eq:STcharacterization}
\end{align}
Hence, the trigger signals are generally a combination of the `mean' signal \eqref{eq:trigSigMean} and the `variance' signal \eqref{eq:trigSigVar}.  Noting that the mean signal \eqref{eq:trigSigMean} depends on real-time measurement data $\Yall_k$ (through $\xKF_k$), while the variance signal \eqref{eq:trigSigVar} does not, we can characterize ET and PT as 
%\emph{closed-loop}
\emph{online triggers}, while ST is an 
%\emph{open-loop}
\emph{offline trigger}. This reflects the intended design of the different
triggers. ST is designed to predict the next trigger at the time $\last$ of the
last triggering, without seeing any data beyond $\last$.  This allows the sensor
to go to sleep in-between triggers, for example.  ET and PT, on the other hand, continuously monitor the sensor data to make more informed transmit decisions (as shall be seen in the following comparisons).
%when comparing the effectiveness of the different triggers in the following examples).

While ET requires instantaneous communication, 
%which restricts online allocation of communication resources,
which is limiting for online allocation of communication resources,
%which is impractical  when integration with the communication system is desired, 
PT makes the transmit decision $M\geq1$ steps ahead of time.  ET compares the mean estimates only (\cf \eqref{eq:ETcharacterization}), while PT results in a combination of mean and variance signal (\cf \eqref{eq:PTcharacterization1}).  If a transmission is already scheduled for $\lastel_{k-1} \geq k$, PT resorts to the ST mechanism for predicting beyond $\lastel_{k-1}$; that is, it relies on the variance signal only (\cf \eqref{eq:PTcharacterization2}).
  
While ST can be understood as an \emph{open-loop} trigger (\eqref{eq:STcharacterization} can be computed without any measurement data), ET clearly is a \emph{closed-loop} trigger requiring real-time data $\Yall_k$ for the decision on $\gamma_k$.  PT can be regarded as an intermediate scheme exploiting real-time data and variance-based predictions.  
Accordingly, the novel predictive triggering concept lies between the known concepts of event and self triggering.

The ST is similar to the variance-based triggers proposed in \cite{TrDAn14b}.  
Therein, it was shown for a slightly different scenario (transmission of measurements instead of estimates) that event triggering decisions based on the variance are independent of any measurement data and can hence be computed off-line.
%For a slightly different scenario with transmission of measurements (instead of local estimates), it was shown therein that event triggering decisions based on the variance of the measurement prediction error are independent of any measurement data and can be computed off-line.  
Similarly, when assuming that all problem parameters $A$, $H$, $Q$, $R$ in \eqref{eq:sys_x}, \eqref{eq:sys_y} are known a priori, 
%the variance signal \eqref{eq:trigSigVar} and thus 
\eqref{eq:STsquaredError} can be pre-computed for all times.  However, if some parameters only become available during operation (\eg the sensor accuracy $R_k$), the ST also becomes an online trigger. 
%In  \cite{TrDAn14b}, it was further shown that periodic transmit sequences result for variance-based triggering applied to a time-invariant process.
% \eqref{eq:sys_x}, \eqref{eq:sys_y}.
%  We shall also make this observation for the ST in the simulations in \sect \ref{sec:simulations}.

For the case with inputs ($B_i \vik{u}{i}{k-1} \neq 0$ in \eqref{eq:sys_x}), the triggering behavior is qualitatively similar.  The mean signal \eqref{eq:trigSigMean} will include the closed-loop dynamics $\bar{A}$ and the input $\xi_{k-1}$ corresponding to other agents, and the variance signal \eqref{eq:trigSigVar} will include the additional term $\Xi_{k,M}$ accounting for the additional uncertainty of not knowing the true input.
%Key characteristics of the proposed self-trigger (ST) and predictive trigger (PT) are illustrated through numerical simulations of stable and unstable scalar processes.

\section{Illustrative Example}
% Analysis of triggers
\label{sec:analysisTriggers}
To illustrate the behavior of the obtained PT and ST, we present a numerical example.  We study simulations of the stable, scalar, linear time-invariant (LTI) system \eqref{eq:sys_x}, \eqref{eq:sys_y} with:
% without inputs:
\begin{example}
%System \eqref{eq:sys_x}, \eqref{eq:sys_y}, with 
$A = 0.98$, $B=0$ (no inputs), $H = 1$, $Q = 0.1$, $R = 0.1$, and $\bar{x}_0 = X_0 = 1$.
\label{ex:ex1}
\end{example}

%%%%%%%%%%%%%%%%%%%%%%%%%%%%%%%%%%%%%%%%%%%%%%%%%%%%%%%%%%%%%%%%%%%%%%%%%%%%%%%%
\subsection{Self trigger}
We first consider the self trigger (ST).
Results of the numerical simulation of the event-based estimation system (\cf \fig \ref{fig:remoteEstimation}) consisting of the local state estimator \eqref{eq:KF1}--\eqref{eq:KF5}, the remote state estimator \eqref{eq:remoteEst}, and the ST \eqref{eq:STsquaredError} with constant cost $C_k = C= 0.6$ are shown in \fig \ref{fig:example1_1}.
\begin{figure}[tb]
\centering
\includegraphics[scale=0.95]{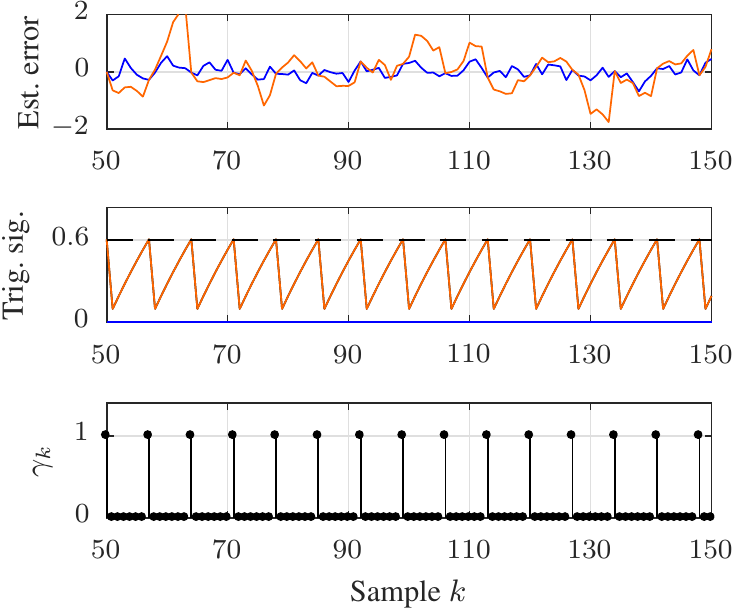}	% was 0.88, 0.9
\caption{\Ex~\ref{ex:ex1} with self trigger (ST).  
%The TOP graph shows the true state $x$ (\graph{black}), the KF estimate $\xKF$ (\graph{blue}), and the remote estimate $\hat{x}$ (\graph{orange}); and in the SECOND graph are the corresponding errors $\eKF=x-\xKF$ (\graph{blue}) and $e=x-\hat{x}$ (\graph{orange}).  
TOP: KF estimation error $\eKF=x-\xKF$ (\graph{blue}) and remote error $e=x-\hat{x}$ (\graph{orange}).
MIDDLE: 
%The THIRD graph shows 
components of the triggering signal 
$\trigsigM$ \eqref{eq:trigSigMean} (\graph{blue}), $\trigsigV$ \eqref{eq:trigSigVar} (\graph{black}, hidden), the triggering signal $\trigsig = \trigsigM + \trigsigV$ (\graph{orange}), and the threshold $C_k = 0.6$ (\graph{dashed}).  
BOTTOM: triggering decisions $\gamma$.
%The BOTTOM graph indicates the triggering decisions $\gamma$.
}
\label{fig:example1_1}
\end{figure}
The estimation errors of the local and remote estimator are
compared in the first graph.
As expected, the remote estimation error $e_k = x_k-\hat{x}_k$ (orange) is larger than the local estimation error $\eKF_k = x_k-\xKF_k$ (blue).  Yet, the remote estimator only needs 14\% of the samples.  

The triggering behavior is illustrated in the second graph showing the
triggering signals $\trigsigM$ \eqref{eq:trigSigMean}, $\trigsigV$ \eqref{eq:trigSigVar}, and $\trigsig = \trigsigM + \trigsigV$, and the bottom graph depicting the triggering decision $\gamma$.
%As can be seen from the third graph in \fig \ref{fig:example1_1}, 
Obviously, the ST entirely depends on the variance signal $\trigsigV$ (orange, identical with $\trigsig$ in black), while $\trigsigM = 0$ (blue).  This reflects the previous discussion about the ST being independent of online measurement data.  
The triggering behavior (the signal $\trigsig$ and the decisions $\gamma$) is actually  \emph{periodic}, which can be deduced as follows:
%Furthermore, it can be seen that the triggering signal and thus also the transmit decisions $\gamma_k$ are \emph{periodic}.  This can be deduced as follows: 
%From KF theory \cite{AnMo05}, 
the variance $\PKF_k$ of the KF \eqref{eq:KF1}--\eqref{eq:KF5} converges exponentially to a steady-state solution $\PKFss$, \cite{AnMo05}; hence, the triggering law \eqref{eq:STsquaredError} asymptotically becomes  
$\trace( \Vo^M(\PKFss) - \PKFss ) \geq C$ with $\Vo(X) := AXA^\transp + Q$,
% Cut (space):
%\begin{align}
%\trace( \Vo^M(\PKFss) - \PKFss ) \geq C, \quad  \Vo(X) := AXA^\transp + Q
%\label{eq:STsquaredErrorLTI_ex1}
%\end{align}
and \eqref{eq:STsquaredError} thus has a unique 
%(time-invariant) 
solution $M$ corresponding to the period seen in \fig \ref{fig:example1_1}.  
%Moreover, the triggering decision becomes deterministic in the that it is independent of any real-time data and can be computed off-line.

Periodic transmit sequences are typical for variance-based triggering on time-invariant problems, which has also been found and formally proven for related scenarios in \cite{TrDAn14b,LeDeQu15}.  
%Obviously, there is no such periodic solution for a general time-varying process \eqref{eq:sys_x}, \eqref{eq:sys_y}.

%%%%%%%%%%%%%%%%%%%%%%%%%%%%%%%%%%%%%%%%%%%%%%%%%%%%%%%%%%%%%%%%%%%%%%%%%%%%%%%%
\subsection{Predictive trigger}
The results of simulating \Ex \ref{ex:ex1}, now with 
%For the same example, \Ex \ref{ex:ex1}, the simulation results of the event-based estimation system with 
the PT \eqref{eq:PTsquaredError1}, \eqref{eq:PTsquaredError2}, and prediction horizon $M=2$, are presented in \fig \ref{fig:example1_2} for the cost $C_k = C = 0.6$, and in \fig \ref{fig:example1_3} for $C_k = C = 0.25$.  Albeit using the same trigger, the two simulations show fundamentally different triggering behavior: while the triggering signal $\trigsig$ and the decisions $\gamma$ in \fig \ref{fig:example1_2} are irregular, they are periodic in \fig \ref{fig:example1_3}.  
\begin{figure}[tb]
\centering
\includegraphics[scale=0.95]{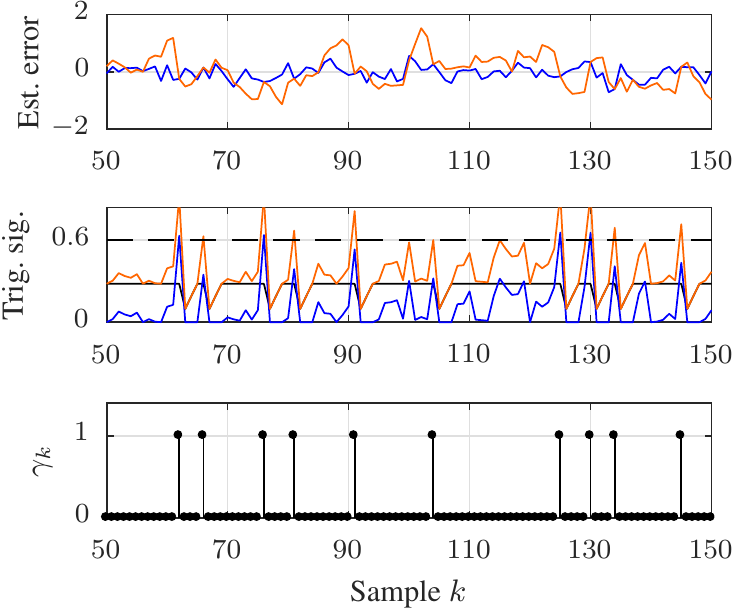}
\caption{\Ex~\ref{ex:ex1} with predictive trigger (PT) and $C_k = 0.6$.  Coloring of the signals is the same as in \fig \ref{fig:example1_1}.  The triggering behavior is \emph{stochastic}.
}
\label{fig:example1_2}
\end{figure}
\begin{figure}[tb]
\centering
\includegraphics[scale=0.95]{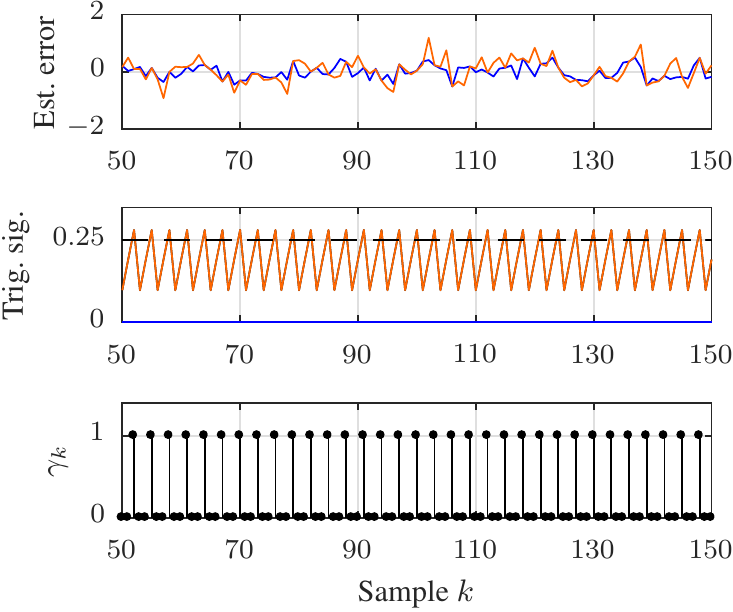}
\caption{\Ex~\ref{ex:ex1} with predictive trigger (PT) and $C_k = 0.25$.  Coloring of the signals is the same as in \fig \ref{fig:example1_1}. The triggering behavior is \emph{periodic}.
}
\label{fig:example1_3}
\end{figure}

Apparently, the choice of the cost $C_k$ determines the different behavior of the PT.
For $C_k = 0.6$, the triggering decision depends on both, the mean signal $\trigsigM$ and the variance signal $\trigsigV$, as can be seen from \fig \ref{fig:example1_2} (middle graph).
%From \fig \ref{fig:example1_2} (third graph), it can be seen that for $C_k = C = 0.6$, the triggering decision depends on both, the mean signal $\chi_\text{mean}$ and the variance signal $\chi_\text{var}$.  
Because $\trigsigM$ is based on real-time measurements,
% (through $\xKF_k$ in \eqref{eq:PTsquaredError1}), 
which are themselves random variables \eqref{eq:sys_y}, the triggering decision is a random variable.  
We also observe in \fig \ref{fig:example1_2} that the variance signal $\trigsigV$ is alone not sufficient to trigger  a communication.
% Cut (space): \footnote{After convergence of the local estimator variance $\PKF_k$, $\trigsigV$ corresponds to \eqref{eq:STsquaredErrorLTI_ex1}, which does not exceed the chosen $C_k$ for $M=2$ iterations.}
However, when lowering the cost of communication $C_k$ enough, the variance signal alone becomes sufficient to cause triggers.  Essentially, triggering then happens 
%the prediction 
according to \eqref{eq:PTsquaredError2} only, and  \eqref{eq:PTsquaredError1} becomes irrelevant.  Hence, the PT resorts to self triggering behavior for small enough communication cost $C_k$.  That is, the PT undergoes a phase transition for some value of $C_k$ from stochastic/online triggering to deterministic/offline triggering behavior.

%While, for a stationary LTI system with constant cost $C_k=C$, the self-trigger is deterministic\footnote{Deterministic in the sense that \eqref{eq:STsquaredErrorLTI_ex1}, which is independent of real-time data and can be computed off-line, fully determines the periodic triggering sequence.}  for all $C$,
%the predictive trigger undergoes a phase transition for some value of $C$ from stochastic/on-line triggering decision to deterministic/off-line behavior.

\subsection{Estimation versus communication trade-off}
Following the approach from \cite{TrCa15}, we evaluate the effectiveness of different triggers by comparing their trade-off curves of average estimation error $\Ec$ versus average communication $\Cc$ obtained from Monte Carlo simulations.  In addition to the ST \eqref{eq:STsquaredError} and the PT \eqref{eq:PTsquaredError1}, \eqref{eq:PTsquaredError2}, $M=2$, we also compare against the ET \eqref{eq:ETsquaredError}.  The latter is expected to yield the best trade-off because it makes the triggering decision at the latest possible time (ET decides at time $k$ about communication at time $k$).  
%Yet, ET assumes that communication is available instantaneously if needed, and otherwise wasted, which is often impractical as argued in the introduction.

%The estimation error $\Ec$ versus average communication $\Cc$ curves are obtained from Monte Carlo simulations.
The estimation error $\Ec$ is measured as the squared error $e_k^2$ averaged
over the simulation horizon (200 samples) and \num{50000} simulation runs. 
The average communication $\Cc$ is normalized such that $\Cc=1$ means
$\gamma_k=1$ for all $k$, and $\Cc=0$ means no communication (except for one
enforced trigger at $k=1$).
By varying the constant communication cost $C_k = C$ in a suitable range, an $\Ec$-vs-$\Cc$ curve is obtained, which represents the estimation/communication trade-off for a particular trigger.  The results 
%from simulations of 
for \Ex \ref{ex:ex1} are shown in \fig \ref{fig:example1_EvsC}.
\begin{figure}[tb]
\centering
\includegraphics[scale=0.95]{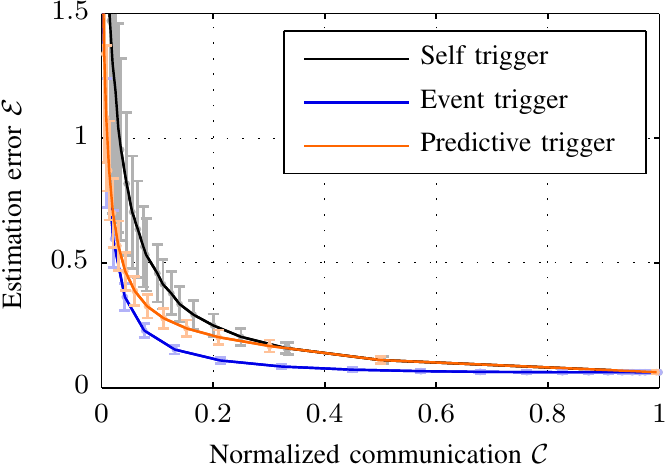} % was: .9
\caption{Trade-off between estimation error $\Ec$ and average communication $\Cc$ for different triggering concepts applied to \Ex \ref{ex:ex1}. 
Each point represents the average from 50'000 Monte Carlo simulations, and the light error bars correspond to one standard deviation. 
%It can be seen that the novel concept of predictive triggering provides a middle ground between event triggering and self triggering. 
}
\label{fig:example1_EvsC}
\end{figure}

%Figure \ref{fig:example1_EvsC} compares the $\Ec$-vs-$\Cc$ curves for three different triggering concepts.
%In addition to the self trigger (ST) \eqref{eq:PTgeneral}, \eqref{eq:STsquaredErrorLTI} and the predictive trigger (PT) \eqref{eq:PTgeneral}, \eqref{eq:PTsquaredErrorLTI1}, \eqref{eq:PTsquaredErrorLTI2} with $M=2$, we also included simulations with an event trigger (ET).  The latter is obtained from \eqref{eq:PTsquaredErrorLTI1}, \eqref{eq:PTsquaredErrorLTI2} by setting $M=0$, which results in
%\begin{align}
%\E[  E_{k} | \Yall_k ] = \norm{ \xKF_k - A^K \xKF_\last  }^2 = \norm{ \xKF_k - A \hat{x}_{k-1}  }^2 .
% \label{eq:ETsquaredErrorLTI}
%\end{align}
%That is, the trigger compares the \emph{current} local to the remote estimate (in case no update is sent); it assumes that communication is available \emph{instantaneously} if needed.

Comparing the three different triggering schemes, we see that the ET is superior, as expected, because its curve is uniformly below the others.  Also expected, the ST is the least effective 
%in making the trade-off between estimation performance and communication requirements 
since no real-time information is available and triggers are purely based on variance predictions.  
%This makes perfect sense: the event-trigger can exploit the maximum available information for making a transmit decision (actually all data until and including the decision time), while the self-trigger decision is purely based on predictions.
% and thus incurs more uncertainty in the decision making.  
The novel concept of predictive triggering can be understood as an intermediate solution between these two extremes.  For small communication cost $C_k$ (and thus relatively large communication $\Cc$), the PT behaves like the ST, as was discussed in the previous section and is confirmed in \fig \ref{fig:example1_EvsC} (orange and black curves essentially identical for large $\Cc$).
%as discussed in the previous section and as can be seen in \fig \ref{fig:example1_EvsC} (orange and black curves identical for large $\Cc$).  
When the triggering threshold $C_k$ is relaxed (\ie the cost increased), the PT also exploits real-time data for the triggering decision (through \eqref{eq:trigSigMean}), similar to the ET.  Yet, the PT must predict the decision $M$ steps in advance making its $\Ec$-vs-$\Cc$ trade-off generally less effective than the ET.  In \fig \ref{fig:example1_EvsC}, the curve for PT is thus between ET and ST and approaches either one of them for small and large communication $\Cc$.
%; for small $\Cc$, it approaches the performance of ET, and for large $\Cc$, the performance of ST.  

% Cut (space):
%The same qualitative behavior of the different triggering mechanisms is observed for an unstable process; results can be found in \cite{Tr16}.

%Fig.\ \ref{fig:example2_EvsC} shows the $\Ec$-vs-$\Cc$ curves for the unstable system:
%\begin{example}
%%System \eqref{eq:sys_x}, \eqref{eq:sys_y}, with 
%$A_k = 1.1$, $H_k = 1$, $Q_k = 0.1$, $R_k = 0.1$ for all $k$, and $\bar{x}_0 = X_0 = 1$.
%\label{ex:ex2}
%\end{example}
%The same qualitative behavior of the different triggering mechanisms as in \fig \ref{fig:example1_EvsC} can be observed.
%\begin{figure}[tb]
%\centering
%\includegraphics[scale=0.98]{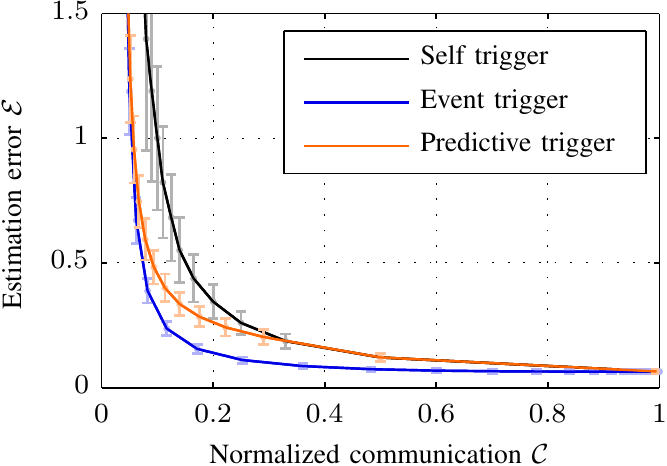} % was: .9
%\caption{Estimation-vs-communication trade-offs for investigated triggering concepts applied to \Ex \ref{ex:ex2} (unstable process).
%}
%\label{fig:example2_EvsC}
%\end{figure}

\section{Hardware Experiments: Remote Estimation \& Feedback Control}
\label{sec:experiments}
Experimental results of applying the proposed PT and ST on an inverted pendulum platform are presented in this section.  We show that trade-off curves in practice are similar to those in simulation (\cf \fig \ref{fig:example1_EvsC}), and that the triggers are suitable for feedback control (\ie stabilizing the pendulum).

%The proposed framework has been validated in real hardware experiments. In this
%section, the results of this experiments, including pure state estimation and
%feedback control based on PT and ST, will be presented.

\subsection{Experimental setup}
The experimental platform used for the experiments of this section is the
inverted pendulum system shown in Fig.~\ref{fig:cart-pole}. Through appropriate
horizontal motion, the cart can stabilize the pendulum in its upright position
($\theta = \SI{0}{\radian}$).
The system state is given by the position and velocity of the cart, and angle
and angular velocity of the pole, \ie $x = (s, \dot{s}, \theta,
\dot{\theta})^\transp$.
The cart-pole system is widely used as a
benchmark in control~\cite{Bou13} because it has nonlinear, fast, and unstable
dynamics.

%The experimental platform used for the validation was an inverted pendulum on a
%cart.
%Such a cart-pole system is shown in Fig.~\ref{fig:cart-pole}. The cart can move horizontally and like that stabilize
%the pendulum in an upright position. The cart-pole system is widely used as a
%benchmark in control theory~\cite{Bou13} as it is nonlinear and has fast
%dynamics.

\begin{figure}
\centering
\includegraphics[scale=0.05]{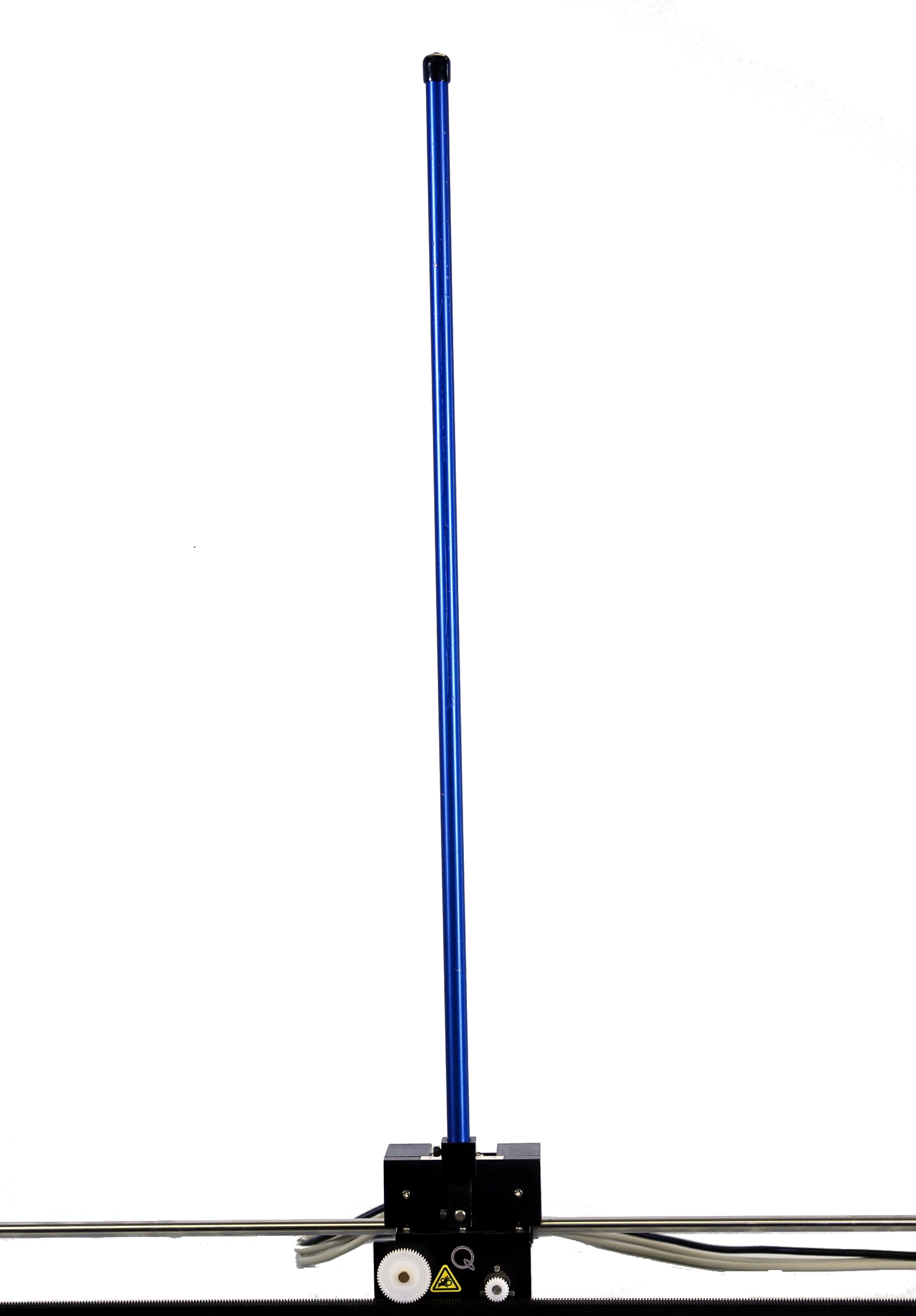}
\hspace{0.5cm}
\includegraphics{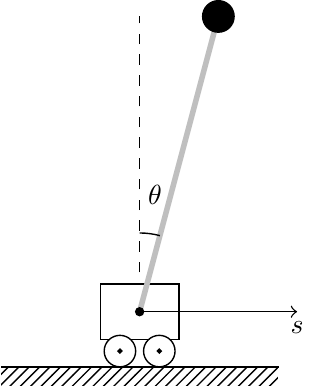}
\caption{Picture and schematic of the cart-pole system used for the
experiments.}
\label{fig:cart-pole}
\end{figure}

The sensors and actuator of the pendulum hardware are connected through data acquisition devices to a standard laptop running Matlab/Simulink.  Two encoders measure the angle $\theta_k$ and cart position $s_k$ every \SI{1}{\milli\second}; and voltage $u_k$ is commanded to the motor with the same update interval.
The full state $x_k$ can be constructed from the sensor measurements through finite differences.
The triggers, estimators, and controllers
%, as well as network models (\eg packet drops) 
are implemented in Simulink.  The pendulum system thus represents one `Thing $i$' of \fig \ref{fig:iotControlSchematic}.  
%Communication with local sensors and actuator is periodically at \SI{1}{\milli\second}.

%For the trigger concepts presented here the goal is to estimate the state of
%the system, which consists of the position $s$ of the cart and its velocity
%$\dot{s}$, as well as the angle $\theta$ of the pendulum and the angular
%velocity $\dot{\theta}$. The position and the angle can be measured with encoder sensors.

%The desired state, when the pendulum is in an upright position, is denoted
%with $\theta = \SI{0}{\degree}$. This is an unstable equilibrium of the system
%and therefore a controller needs to be used. For the first experiments the
%pendulum will be stabilized with a local controller, which directly receives the
%measurements of the encoders and the estimates of the derivatives from a finite
%difference observer with a sample time of \SI{1}{\milli\second}. The state of
%the stabilized system is then estimated on a remote estimator (\cf
%Fig.~\ref{fig:remoteEstimation}) using ET, ST and PT.

As the upright equilibrium is unstable, a stabilizing feedback controller is
needed.  We employ a linear-quadratic regulator (LQR), which is a standard design for multivariate feedback control,
\cite{anderson2007optimal}.  Assuming linear dynamics (with a model as given in \cite{Quanser}) and perfect state measurements, a linear state-feedback controller, $u_k = F x_k$, is obtained as the \emph{optimal} feedback controller that minimizes a quadratic cost function
\begin{align}
J = \lim_{K \to \infty} \frac{1}{K} \E\!\Big[ \sum\limits\nolimits_{k=0}^{K-1} x_k^{\mathrm{T}}Qx_k + u_k^\mathrm{T}Ru_k \Big].
\end{align}
The positive definite matrices $Q$ and $R$ are design parameters, which represent the designer's trade-off in achieving a fast response (large $Q$) or low control energy (large $R$). 
Here, we chose $Q = 30 I$ and $R = I$ with $I$ the identity matrix, which leads to stable balancing with slight motion of the cart.
Despite the true system being nonlinear and state measurements not perfect, LQR leads to good balancing performance, 
%LQR is a typical controller for balancing problems \cite{TrDAn12b,MaRiSa14}, 
which has also been shown in previous work on this platform \cite{marco_ICRA_2017_etal}.

%The cart-pole is a nonlinear system, which may however for small
%oscillations of the pendulum be linearized. The parameters of the
%system matrices for the experimental setup used here were derived
%in~\cite{Quanser}.

%The controller was obtained from a lest-squares approach. A linear-quadratic
%regulator (LQR) is derived by minimizing the cost
%function~\cite{anderson2007optimal} 
%\begin{align}
%J = \sum\limits_0^\infty \left(x_k^{\mathrm{T}}Qx_k + u_k^\mathrm{T}Ru_k\right).
%\end{align}
%This class of controllers was successfully used for pole balancing
%in~\cite{marco_ICRA_2017}. The matrices $Q$ and $R$ are design parameters. The
%diagonal elements of $Q$ indicate, how fast the corresponding state variable converges to \num{0}, high
%values in the $R$ matrix lead to low values of the input and a slower control.
%Here the $Q$ matrix was chosen as an identity matrix multiplied by \num{30} and $R$ was chosen to \num{1}.
%This leads to a stable system, but the cart is constantly moving. Other controllers
%leading to less movement of the cart would have simplified the task of state
%estimation what was not intended here.

Characteristics of the communication network to be investigated are implemented in the Simulink model.
The round time of the network 
%connecting the remote estimator with the local system 
is assumed to be \SI{10}{\milli\second}. 
For the PT, the prediction horizon is $M \!=\! 2$. 
Thus, the communication network has \SI{20}{\milli\second} to reconfigure,
which is expected to be sufficient 
%enough time 
for fast protocols such as~\cite{FerZimLWB}.

%Both the local and the remote
%estimator use this as a sample time to make predictions of the system state. The local Kalman
%filter which runs at the plant and whose output is sent to the remote estimator
%in case communication is needed, does not depend on the communication network.
%Therefore this Kalman filter is designed with a sample time of
%\SI{1}{\milli\second} to be able to provide the best estimates of the current
%state possible whenever there is a need for communicating it.
 
%For the PT the prediction horizon is, as before, two discrete
%time steps.
%Like that, the communication network has \SI{20}{\milli\second} to reconfigure,
%which is enough time for fast network protocols such as~\cite{FerZimLWB}.
 
\subsection{Remote estimation}
\label{sec:experimentsRemoteEstimation}
The first set of experiments investigates the remote estimation scenario as in \fig \ref{fig:remoteEstimation}.  For this purpose, the pendulum is stabilized locally via the above LQR, which runs at \SI{1}{\milli\second} and directly acts on the encoder 
measurements and their derivatives obtained from finite differences.  The closed-loop system thus serves as the dynamic process in \fig \ref{fig:remoteEstimation} (described by equation \eqref{eq:sys_x}),
% represented by \eqref{eq:sys_x}, 
whose state is to be estimated and communicated via ET, PT, and ST to a remote location, which could represent another agent from \fig \ref{fig:iotControlSchematic}.  

The local \emph{State Estimator} in \fig \ref{fig:remoteEstimation} is
implemented as the KF \eqref{eq:KF1}--\eqref{eq:KF5} with properly tuned
matrices and updated every \SI{1}{\milli\second} (at every sensor update).  
Triggering decisions are made at the round time of the network (\SI{10}{\milli\second}).
Accordingly, state predictions \eqref{eq:remoteEst} are made every \SI{10}{\milli\second} (in \emph{Prediction Thing $i$} in \fig \ref{fig:remoteEstimation}).

Analogously to the numerical examples in Sec.~\ref{sec:analysisTriggers}, we
investigate the estimation-versus-communication trade-off achieved by ET, PT, and ST. As can be
seen in Fig.~\ref{fig:errVScommCartPole}, all three triggers lead to
approximately the same curves. 
% and thus enabling the communication
%network to reconfigure or shut down resources, does not lead to a mentionable
%disadvantage.  
These results are qualitatively different from those of the numerical example in
\fig \ref{fig:example1_EvsC}, which showed notable differences between the
triggers.  Presumably, the reason for this lies in the low-noise environment of
this experiment.  The main source of disturbances is the encoder quantization,
which is negligible.  Therefore, the system is almost deterministic, and
predictions are very accurate.
Hence, in this setting, predicting future communication needs (PT, ST) does not involve any significant disadvantage compared to instantaneous decisions (ET).

\begin{figure}
\centering
\includegraphics{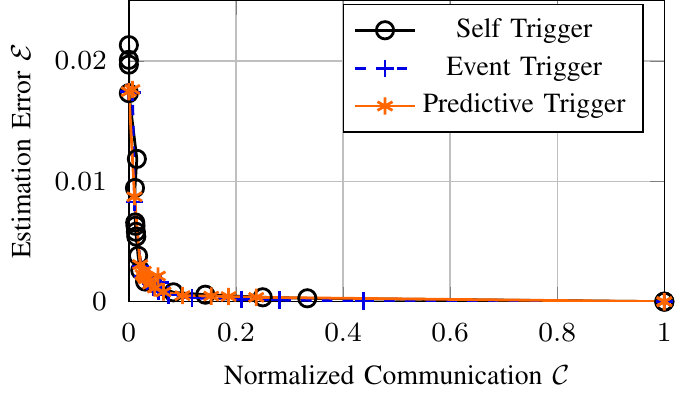}
\caption{Trade-off between averaged communication and the estimation error for a pendulum experiment with
low sensor noise.
Each marker represents the mean of \num{10} experiments with the same communication cost. The variance is negligible and thus omitted.}
\label{fig:errVScommCartPole}
\end{figure} 

%Each marker in Fig.~\ref{fig:errVScommCartPole} shows the mean of \num{10}
%experiments with the same communication cost. The variance between the different
%experiments was negligible small and is therefore omitted.

To confirm these results, we added zero-mean Gaussian noise with variance \num{5e-6} to the position and angle measurements.  This emulates analog angle sensors instead of digital encoders and is representative for many sensors in practice that involve stochastic noise. The results of this experiment are shown in
Fig.~\ref{fig:errVScommCartPole_addNoise}, which shows the same qualitative difference between the triggers as was observed in the numerical example in \fig \ref{fig:example1_EvsC}.

%The result of all triggers performing approximately the same way contradicts
%the results of the numerical example. The reason for this is seen in the
%accuracy of the encoder measurements, and the greatest possible
%extent of disturbances in the laboratory environment.  The results of the
%numerical experiment can be confirmed by adding some artificial Gaussian noise
%with mean \num{0} and variance \num{5e-6} to the position and angle measurements
%the Kalman filter receives. This might be the case, if instead of digital analog
%encoders were used. The results of such an experiment can be seen in
%Fig.~\ref{fig:errVScommCartPole_addNoise}. As for the numerical example, the PT
%is located between the ET- and the ST.

\begin{figure}
\centering 
\includegraphics{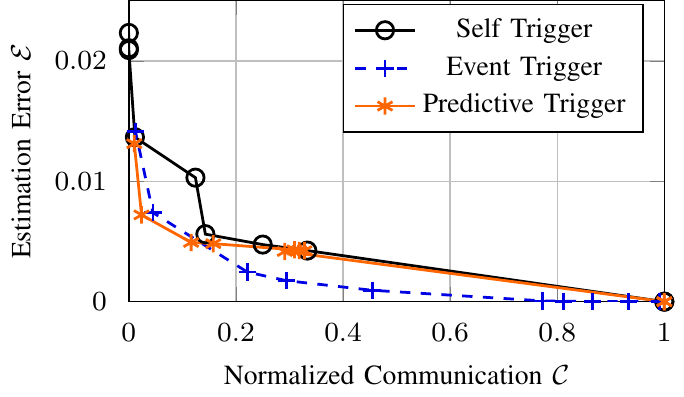}
\caption{Same experiment as in \fig \ref{fig:errVScommCartPole}, but with noisy sensors.
%Tradeoff between the normalized cost and the estimation error with
%noisy sensor inputs.
}
\label{fig:errVScommCartPole_addNoise}
\end{figure} 

\subsection{Feedback control}
\label{sec:pend_feedback}
The estimation errors obtained in Fig.~\ref{fig:errVScommCartPole} are fairly
small even with low communication. Thus, we expect the estimates obtained with PT and ST also to be suitable for feedback control, which we investigate here.  
In contrast to the setting in \sect \ref{sec:experimentsRemoteEstimation}, the LQR controller does not use the local state measurements at the fast update interval of \SI{1}{\milli\second}, but the state predictions \eqref{eq:remoteEst} instead.  This corresponds to the controller being implemented on a remote agent, which is relevant for IoT control as in \fig \ref{fig:iotControlSchematic}, where feedback loops are closed over a resource-limited network.

Figures \ref{fig:feedback_pred} and \ref{fig:feedback_self} show experimental results of using PT and ST for feedback control.
For these experiments, the weights of the LQR approach were chosen as those suggested by the manufacturer in~\cite{Quanser}, which leads to a slightly  more robust controller.
Both triggers are able to stabilize the pendulum well and save around \SI{80}{\percent} of communication.  

%The estimation errors obtained in Fig.~\ref{fig:errVScommCartPole}  are fairly
%small. Thus the estimate of the PT and the ST can not only be used
%for synchronization tasks, but even for stabilizing the pendulum.
%Results of such an experiment can be seen in Fig.~\ref{fig:feedback_pred}
%and~\ref{fig:feedback_self} for the PT and
%the ST respectively. Both triggers are able to stabilize the
%pendulum and save around \SI{80}{\percent} of communication. 

%For this setting a more robust controller was needed. Therefore, the weight
%matrices of the LQR approach were chosen as the ones proposed
%in~\cite{Quanser}.

In addition to disturbances inherent in the system, the experiments also include impulsive disturbances on the input (impulse of \SI{2}{\volt} amplitude and \SI{500}{\milli\second} duration every \SI{10}{\second}), which we added to study the triggers' behavior under deterministic disturbances. 
In addition to stochastic noise, such disturbances are relevant in many practical IoT scenarios (\eg a car braking, a wind gust on a drone).
%If PT or ST is used for feedback control in
%non-laboratory setups, it has to be robust not only in case of sensor noise
%but also in case of external, deterministic disturbances. For the experiments
%shown in Fig.~\ref{fig:feedback_pred} and~\ref{fig:feedback_self} such a
%deterministic disturbance was introduced.
%Every \SI{10}{\second} a disturbance of \SI{2}{\volt} amplitude and
%\SI{500}{\milli\second} length was added to the input signal of the cart. 
Under these disturbances, a particular advantage of PT over ST becomes apparent. The ST is an offline trigger, which yields periodic communication (in this setting) and does not react to the external disturbances.
The PT, on the other hand, takes
%PT not only takes the variance, but also 
the current error into account and is thus able to issue additional communication in case of disturbances. 
%Therefore, the average communication increases when the
%disturbance is detected. 
As a result, the maximum angle of the pendulum stays around \SI{0.03}{\radian}
in magnitude for the PT, while it comes close to \SI{0.04}{\radian} for the ST.

\begin{figure}
\centering
\includegraphics{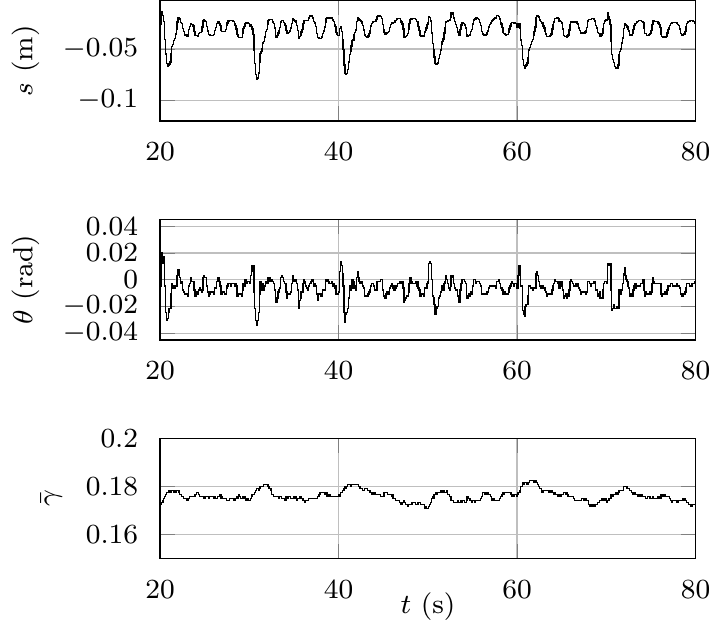}
\caption{Closing the feedback loop with the PT. The graphs show, from top to bottom, the cart
position $s$, the pendulum angle $\theta$, and the obtained average
communication $\bar{\gamma}$, computed as a moving average over \num{1200}
samples. The communication cost was set to $C_k = C = \num{0.009}$.
}
\label{fig:feedback_pred}
\end{figure}

\begin{figure}
\centering
\includegraphics{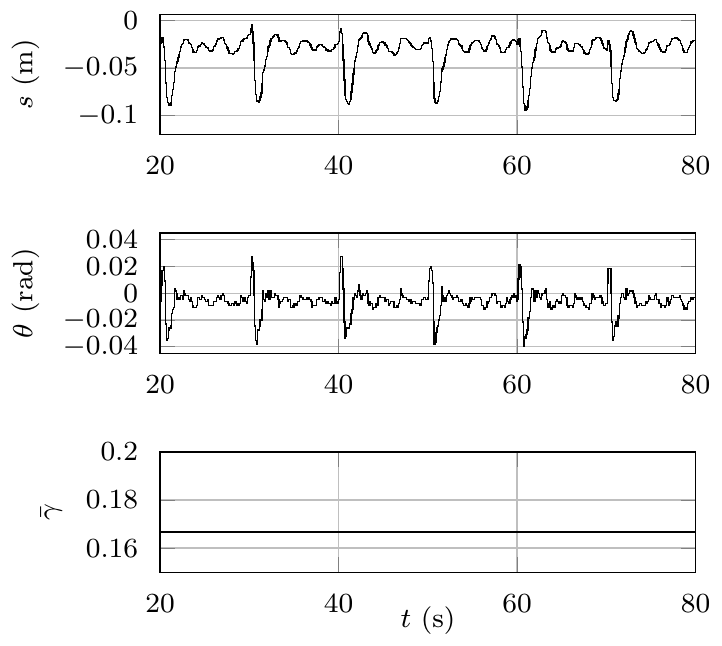}
\caption{Closing the feedback loop with the ST. Same plots as in \fig \ref{fig:feedback_pred}.
%Plotted are the
%position $s$ and the angle $\theta$ of the cart-pole system. The average
%communication $\bar{\gamma}$, computed with a moving average, is shown in the
%bottom plot.
}
\label{fig:feedback_self}
\end{figure}

\section{IoT Control with Multiple Agents}
\label{sec:multiAgent}
In the preceding sections, we addressed the problem posed in \sect \ref{sec:objective}
%of predictive and self triggering 
for the case of two agents.  In this section, we discuss how these results can be used for the IoT scenario with multiple agents in \fig \ref{fig:iotControlSchematic}.  
Moreover, we sketch how the resulting closed-loop dynamics can be analyzed when remote estimates are used for feedback control.

Because we discuss multiple agents, we reintroduce the index `$i$' to refer to an individual agent $i$ from here onward.

\subsection{Multiple agents}
The developments for a pair of agents as in \fig \ref{fig:agentSchematic} in the previous sections equally apply to the IoT scenario in \fig \ref{fig:iotControlSchematic}. 
Each agent implements the blocks from \fig \ref{fig:agentSchematic}: \emph{State Estimation} is given by the KF \eqref{eq:KF1}--\eqref{eq:KF5}, \emph{Prediction} by \eqref{eq:remoteEst}, \emph{Control} by \eqref{eq:controlLawPre}, and the \emph{Event Trigger} is replaced by either the ST \eqref{eq:STsquaredError} or the PT \eqref{eq:PTsquaredError1}, \eqref{eq:PTsquaredError2}.  
In particular, each agent makes predictions for those other agents whose state it requires for coordination.  
Whenever one agent transmits its local state estimate, it is broadcast over the network and received by all agents that care about this information, \eg via many-to-all communication.
In the considered scenario, the dynamics of the things are decoupled according to \eqref{eq:sys_x}, \eqref{eq:sys_y} (\cf \fig \ref{fig:iotControlSchematic}), but their action is coupled through the cooperative control \eqref{eq:controlLawPre}.

In \sect \ref{sec:simulationStudy}, a simulation study of an IoT control problem with multiple agents is discussed.  
%Before, we present experimental results with a pair of agents as in \fig \ref{fig:agentSchematic}.
%, we next present experimental results for the remote estimation case as in \fig \ref{fig:agentSchematic}.

\subsection{Closed-loop analysis}
\label{sec:closedLoopAnalysis}
While the main object of study in this article are predictive and self triggering for state estimation (\cf \fig \ref{fig:remoteEstimation}), an important use of the algorithms is for feedback control as in \fig \ref{fig:iotControlSchematic} and \ref{fig:agentSchematic}.  
The general suitability of the algorithms for feedback control has already been demonstrated in \sect \ref{sec:pend_feedback}.
As for feedback control, analyzing the closed-loop dynamics (\eg for stability) is often of importance,  we briefly outline here how this can be approached.

The closed-loop state dynamics of agent $i$ are obtained from \eqref{eq:sys_x} and \eqref{eq:controlLawPre}, and can be rewritten as
\begin{align}
\xik{i}{k} &= A_{i} \xik{i}{k-1} + B_{i} F_i \xKFik{i}{k-1} + \!\!\!\! \sum_{j \in \N_N \setminus \{i\}}  \!\!\!  B_{i} F_j \xPik{j}{k-1} + \vik{v}{i}{k-1} \nonumber \\
&= A_{i} \xik{i}{k-1} + B_{i} F_i \xik{i}{k-1} + \!\!\!\! \sum_{j \in \N_N \setminus \{i\}}  \!\!\!  B_{i} F_j \xik{j}{k-1} \nonumber \\
&\phantom{=} - B_{i} F_i \vik{\eKF}{i}{k-1} - \!\!\!\! \sum_{j \in \N_N \setminus \{i\}}  \!\!\!  B_{i} F_j \vik{e}{j}{k-1} + \vik{v}{i}{k-1}
\end{align}
where $\vik{\eKF}{i}{k}$ is the KF estimation error \eqref{eq:def_KFerror} and $\vik{e}{j}{k}$ the remote estimation error \eqref{eq:def_remoteError}.
%$\vik{e}{i}{k} = \xik{i}{k} - \xPik{i}{k}$ is the error from the remote prediction \eqref{eq:remoteEst}, and $\vik{\hat{e}}{i}{k} = \xik{i}{k} - \xKFik{i}{k}$ was introduced for estimation error of the Kalman filter \eqref{eq:KF4}.  
The combined closed-loop dynamics of $N$ things with concatenated state $\tilde{x}_k^\transp = [(\xik{1}{k})^\transp, (\xik{2}{k})^\transp, \dots, (\xik{N}{k})^\transp]$ can then be written as
\begin{equation}
\tilde{x}_k = (\tilde{A}+\tilde{B}\tilde{F}) \tilde{x}_{k-1} - \tilde{D} \tilde{\hat{e}}_{k-1} - (\tilde{B} \tilde{F}- \tilde{D}) \tilde{e}_{k-1} + \tilde{v}_{k-1}
\label{eq:multiAgentClosedLoop}
\end{equation}
where
\begin{align*}
\tilde{A} &:= \diag(A_1, \dots, A_N), & 
\tilde{B}^\transp &:= \begin{bmatrix} B_1^\transp & \dots & B_N^\transp \end{bmatrix} \! , \\
\tilde{D} &:= \diag(B_1 F_1, \dots, B_N F_N), &
\tilde{F} &:= \begin{bmatrix} F_1 & \dots & F_N \end{bmatrix} \! ,
\end{align*}
%\begin{align*}
%A &= \begin{bmatrix} A_1 & & 0 \\
%& \ddots & \\
%0 & & A_N \end{bmatrix}, \,
%%A &= \diag(A_1, \dots, A_N), \quad 
%B = \begin{bmatrix} B_1 \\ \vdots \\ B_N \end{bmatrix}, \,
%F = \begin{bmatrix} F_1 & \dots & F_N \end{bmatrix}
%\end{align*}
$\diag$ denotes block-diagonal matrix, and 
$\tilde{\hat{e}}_{k}$, $\tilde{e}_{k}$, and $\tilde{v}_{k}$ are the combined vectors of all $\vik{\hat{e}}{i}{k}$, $\vik{e}{i}{k}$, and $\vik{v}{i}{k}$ ($i \in \N_N$), respectively.  
The `tilde' notation indicates variables that refer to the ensemble of all agents.

Equation \eqref{eq:multiAgentClosedLoop} describes the closed-loop dynamics of $N$ things of \fig \ref{fig:iotControlSchematic} that implement the control architecture in \fig \ref{fig:agentSchematic}; it can therefore be used to deduce closed-loop system properties.  
The evolution of the complete state $x_k$ is governed by the transition matrix $\tilde{A}+\tilde{B}\tilde{F}$ and driven by three input terms: the KF error $\tilde{\hat{e}}_{k-1}$, the remote error $\tilde{e}_{k-1}$, and process noise $\tilde{v}_{k-1}$.  
Under mild assumptions, the feedback matrix $\tilde{F}$ can be designed such that a stable transition matrix $\tilde{A}+\tilde{B}\tilde{F}$ results (\ie all eigenvalues with magnitude less than 1), which implies that $\tilde{x}_k = (\tilde{A}+\tilde{B}\tilde{F}) \tilde{x}_{k-1}$ is exponentially stable.  Stability analysis then amounts to showing that the input terms are well behaved and bounded in a stochastic sense (\eg bounded moments).\footnote{For example, if, in $\tilde{x}_k = (\tilde{A}+\tilde{B}\tilde{F}) \tilde{x}_{k-1} + \tilde{z}_{k-1}$, the input $\tilde{z}_k$ is uncorrelated and Gaussian with bounded variance, then stability of $\tilde{A}+\tilde{B}\tilde{F}$ implies bounded state variance (see, \eg \cite[\sect 4.3]{AnMo05}).}
While $\tilde{v}_{k-1}$ is Gaussian by assumption (\cf \sect \ref{sec:processDynamics}), $\tilde{\hat{e}}_{k-1}$ being Gaussian follows from standard KF analysis \cite{AnMo05} (\cf \sect \ref{sec:stateEstimation}).  Lemmas  \ref{lem:PDF_eII} and \ref{lem:PDF_eI} can be instrumental to analyze the distribution of $\tilde{e}_{k-1}$.  However, the distribution of $\tilde{e}_{k-1}$ depends on the chosen trigger, and its properties (\eg bounded second moment) would have to be formally shown, which is beyond the goals of this article.
%; in particular, they express mean and variance relative the .  
%The actual distribution depends on the chosen trigger.   

\section{Simulation Study: Vehicle Platooning}
\label{sec:simulationStudy}

To illustrate the scalability of the proposed triggers for IoT control, we
present a simulation study of vehicle platooning. Connected vehicles are seen as
a building block of the Internet of Vehicles~\cite{LuChengIoT}. 
%The framework presented in this paper shall be used
%in a simulation model for vehicle platooning.
%
Platooning of autonomous vehicles has been extensively studied in
literature, \eg for heavy-duty freight
transport~\cite{JohProc,AlamJoh2015}. It has been shown that
platooning leads to remarkable improvements in terms of fuel consumption.  

\subsection{Model}
We consider a chain of $N$ vehicles (see \fig \ref{fig:PlatooningSchematic}), which are modeled as unit point masses (\cf
\cite{LevAth,MuTr18}).
The state of each vehicle is its absolute position $s_i$ and velocity $v_i$, and its acceleration $u_i$ is the control input.
The control objectives are to maintain a desired distance between the vehicles and track a desired velocity for the platoon.
For this study, we assume that every vehicle measures its absolute position. 
%Therefore the state of the continuous-time model
%can be written as
%\begin{align}
%\label{eqn:state_veh_loc}
%x_i(t) &= \begin{pmatrix}
%s_i(t)\\
%v_i(t)
%\end{pmatrix} 
%\end{align} 
%with $i = 1,2,\ldots,M-1$
%and $t\in [0,\infty)$. The variables $s_i$ and $v_i$ denote the position and the
%velocity of vehicle $i$ respectively.
%The input $u_i$ to each vehicle is the acceleration.

\begin{figure}
\centering
\includegraphics{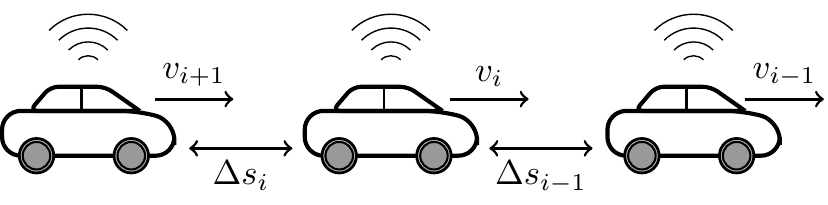}
\caption{Schematic of vehicle platooning.}
\label{fig:PlatooningSchematic}
\end{figure}

The architecture of the vehicle platoon is as in \fig \ref{fig:iotControlSchematic}.  
%Every vehicle is supposed to have a constant distance to its preceding vehicle,
%but only measures its own absolute position. 
To control the inter-vehicle distances, 
%while each vehicle only knows its own position, 
communication between the vehicles is required.  We thus implement the IoT control 
architecture given by \fig \ref{fig:agentSchematic} with PT and ST to save communication.
%the state information of the preceding vehicle has to be known. Thus, a
%communication network is necessary. 
We assume \SI{100}{\milli\second} as the sample time for the inter-vehicle
communication. 
Here, we consider the case where each vehicle transmits its local state information
to all other vehicles.  Alternative architectures, where communication is only
possible with a subset of vehicles, are also conceivable in the considered
scenario (see \cite{AlamJoh2015}), and the PT and ST can be used for only the
required communication links appropriately.
%Every vehicle transmits
%its state information to all others. Similarly to the examples shown before,
%communication shall be saved by predicting the states of the other vehicles
%based on the model equations and only communicating if necessary.

For our chosen setup, where each vehicle is only able to measure its own absolute position, it is obvious that communication between vehicles is necessary to control the inter-vehicle distance.
However, even if local sensor measurements are available, \eg if every vehicle can measure the distance to the preceding vehicle via a radar sensor, communication is required to guarantee \emph{string stability}.
String stability indicates whether oscillations are amplified upstream the traffic flow.
In~\cite{naus2010string}, it has been proven that if only local sensor measurements are used, string stability can only be guaranteed for velocity dependent spacing policies, \ie the faster the cars drive the larger distances are required, and thus, the less fuel can be saved.
Therefore, even in the presence of local measurements, communication between vehicles is crucial for fuel saving.
In such a case, where additional local sensor measurements are available, predictive and self triggering can similarly be used, as also stated in Remark~\ref{rem:localSensors}.

To address the control objectives, we design an LQR for the linear state-space model that includes
the vehicle velocities and their relative distances, \ie $x_{i}(t) = [v_i(t), \, s_i(t) - s_{i-1}(t)]^\transp$.
%\begin{align}
%x_{i}(t) = [v_i(t), \, s_i(t) - s_{i-1}(t)]^\transp
%\end{align}
The complete state $\tilde{x}$ is given by $x_{1}, x_{2}, \dots, x_N$ except for no relative position for the last vehicle $i=N$ (\cf \fig \ref{fig:PlatooningSchematic}).
For this system, an LQR is designed with $Q=I$ and $R= 1000 I$. The
even-numbered diagonal entries of the $Q$ matrix specify the
inter-vehicle distance tracking, while the odd ones weight the desired velocity.
To achieve tracking of desired velocity and inter-vehicle distance, the desired
state $\tilde{x}_\mathrm{des}$ is introduced, and the LQR law
$\tilde{u}_k = \tilde{F} (\tilde{x}_k - \tilde{x}_{\mathrm{des},k})$ implemented.

%The state of the global system used to derive the controller gains is
%different from the one in equation~\eqref{eqn:state_veh_loc}. The absolute
%positions are not controlled, but the relative distance between the vehicles.
%Therefore the state can be written as (\cf Fig.~\ref{fig:PlatooningSchematic})
%\begin{align}
%x_{i,\mathrm{ctrl}}(t) = \begin{pmatrix}figure
%v_i(t)\\s_i(t) - s_{i-1}(t)
%\end{pmatrix}.
%\end{align}
%For this system an LQR is designed, using for both the $Q$ and the $R$ matrix
%identity matrices, while the $R$ matrix is multiplied by \num{1000}. The
%odd-numbered diagonal entries of the $Q$ matrix define, how fast the velocity
%tends to zero, the even ones do the same for the inter-vehicle distance. As
%neither the velocity nor the inter-vehicle distance should go to zero, a desired
%state $x_\mathrm{des}$ is introduced. The input is then calculated via
%$u = K\left(x_\mathrm{ctrl} - x_\mathrm{des}\right)$.

We emphasize that the feedback gain matrix $\tilde{F}$ is dense; that is, information about all states 
in the platoon are used to compute the
optimal control input. Such controller can only be implemented in a distributed way, if complete state
information is available on each agent via the architecture presented in \sect \ref{sec:commNetwork}
 with all-to-all communication.

In the simulations\footnote{The Python source code for the simulations is
available under \url{https://github.com/baumanndominik/predictive_and_self_triggering}.} below, position measurements are
corrupted by independent noise, uniformly distributed in $[\SI{-0,1}{\meter},\SI{0,1}{\meter}]$.
Likewise, the inputs are corrupted by uniform noise in
$[\SI{-0,1}{\meter\per\second\squared},\SI{0,1}{\meter\per\second\squared}]$.
Additionally, we assume \SI{10}{\percent} Bernoulli packet drops.

\subsection{Platooning on changing surfaces}
We investigate the performance versus communication trade-off achieved with PT and ST for 
platooning of \num{10} vehicles. Here, we are interested in the closed-loop performance that 
is achieved with the proposed architecture; hence, instead of the estimation error, we use the sum of the absolute value of the error between $\tilde{x}$ and $\tilde{x}_\mathrm{des}$, 
normalized by the state dimension and number of time steps,
%divided by the state dimension and averaged over the time steps, 
as performance metric $\tilde{J}$.\footnote{LQR cost as one alternative performance metric leads to similar insights, but may have higher variance.}
The platoon drives for \SI{25}{\second}, while keeping desired inter-vehicle
distances of \SI{10}{\meter} and velocity of \SI{22,2}{\meter\per\second}.  After \SI{200}{\meter}, the dynamics change due to different road conditions
(\eg continue driving on a wet road after leaving a tunnel), which is 
modeled by altering the vehicle dynamics accordingly (vehicles moving
\SI{50}{\percent} faster, and the effect of braking/accelerating is reduced by
\SI{50}{\percent}).  Fig. \ref{fig:commVSerr_20veh_vel80} shows the results
from \num{100} Monte Carlo simulations.

%For \num{10} vehicles platooning with an inter-vehicle distance of
%\SI{10}{\meter} and a desired velocity of \SI{22,2}{\meter\per\second}, the
%tradeoff between the optimisation cost of PT and
%ST and normalized communication is shown in
%Fig.~\ref{fig:commVSerr_20veh_vel80}. For this example it was assumed, that
%after \SI{200}{\meter} the dynamics change. This might be the case, if the
%vehicles leave a tunnel and continue driving on a wet road. The friction of the
%wet road is reduced compared to the road in the tunnel, so the vehicles move
%faster and the effect of braking and accelerating is smaller. Therefore it is
%assumed, that the $A$ matrix of system~\ref{eqn:state_veh_loc} is multiplied by
%\num{1,5} whereas the $B$ matrix is multiplied by \num{0,5}.

Both triggers achieve significant communication savings at only a 
mild decrease of control performance.
Similar to studies in previous sections, the PT performs better than the
ST for low communication rates, because it can react to changing conditions.
For high communication rates, PT and ST are identical. If the prediction horizon is extended,
the performance of the PT gets closer to that of the ST, as can be obtained from the blue curve in Fig.~\ref{fig:commVSerr_20veh_vel80}.

%In Fig.~\ref{fig:commVSerr_20veh_vel80} every marker represents the mean of
%\num{1000} Monte Carlo simulations.

\begin{figure}
\centering
\includegraphics{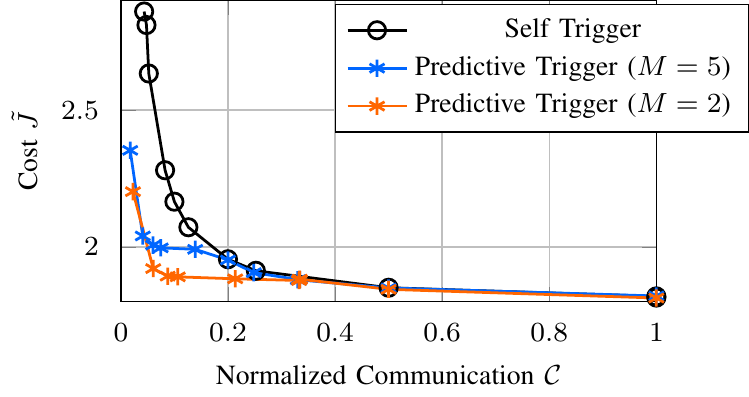}
\caption{Trade-off between normalized communication and control cost
for a \num{10} vehicles platoon.
%ing with a desired velocity of \SI{22,2}{\meter\per\second}.
Every marker represents the mean of \num{100} Monte Carlo simulations. The
variance is negligible and hence omitted. The plot shows the ST (black) as well
as two curves for the PT, one with a prediction horizon of \num{2} (orange) and one
with a prediction horizon of \num{5} (blue).}
\label{fig:commVSerr_20veh_vel80}
\end{figure}

\subsection{Braking}
If vehicles drive in close proximity, 
%If communication is limited in an example like this,
the ability to react to sudden changes, such as a braking maneuver of the preceding car, is critical. This is investigated here for three vehicles (simulation with more vehicles leads to the same insight).

Figure \ref{fig:80to0pred10} shows simulation results, where all cars start with
a velocity of \SI{22,2}{\meter\per\second}, but after \SI{10}{\second}, the first car brakes. 
%This might happen because the first vehicle detects the start of a
%traffic jam.
The results in Fig.~\ref{fig:80to0pred10} (left) show that even with very little
communication, the PT is able to deal with this situation.  The PT detects the need for more communication and is able to control inter-vehicle distances within safety bounds.
%The PT currently monitors the actual state and compares it to
%the predicted state. This error increases as the first vehicle starts to brake and
%such the communication rate increases as well. 
As previously pointed out, the ST (\fig \ref{fig:80to0pred10} right) cannot
react online, which causes a crash in this example ($\Delta s_1 = 0$).
%does not
%take this information into account, thus it does not react on the increasing
%error between the actual state and the predicted state and the first and second
%vehicle crash.

\begin{figure}
\centering
%\begin{minipage}{0.175\textwidth}
%\input{Tikz/80to0pred10.tex}
\includegraphics{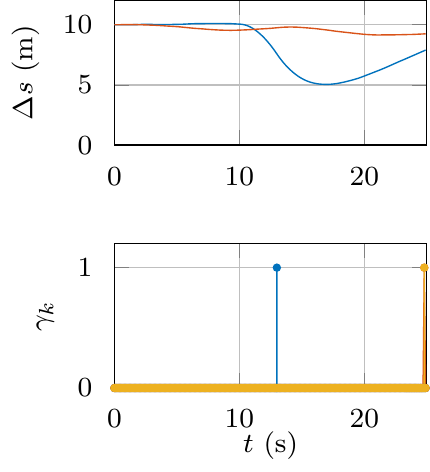}
%\end{minipage}
\hspace{0.1cm}
%\begin{minipage}{0.175\textwidth}
%\input{Tikz/80to0self04.tex}
\includegraphics{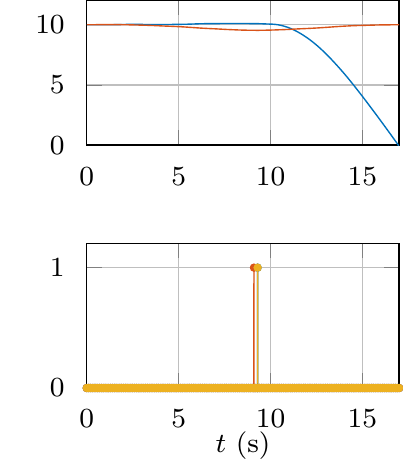}
%\end{minipage}
\caption{Three vehicles platooning with a constant velocity of
\SI{22,2}{\meter\per\second}. After \SI{10}{\second} the first car starts
braking. The top plot shows the distances
$\Delta s_1$ (blue) and $\Delta s_2$ (red); the bottom plot shows the
communication instants (vehicle 1 in blue, vehicle 2 in red, and vehicle 3 in
yellow).
The left plots show the behavior for the PT (with communication cost $C_k = C =
\num{10}$), the right plots for the ST (with communication cost $C_k = C = \num{0.7}$).
}
\label{fig:80to0pred10}
\end{figure}

% \begin{figure}
% \centering
% \input{Tikz/80to0self04.tex}
% \caption{\num{3} vehicles starting with a constant velocity of
% \SI{22,2}{\meter\per\second} and after \SI{10}{\second} braking down to
% \SI{0}{\meter\per\second}. The top plot shows the distances $\Delta s_1$ (blue)
% and $\Delta s_2$ (read), the bottom plot shows when the vehicles are communicating their state
% to the others. As the ST is used here and all vehicles have the same
% parameters, they always communicate at the same time.}
% \label{fig:80to0self04}
% \end{figure}

%For the sake of clarity of the figures these simulations were carried out with
%only \num{3} vehicles, but simulations with \num{10} vehicles lead to the same
%result.

%For an arbitrary large number of vehicles it is not convenient to assume, that
%every vehicle communicates its state information to all other vehicles. For the
%setting chosen here this is not necessary, it is sufficient if every vehicle
%sends its state information to a subset of surrounding
%vehicles~\cite{AlamJoh2015}. The only disadvantage of this is, that if for instance the first vehicle changes the
%desired velocity, this information needs to propagate through the platoon and is not available for
%all other vehicles at the next communication instant.

%\input{content/ConnectionLWB.tex}  % removed, discussed in conclusion + letter

\section{Conclusion}
\label{sec:conclusion}
In IoT control, feedback loops are closed between multiple things over a
general-purpose network.  Since the network is shared by many entities, communication is a limited resource that must be taken into account for optimal system-level operation when making control decisions.
% when making control decision.  
%For optimal system-level operation,   
This work sets a foundation for such 
%integrating communication and control and thus 
resource-aware IoT control. 
Distributed event-based state estimation (DEBSE) provides a powerful architecture for sharing information between multiple things and their cooperative control.  The developed self trigger and predictive trigger allow one to anticipate future communication needs, which is fundamental for efficiently (re-)allocating network resources.  

In order to leverage the potential of this work and realize actual resource savings on concrete IoT systems, the integration of ST and PT herein with a suitable communication system is essential.  While DEBSE has successfully been implemented on wired CAN bus networks in prior works \cite{TrDAn11,Tr17}, we target the integration with modern wireless network protocols such as the \emph{Low-power Wireless Bus} (LWB) \cite{FerZimLWB} in ongoing work.  LWB essentially abstracts a multi-hop wireless network as a common bus enabling fast \cite{zimmerling17} and reliable \cite{ferrari11} many-to-all communication.  Hence, it is ideally suited for scenarios such as in Figures \ref{fig:iotControlSchematic} and \ref{fig:agentSchematic}, where multiple things require information about each other for coordination.  In particular, all-to-all communication allows for the effective realization of the predictors \eqref{eq:remoteEst} on any agent that needs the corresponding state information. LWB typically runs a network manager on one of its nodes, 
%which operates as global network manager.  This manager 
which can use the communication requirements signaled by ST and PT to schedule next communication rounds.  The concrete development and integration of such schemes is subject of ongoing research.
%
% Previous (much shorter):
%The actual integration of the distributed and event-based control system with network protocols such as \cite{FerZimLWB} is ongoing research.  
While the focus of this article is on saving communication bandwidth, the proposed triggers can also be instrumental for saving other resources in IoT (\eg computation or energy).

The predictive and self triggers are suitable for different application scenarios.  The simulation and experimental studies herein clearly highlight the advantage of the predictive trigger: by continuously monitoring the triggering condition, it can react to unforeseeable events such as disturbances.  The self trigger, on the other hand, is an offline trigger, which allows for setting devices to sleep.
% for saving energy.  
In contrast to commonly used event triggers, both proposed triggers can \emph{predict} resource needs rather than making instantaneous decisions.
Predictive triggering is a novel concept in-between the previously proposed concepts of self triggering and event triggering.
%, as is shown in the analysis and simulation results herein.

Concrete instances of the predictive and self trigger were derived herein for estimation of linear Gaussian systems.  
While the general idea of predicting triggers also extends to nonlinear estimation, properly formalizing this and deriving triggering laws for nonlinear problems is an interesting task for future work.  
%The ideas, however, directly extend also to nonlinear estimation problems.  Deriving triggering laws for such problems is an interesting task for future work.  
Likewise, considering alternative optimization problems for different error choices in \eqref{eq:Ek_squares}, as well as dynamic programming formulations in place of the one-step optimization in \eqref{eq:optProblET_ideal}, may lead to interesting insights and alternative triggers.
While the predictive and self triggers herein were shown to stabilize the inverted pendulum in the reported experiments, formally analyzing stability of the closed-loop system (\eg along the lines outlined in \sect \ref{sec:closedLoopAnalysis})
%(especially with respect to the length of the prediction horizon) 
is another relevant open research question.

%%%%%%%%%%%%%%%%%%%%%%%%%%%%%%%%%%%%%%%%%%%%%%%%%%%%%%%%%%%%%%%%%%%%%%%%%%%%%%%
\appendices

\section{Proof of Lemma \ref{lem:PDF_eII}}
\label{app:proofeII}
%%%%%%%%%%%%%%%%%%%%%%%%%%%%%%%%%%%%%%%%%%%%%%%%%%%%%%%%%%%%%%%%%%%%%%%%%%%%%%%%%%
%We first prove \Lem \ref{lem:PDF_eII}, which will be used in the proof of \Lem \ref{lem:PDF_eI}.

Because $\xP_k = \xKF_k$ for $\gamma_k=1$ from \eqref{eq:remoteEst},
%(\ie $\xII_k$ is reset to the estimate by the KF with full communication \eqref{eq:KF1}--\eqref{eq:KF5}), 
the remote error $e_k$ is identical to the KF error $\eKF_k= x_k - \xKF_k$.
% of the standard KF \eqref{eq:KF1}--\eqref{eq:KF5}.  
From KF theory \cite[p.~41]{AnMo05}, it is known that the conditional and unconditional error distributions are identical, namely
\begin{equation}
f(\eKF_k) = f(\eKF_k | \Yall_k, \Uall_k) = \Nc(\eKF_k; 0, \PKF_k) .
\label{eq:lem2_eKF}
\end{equation}
That is, the error distribution is independent of any measurement data.  
% Short version:
%Therefore, also $f(\eKF_{k+M} | \Yall_k) = f(\eKF_{k+M})$, from which the claim follows.
%
% CUT: (re-include if possible)
%
Therefore, we also have $f(e_{k+M} | \Yall_k, \Uall_k) = f(\eKF_{k+M} | \Yall_k, \Uall_k) = f(\eKF_{k+M})$ (see \cite[Proof of Lem.~2]{Tr16} for a formal argument), from which the claim follows with \eqref{eq:lem2_eKF}.

\section{Proof of Lemma \ref{lem:PDF_eI}}
\label{app:proofeI}
%%%%%%%%%%%%%%%%%%%%%%%%%%%%%%%%%%%%%%%%%%%%%%%%%%%%%%%%%%%%%%%%%%%%%%%%%%%%%%%%%%
%
%%%%%%%%%%%%%%%%%%%%%%%%%%%%%%%%%%%%%%%%
We first establish, for any $M \geq 0$,
\begin{align}
\xKF_{k+M} &= \bar{A}^M \xKF_k 
+ \sum_{m=1}^{M} \bar{A}^{M-m}  B \xi_{k+m-1} \nonumber \\
&\phantom{=} + \sum_{m=1}^{M} \bar{A}^{M-m}  L_{k+m} z_{k+m} \label{eq:induction1} \\
\xKF_{k+M|k} &= \bar{A}^M \xKF_k + \sum_{m=1}^{M} \bar{A}^{M-m}  B \xi_{k+m-1} \nonumber \\
&\phantom{=}+ \sum_{m=1}^{M-1} G_{M-m-1} L_{k+m} z_{k+m} \label{eq:induction2}
\end{align}
with $z_k := y_k - H \xKF_{k|k-1}$ the KF innovation, $L_k$ the KF gain, and $G_m$ as in \eqref{eq:G_def}, through proof by induction.  For $M=0$, \eqref{eq:induction1} and \eqref{eq:induction2} hold trivially with $\xKF_k = \xKF_k$ and $\xKF_{k|k} = \xKF_k$, respectively.
% (sums understood to vanish if $m$ not increasing).  
Induction assumption (IA): assume \eqref{eq:induction1} and \eqref{eq:induction2} hold for $M$. Show they are then also true for $M + 1$.  We have from the KF iterations: 
\begin{align*}
\xKF_{k+M+1} &= A\xKF_{k+M} + B u_{k+M} + L_{k+M+1} z_{k+M+1} \\
&= \bar{A} \xKF_{k+M} + B \xi_{k+M} +  L_{k+M+1} z_{k+M+1} \;\; \text{(by \eqref{eq:controlLaw})} \\
&= \bar{A}^{M+1} \xKF_k + \sum_{m=1}^{M+1} \bar{A}^{M+1-m}  B \xi_{k+m-1} \nonumber \\
&\phantom{=} +\sum_{m=1}^{M+1} \bar{A}^{M+1-m}  L_{k+m} z_{k+m} \quad \text{(from IA \eqref{eq:induction1})}
\end{align*}
and
\begin{align*}
&\xKF_{k+M+1|k} = A \xKF_{k+M|k} + B u_{k+M} \\
&\phantom{=}= A \xKF_{k+M|k} + BF \xKF_{k+M} + B \xi_{k+M} \\
&\phantom{=}= (A+BF) \Big(\bar{A}^M \xKF_k 
+ \sum_{m=1}^{M} \bar{A}^{M-m}  B \xi_{k+m-1} \Big) + B \xi_{k+M} \\
&\phantom{=}\phantom{=} 
    + A \Big( \sum_{m=1}^{M-1} G_{M-m-1} L_{k+m} z_{k+m} \Big) \\
&\phantom{=}\phantom{=}    
    + BF \Big( \sum_{m=1}^{M} \bar{A}^{M-m}  L_{k+m} z_{k+m} \Big) \quad \text{(from IA \eqref{eq:induction1}, \eqref{eq:induction2})} \\
&\phantom{=}= \bar{A}^{M+1} \xKF_k + \sum_{m=1}^{M+1} \bar{A}^{M+1-m}  B \xi_{k+m-1}  \\
&\phantom{=}\phantom{=}  
    + \sum_{m=1}^{M} G_{M-m} L_{k+m} z_{k+m} \quad \text{(by def.\ of $G_m$)}.
\end{align*}
Hence, \eqref{eq:induction1} and \eqref{eq:induction2} are true for $M+1$, which completes the induction.

Next, we analyze the error $e_{k+M}$ for the case $\gamma_{k+M}=0$ (no communication).  To ease the presentation, we introduce the auxiliary variable $\eIaux_k := e_{k}\notrigvar{k}$.

{\it Case (i):}
%Now, we turn to the proof of \emph{Case (i)}.  
First, we note that $k > \lastel_{k-1}$ implies $\lastel_{k-1} = \last_{k}$ because $\lastel_{k-1}$, the last nonzero element of $\Gamall_{k+M-1}$, is in the past, 
%(less then $k$) 
and the identity thus follows from the definition of $\last_{k}$.  It follows further that all triggering decisions following $\gamma_\last = 1$ are 0 
%for all steps after $\last$ 
until $\gamma_{k+M-1}$ (otherwise $\gamma_\last$ would not be the last element in $\Gamall_{k+M-1}$).  Hence, we have the communication pattern $\gamma_\last = 1$ and $\gamma_{\last+1} = \gamma_{\last+2} = \dots = \gamma_{k+M-1}=0$.

Let $\tilde\Delta:= M+k-\ell$.  From 
\begin{equation*}
\eIaux_{k+M} 
%e_{k+M}\notrigvar{k+M}
= x_{k+M} - \bar{A}^{\tilde{\Delta}} \xKF_\ell - \sum_{m=1}^{\tilde{\Delta}} \bar{A}^{\tilde{\Delta}-m}  B \xi_{\ell+m-1}
\end{equation*}
it follows that the conditional distribution \eqref{eq:lem1_eIpdf} is Gaussian.  It thus suffices to consider mean and variance in the following.

For the conditional mean, we have
\begin{align}
&\E[\eIaux_{k+M} | \Yall_k, \Uall_k] \nonumber\\
%&\E[(e_{k+M}\notrigvar{k+M}) | \Yall_k, \Uall_k] \nonumber\\
&= \E[ x_{k+M} | \Yall_k, \Uall_k] - \bar{A}^{\tilde{\Delta}} \xKF_\ell - \sum_{m=1}^{\tilde{\Delta}} \bar{A}^{\tilde{\Delta}-m}  B \xi_{\ell+m-1} , \label{eq:lem1_proof_1}
\end{align}
and
\begin{align}
\E[ x_{k+M} | \Yall_k, \Uall_k]
&= \E\big[ \E[x_{k+M} | \Yall_k, \Uall_{k+M}] \big| \Yall_k, \Uall_k \big] \nonumber \\
&= \E[ \xKF_{k+M|k} | \Yall_k, \Uall_k] \nonumber \\
&= \bar{A}^M \xKF_k + \sum_{m=1}^{M} \bar{A}^{M-m}  B \xi_{k+m-1} \label{eq:lem1_proof_2}
\end{align}
where we used the tower property of conditional expectation, \eqref{eq:PDFstatePredM}, and \eqref{eq:induction2} with the fact that the KF innovation sequence $z_k$ is zero-mean and uncorrelated. Using \eqref{eq:lem1_proof_2} with \eqref{eq:lem1_proof_1}, we obtain
\begin{align}
&\E[\eIaux_{k+M} | \Yall_k, \Uall_k]
= \bar{A}^M (\xKF_k - \bar{A}^{k-\ell}  \xKF_\ell) +\sum_{m=1}^{M} \bar{A}^{M-m} B \xi_{k+m-1} \nonumber \\
&\phantom{=}    - \sum_{m=1}^{k-\ell} \bar{A}^{\tilde{\Delta}-m}  B \xi_{\ell+m-1} - \sum_{m=k-\ell+1}^{M+k-\ell} \bar{A}^{M+k-\ell-m}  B \xi_{\ell+m-1} \label{eq:lem1_proof_3} \\
&= \bar{A}^M \Big(\xKF_k - \bar{A}^{k-\ell}  \xKF_\ell - \sum_{m=1}^{k-\ell} \bar{A}^{k-\ell-m}  B \xi_{\ell+m-1} \Big)
\end{align}
which proves \eqref{eq:lem1_eImean}.  The first and third sum in \eqref{eq:lem1_proof_3} can be seen to be identical by substituting $m$ with $m + k - \ell$.

Employing the tower property for the conditional variance, we get
\begin{align*}
&\Var[\eIaux_{k+M} | \Yall_k, \Uall_k] \nonumber \\
&\qquad= \E\big[\Var[\eIaux_{k+M} | \Yall_k, \Uall_{k+M}] \big| \Yall_k, \Uall_k \big]  \nonumber \\
&\qquad\phantom{=}+ \Var\big[\E[\eIaux_{k+M} | \Yall_k, \Uall_{k+M}] \big| \Yall_k, \Uall_k \big] \nonumber \\
&\qquad= \E[\PKF_{k+M|k} | \Yall_k, \Uall_k] + \Var[\xKF_{k+M|k} | \Yall_k, \Uall_k] \nonumber \\
&\qquad= \PKF_{k+M|k} + \Var[\xKF_{k+M|k} | \Yall_k, \Uall_k].
\end{align*}
Furthermore, $\Var[\xKF_{k+M|k} | \Yall_k, \Uall_k] = \Xi_{k,M}$ follows from \eqref{eq:induction2}, $z_k$ being uncorrelated, and 
\begin{align*}
&\Var[z_{k+m} | \Yall_k, \Uall_k] \nonumber \\
&= \Var[HA \eKF_{k+m-1} + H v_{k+m-1} + w_{k+m} | \Yall_k, \Uall_k] \nonumber \\
&= \tilde{P}_{k+m}
\end{align*}
as defined in \eqref{eq:Ptilde_def}.  This completes the proof for \emph{Case (i)}.

%%%%%%%%%%%%%%%%%%%%%%%%%
{\it Case (ii):}
We use $\lastel = \lastel_{k-1}$ to simplify notation.
By definition of $\lastel$, we have $\lastel \leq M+k-1$, and hence $k \leq \lastel \leq M+k-1$.  That is, a triggering will happen now or 
%in future $k \leq \lastel$, and 
before the end of the horizon $M+k$.  At the triggering instant $\lastel$, we have from \eqref{eq:remoteEst}, $e_{\lastel} = x_{\lastel} - \xKF_{\lastel}$.
% CUT:
%\begin{equation}
%e_{\lastel} = x_{\lastel} - \xKF_{\lastel}.
%\end{equation}
%That is, at the time of communication $\lastel$, the remote estimate is reset to the KF estimate.  
Hence, the distribution of the error at time $\lastel$ is known irrespective of past and future data.  Following the same arguments as in the proof of \Lem \ref{lem:PDF_eII}, we have $f(e_{\lastel} | \Yall_k, \Uall_k) = f(e_{\lastel} | \Yall_\lastel, \Uall_\lastel) = \Nc(e_{\lastel}; \, 0, \PKF_{\lastel})$.

From the definition of $\lastel$, we know that there is no further communication happening until $M+k-1$.  Thus, we can iterate \eqref{eq:remoteEst} with $\gamma=0$.
Using the same reasoning as in \emph{Case (i)}, we have
\begin{equation*}
\eIaux_{k+M} = \eIaux_{\kappa + \Delta} = 
x_{\kappa + \Delta} - \bar{A}^\Delta \xKF_\kappa - \sum_{m=1}^{\Delta} \bar{A}^{\Delta-m}  B \xi_{\kappa+m-1}
\end{equation*}
and thus
\begin{align*}
&\E[\eIaux_{\kappa + \Delta} | \Yall_\kappa, \Uall_\kappa] \nonumber\\
&= \E[ x_{\kappa + \Delta} | \Yall_\kappa, \Uall_\kappa] -  \bar{A}^\Delta \xKF_\kappa - \sum_{m=1}^{\Delta} \bar{A}^{\Delta-m}  B \xi_{\kappa+m-1} \nonumber \\
&=\E[ \xKF_{\kappa + \Delta|\kappa} | \Yall_\kappa, \Uall_\kappa] -  \bar{A}^\Delta \xKF_\kappa - \sum_{m=1}^{\Delta} \bar{A}^{\Delta-m}  B \xi_{\kappa+m-1} 
%\nonumber \\
%&
= 0
\end{align*}
where the last equality follows from \eqref{eq:induction2} and $z_k$ being zero-mean.
Similarly, for the variance, we obtain
\begin{align*}
\Var[\eIaux_{\kappa + \Delta} | \Yall_\kappa, \Uall_\kappa] &=
\E[\PKF_{\kappa + \Delta|\kappa} | \Yall_\kappa, \Uall_\kappa] + \Var[\xKF_{\kappa + \Delta|\kappa} | \Yall_\kappa, \Uall_\kappa] \nonumber \\
&= \PKF_{\kappa+\Delta|\kappa} + \Var[\xKF_{\kappa+\Delta|\kappa} | \Yall_\kappa, \Uall_\kappa] \nonumber \\
&= \PKF_{\kappa+\Delta|\kappa} + \Xi_{\kappa, \Delta}.
\end{align*}

%%%%%%%%%%%%%%%%%%%%%%%%%%%%%%%%%%%%%%%%%%%%%%%%%%%%%%%%%%%%%%%%%%%%%%%%%%%%%%%
\section*{Acknowledgment}
The authors thank
%are grateful to 
their colleagues 
Felix Grimminger and Alonso Marco for their support with the experimental setup
and Friedrich Solowjow for insightful discussions.

%%%%%%%%%%%%%%%%%%%%%%%%%%%%%%%%%%%%%%%%%%%%%%%%%%%%%%%%%%%%%%%%%%%%%%%%%%%%%%%%
\bibliographystyle{IEEEtran}
\bibliography{Database}

%%%%%%%%%%%%%%%%%%%%%%%%%%%%%%%%%%%%%%%%%%%%%%%%%%%%%%%%%%%%%%%%%%%%%%%%%%%%%%%%
\begin{IEEEbiography}[{\includegraphics[width=1in,height=1.25in,clip,keepaspectratio]{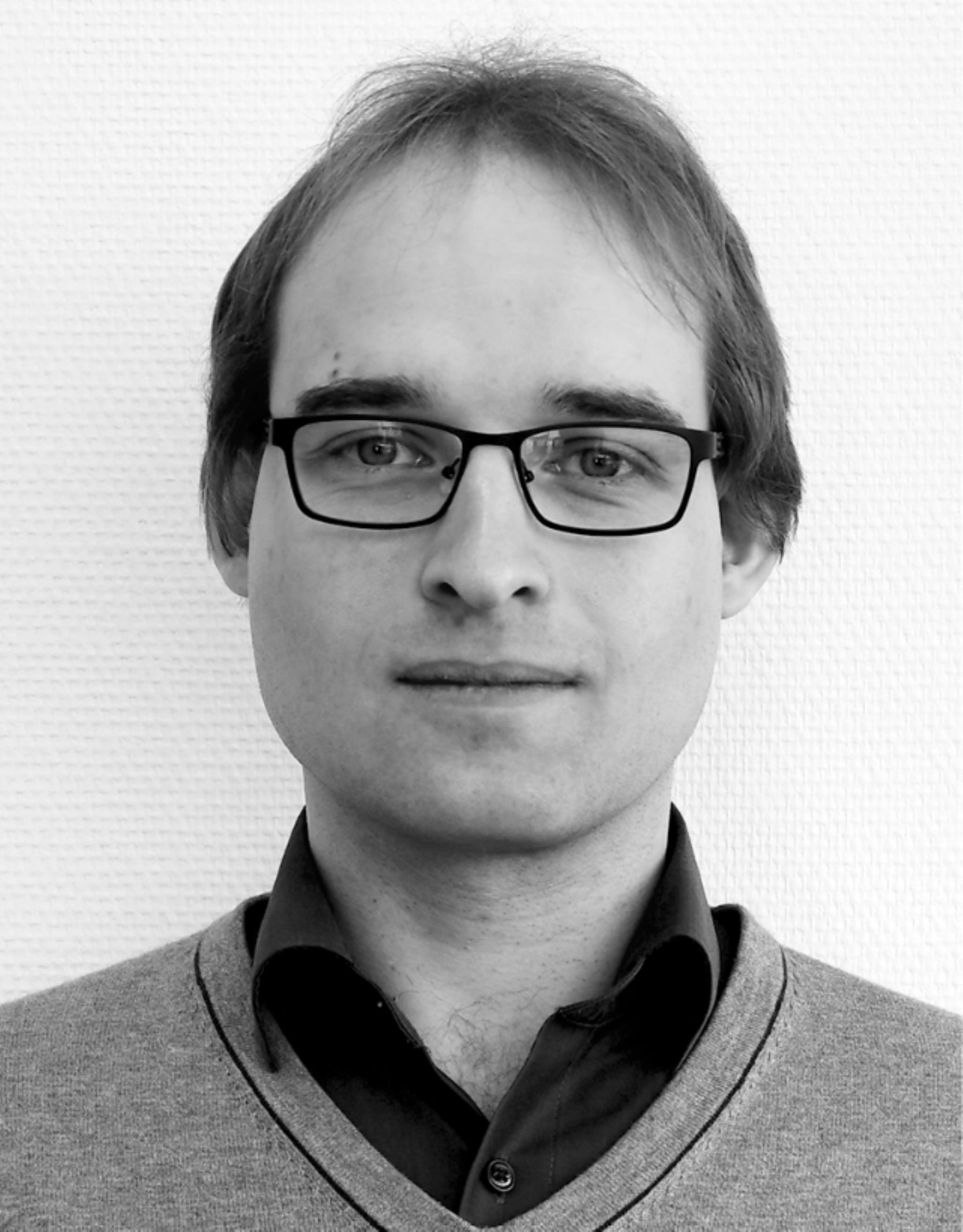}}]{Sebastian Trimpe}
% previous: SebastianTrimpeBW
(M'12) 
received the B.Sc. degree
in general engineering science and the M.Sc.
degree (Dipl.-Ing.) in electrical engineering from
Hamburg University of Technology, Hamburg,
Germany, in 2005 and 2007, respectively, and the
Ph.D. degree (Dr. sc.) in mechanical engineering
from ETH Zurich, Zurich, Switzerland, in 2013.

He is currently a Research Group Leader at the Max Planck Institute for Intelligent Systems, Stuttgart, Germany, where he leads the independent Max Planck Research Group on Intelligent Control Systems.
%Previously, he was Lecturer and Postdoctoral Researcher
%at ETH Zurich. In 2007, he was a research scholar at the University of
%California at Berkeley. 
His main research interests are in systems and control theory, machine learning, networked and autonomous systems.
%His research interests are in the area of control systems
%theory and design with emphasis on autonomous, networked, and learning
%systems.

Dr. Trimpe is a recipient of the General Engineering Award for the best undergraduate
degree (2005), a scholarship from the German National Academic
Foundation (2002 to 2007), the triennial IFAC World Congress Interactive
Paper Prize (2011), and the Klaus Tschira Award for public understanding of science (2014).
\end{IEEEbiography}

% Sebastian (initial submission):
%\begin{IEEEbiography}[{\includegraphics[width=1in,height=1.25in,clip,keepaspectratio]{SebastianTrimpeBW}}]{Sebastian Trimpe}
%(M’12) received the B.Sc. degree
%in general engineering science and the M.Sc.
%degree (Dipl.-Ing.) in electrical engineering from
%Hamburg University of Technology, Hamburg,
%Germany, in 2005 and 2007, respectively, and the
%Ph.D. degree (Dr. sc.) in mechanical engineering
%from ETH Zurich, Zurich, Switzerland, in 2013.
%
%He is currently a Research Group Leader in the Autonomous
%Motion Department at the Max Planck Institute for Intelligent Systems, T\"bingen, Germany, where he leads the Intelligent Control Systems group.
%His research interests are in the area of control systems
%theory and design with emphasis on autonomous, networked, and learning
%systems.
%
%Dr. Trimpe is a recipient of the General Engineering Award for the best undergraduate
%degree (2005), a scholarship from the German National Academic
%Foundation (2002 to 2007), the triennial IFAC World Congress Interactive
%Paper Prize (2011), and the Klaus Tschira Award for public understanding of science (2014).
%\end{IEEEbiography}

\begin{IEEEbiography}[{\includegraphics[width=1in,height=1.25in,clip,keepaspectratio]{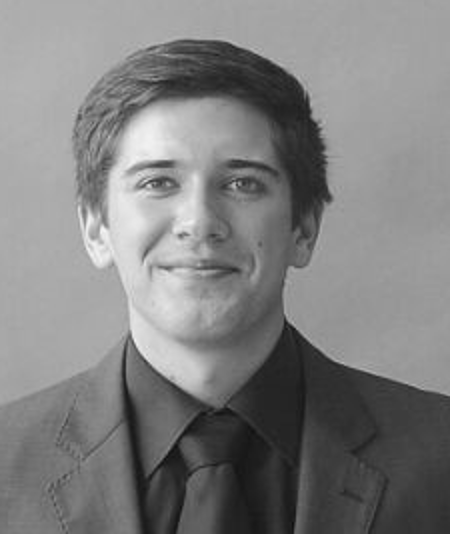}}]{Dominik
Baumann} received the Dipl.-Ing. degree in electrical engineering from TU Dresden,
Germany, in 2016. He is currently a PhD student in the Intelligent Control Systems Group at the Max Planck Institute for Intelligent Systems, T\"ubingen,
Germany.

His research interests include control theory, robotics, distributed and
cooperative control, learning and networked control systems.
\end{IEEEbiography}

% if you will not have a photo at all:
%\begin{IEEEbiographynophoto}{John Doe}
%Biography text here.
%\end{IEEEbiographynophoto} 

\vfill

% that's all folks
\end{document}